\documentclass[
    pdflatex,
    sn-mathphys-num,
    hyperref={
        unicode=true,
        colorlinks=true,
        urlcolor=blue!75!black,
        linkcolor=blue!75!black,
        citecolor=blue!75!black,
        pdfusetitle,
        bookmarks=true,
        bookmarksnumbered=true,
        bookmarksopen=true
    }]{sn-jnl}
\usepackage{hyperref}
\usepackage{amsmath}
\usepackage{graphicx}
\usepackage{enumerate}
\usepackage{bbm}
\usepackage{latexsym}
\usepackage{braket}
\usepackage{multirow}
\usepackage{thm-restate}
\usepackage{amsthm,amssymb,amsthm,mathtools}
\usepackage[capitalise]{cleveref} 
\theoremstyle{plain}
\newtheorem{thm}{Theorem}
\newtheorem{lem}[thm]{Lemma}
\newtheorem{cor}[thm]{Corollary}

\newtheorem{prop}[thm]{Proposition}
\theoremstyle{definition}
\newtheorem{defn}{Definition}

\makeatletter
\if@cref@capitalise
\Crefname{thm}{Theorem}{Theorems}
\Crefname{lem}{Lemma}{Lemmas}
\Crefname{cor}{Corollary}{Corollaries}
\Crefname{conj}{Conjecture}{Conjectures}
\Crefname{defn}{Definition}{Definitions}
\Crefname{prop}{Proposition}{Propositions}
\Crefname{exam}{Example}{Examples}
\Crefname{rem}{Remark}{Remarks}
\Crefname{ques}{Question}{Questions}
\else
\crefname{thm}{theorem}{theorems}
\crefname{lem}{lemma}{lemma}
\crefname{cor}{corollary}{corollaries}
\crefname{conj}{conjecture}{conjectures}
\crefname{defn}{definition}{definitions}
\crefname{prop}{proposition}{propositions}
\crefname{exam}{example}{examples}
\crefname{rem}{remark}{remarks}
\Crefname{ques}{question}{questions}
\fi
\makeatother

\DeclareMathOperator{\Tr}{Tr}

\newcommand{\ot}{\otimes}

\newcommand{\Sym}{\mathrm{Sym}}
\newcommand{\intpart}{\mathcal C}
\newcommand{\Herm}{\mathrm{Herm}}
\newcommand{\SEP}{\mathrm{SEP}}
\newcommand{\SEPAB}{{\SEP(A\colonsep B)}}

\newcommand{\ketbra}[1]{\ket{#1}\hspace{-0.4em}\bra{#1}}
\newcommand{\dsym}{D}
\newcommand{\twirl}{\mathcal T}
\newcommand{\abs}[1]{\left|#1\right|}

\newcommand{\colonsep}{\!:\!}
\renewcommand{\rho}{\varrho}

\begin{document}

\title[]{Complete entanglement detection using polynomial invariants}


\author*[1]{\fnm{Thomas} C. \sur{Fraser}}\email{tcf@math.ku.dk}

\author[1]{\fnm{Vjosa} \sur{Blakaj}}

\author[1]{\fnm{Roberto} \sur{Rubboli}}

\author[2]{\fnm{Felix} \sur{Huber}}

\author[3]{\fnm{Marco} \sur{Fanizza}}

\affil*[1]{\orgdiv{Department of Mathematical Sciences}, \orgaddress{\street{University of Copenhagen}, \city{Copenhagen}, \postcode{2100}, \country{Denmark}}}

\affil[2]{\orgdiv{Faculty of Mathematics, Physics and Informatics}, \orgname{Institute of Informatics}, \orgaddress{University of Gdańsk, Wita Stwosza 57, \city{Gdańsk}, \postcode{80-308}, \country{Poland}}}

\affil[3]{\orgdiv{Inria}, \orgname{T\'el\'ecom Paris -- LTCI}, \orgaddress{Institut Polytechnique de Paris, \city{Palaiseau}, \country{France}}}

\abstract{
Existing methods for deciding whether a bipartite quantum state is separable or entangled typically fall into one of two categories: 
they are either complete but require access to an explicit density matrix followed by numerical optimization, or they can be evaluated directly by measuring the quantum system but are incomplete, in the sense that they cannot detect all forms of entanglement. 
In this work, we overcome both limitations in a unified framework. 
First, we bypass numerical optimization by deriving separability criteria in the form of universal bounds on tensor powers of separable states. 
We prove that these bounds are complete: every entangled state violates them for sufficiently large tensor powers. 
Second, we explicitly construct a corresponding complete family of nonlinear entanglement witnesses, which can detect all forms of entanglement without requiring an explicit density matrix. 
The witnesses we construct are moreover basis-independent, in the sense that they are invariant under conjugation by local unitaries. 
Altogether, our results expand the toolbox for entanglement detection in arbitrary local dimensions in a manifestly invariant way.
}

\keywords{entanglement, separability, non-linear witnesses, local-unitary invariants}

\maketitle

\section{Introduction}

Determining whether a quantum state $\rho_{AB}$ is entangled or separable is a fundamental task in quantum theory.
Despite being an intrinsically challenging task, from both complexity-theoretic \cite{gurvits2004classical} and geometric perspectives \cite{aubrun2015dvoretzky},
today there exist a myriad of techniques for certifying entanglement in bipartite systems.
When an explicit density matrix $\rho_{AB}$ is available, entanglement can be certified, for example, using the PPT criterion \cite{peres1996separability} or $k$-symmetric-extension criteria \cite{doherty2004complete}. They give rise to complete hierarchies of feasibility problems that can, in principle, be tackled using semidefinite programming \cite{doherty2004complete,navascues2009complete,harrow2017improved,pena2025tailored}.
On the other hand, when provided with a physical quantum state shared between two subsystems, as in an experimental setting, approaches based on linear and nonlinear entanglement witnesses are more suitable \cite{guhne2006nonlinear,chruscinski2014entanglement,rico2024entanglement,elben2020mixed,neven2021symmetry,miller2026detecting,gulati2024using}.

Throughout this paper, we denote by $\SEPAB$ the set of separable states on a bipartite Hilbert space $A \otimes B$ of finite dimension $d_{AB} = \dim(A \otimes B)$.
A \textit{(polynomial) entanglement witness} is a polynomial $p(\rho_{AB})$ of degree $\ell \in \mathbb N$ specified by a Hermitian observable $W_{A^{\ell}B^{\ell}}$ on $\ell$-copies of the bipartite quantum system, whose expectation value
\begin{equation}
    p(\rho_{AB}) = \Tr(W_{A^{\ell}B^{\ell}} \rho_{AB}^{\otimes \ell}),
\end{equation}
is a polynomial in $\rho_{AB}$ that is non-negative on all separable states:
\begin{equation}
    \forall \rho_{AB} \in \SEPAB \implies p(\rho_{AB}) \geq 0.
\end{equation}
If $p(\rho_{AB}) < 0$ holds for some bipartite state $\rho_{AB}$, then $\rho_{AB}$ is entangled.
In addition to being naturally more probative than linear entanglement witnesses, a polynomial entanglement witness can have the additional desirable property of being local-unitary invariant, i.e., for all unitaries $U_{A}$ on $A$ and $V_{B}$ on $B$,
\begin{equation}
    p(\rho_{AB}) = p((U_{A} \otimes V_{B}) \rho_{AB} (U_{A}^{\dagger} \otimes V_{B}^{\dagger} )).
\end{equation} 
Since the entanglement of a state $\rho_{AB}$ is unchanged under local reversible operations, the property of being local-unitary invariant is a minimal prerequisite for any sensible measure of entanglement \cite{grassl1998computing, leifer2004measuring,schmid2023understanding}.

A set of entanglement witnesses, $\mathcal P$, is said to be \textit{complete} for local dimensions $(d_{A},d_{B})$ if for every entangled state $\rho_{AB}$, there exists some witness $p \in \mathcal P$, which witnesses the entanglement of $\rho_{AB}$, i.e., $p(\rho_{AB}) < 0$.
For example, for local dimensions $(d_{A}, d_{B})$ where $d_{A}d_{B} \leq 6$, the PPT criterion \cite{peres1996separability} is both a necessary and sufficient criteria for separability \cite{horodecki2001separability}, and has been converted into a finite collection of polynomial entanglement witnesses that is \textit{complete} and experimentally accessible \cite{elben2020mixed,neven2021symmetry,miller2026detecting}.

Finding a complete collection of entanglement witnesses for quantum systems of larger dimensions has remained an open challenge.
In this work, we address this fundamental limitation by developing a complete family of entanglement witnesses for any finite local dimensions $(d_{A}, d_{B})$.  

\subsection{Contributions}
\label{sec:contributions}

This paper provides characterizations of the set of separable states, formulated both as a family of operator bounds on tensor powers and as a family of polynomial entanglement witnesses.
These characterizations are complete, in the sense that every entangled state violates at least one bound in the family.

Our central contribution is to identify a specific probability measure $\mu$ on the set of separable states $\SEPAB$ such that, for each positive integer $n$, the corresponding de Finetti state
\begin{equation}
    \Omega_{A^{n}B^{n}} = \int_{\SEP} \!\!d \mu(\sigma_{AB}) \sigma_{AB}^{\otimes n}
\end{equation}
serves as a universal upper bound on tensor powers of separable states. More precisely, for every
$\rho_{AB} \in \SEPAB$, the operator $\rho_{AB}^{\otimes n}$ is bounded from above by $\Omega_{A^{n}B^{n}}$ up to a prefactor $f_{AB}(n)$, which grows polynomially in $n$ for fixed local dimensions $d_A$ and $d_B$.
Throughout this paper, operator inequalities are understood in the Loewner order, i.e., $X\preceq Y$ means the difference $(Y-X)$ is positive semidefinite.

\begin{restatable}{thm}{forward}
\label{thm:forward}
    If $\rho_{AB} \in\SEPAB$, then for all $n \in \mathbb{N}$,
    \begin{equation}
        \label{eq:forward}
        \rho_{AB}^{\otimes n} \preceq f_{AB}(n)\Omega_{A^{n} B^{n}},
    \end{equation}
    where $f_{AB}(n)$ is a function of $n \in \mathbb{N}$ satisfying the bound $f_{AB}(n) \leq (n+1)^{d_{AB}^3 - 1}$.
\end{restatable}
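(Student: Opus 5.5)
The plan is to prove \eqref{eq:forward} by reducing it to two elementary ``de Finetti-type'' domination inequalities: one for classical probability vectors and one for pure states. I will take $\mu$ to be the following push-forward measure; I expect it to match the paper's choice, and otherwise it is the natural candidate the bound suggests. Set $m = d_{AB}^2$. Draw a probability vector $q=(q_1,\dots,q_m)$ uniformly (Lebesgue/Dirichlet$(1,\dots,1)$) from the simplex $\Delta_{m-1}$. Independently draw pure states $\alpha_1,\dots,\alpha_m$ Haar-distributed on $A$ and $\beta_1,\dots,\beta_m$ Haar-distributed on $B$. Let $\mu$ be the law of $\sigma_{q,\alpha,\beta} = \sum_{i=1}^m q_i\, \alpha_i\otimes\beta_i$, which is separable.

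\textbf{Step 1 (Carathéodory).} Any $\rho_{AB}\in\SEPAB$ can be written as $\rho_{AB}=\sum_{i=1}^m p_i\, a_i\otimes b_i$ with pure $a_i,b_i$ and $m\le d_{AB}^2$ terms. If fewer terms are needed, pad with zero weights. Introduce the classical register $C\cong\mathbb C^m$ and the cq-operator
\[
X=\sum_{i=1}^m p_i\,\ketbra{i}_C\otimes a_i\otimes b_i ,
\]
so that $\rho_{AB}=\Tr_C X$. Since partial trace is positive and linear, it suffices to show $X^{\otimes n}\preceq f(n)\int d\mu_{\mathrm{ext}}\, Y^{\otimes n}$. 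Here $Y=\sum_i q_i\ketbra{i}\otimes\alpha_i\otimes\beta_i$ and $\mu_{\mathrm{ext}}$ is the measure on $(q,\alpha,\beta)$ above; tracing out $C^n$ then gives exactly $f(n)\Omega_{A^nB^n}$.

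\textbf{Step 2 (expand in strings).} Both sides are block diagonal in the basis $\ket{x}$, $x\in[m]^n$, of $C^{\otimes n}$. In the block labelled $x$ with type $(k_1,\dots,k_m)$, the left side is $p^x\bigotimes_j (a_{x_j}\otimes b_{x_j})$. After regrouping tensor factors by label, this is $p^x\bigotimes_i a_i^{\otimes k_i}\otimes b_i^{\otimes k_i}$. The right side is
\[
f(n)\Big(\int dq\, q^x\Big)\bigotimes_i\Big(\int d\alpha\,\alpha^{\otimes k_i}\Big)\otimes\Big(\int d\beta\,\beta^{\otimes k_i}\Big),
\]
using independence of the variables. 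Two ingredients now control each block:
\begin{itemize}
\item[(i)] For pure $\psi$ on $\mathbb C^d$, $\int d\phi\,\phi^{\otimes k}=P_{\Sym^k}/\dim\Sym^k(\mathbb C^d)$, so $\psi^{\otimes k}\preceq \binom{k+d-1}{d-1}\int d\phi\,\phi^{\otimes k}$, with $\binom{k+d-1}{d-1}\le (n+1)^{d-1}$. A tensor product of such PSD inequalities is preserved.
\item[(ii)] $\int_{\Delta}dq\,q^x=\frac{(m-1)!\prod_i k_i!}{(n+m-1)!}$, while $p^x\le\prod_i (k_i/n)^{k_i}\le \frac{\prod_i k_i!}{n!}$. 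Hence $p^x\le\binom{n+m-1}{m-1}\int dq\,q^x\le (n+1)^{m-1}\int dq\, q^x$.
\end{itemize}

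\textbf{Step 3 (collect exponents).} Multiplying the bounds over all labels $i$ gives a prefactor of at most
\[
(n+1)^{(m-1)+m(d_A-1)+m(d_B-1)}=(n+1)^{m(d_A+d_B-1)-1}.
\]
Since $d_A+d_B-1\le d_Ad_B$ and $m=d_{AB}^2$, this is at most $(n+1)^{d_{AB}^3-1}$, as claimed. The prefactor is uniform over blocks, so we may take $f_{AB}(n)$ to be the maximum, and the block inequalities combine into the operator inequality. I expect the main obstacle to be the bookkeeping in Step 2. The regrouping of tensor factors is a permutation of the copies that depends on $x$, so it must be applied as a unitary conjugation simultaneously to both sides in each block; this is legitimate because both sides are product operators over the same positions. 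One must also check that the classical bound (ii) is genuinely uniform in $x$, which the method-of-types estimate $\prod_i(k_i/n)^{k_i}\le \prod_i k_i!/n!$ provides.
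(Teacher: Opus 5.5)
Your proof is correct. Your measure is the paper's $\mu$: for Haar-random $\ket{\gamma}$, the vector $(\abs{\braket{i|\gamma}}^2)_i$ is uniform on the simplex. Your two key estimates are also the paper's. The first is $\psi^{\otimes k}\preceq \Pi_{\Sym^{k}(H)}=\dsym_H(k)\int d\phi\,\phi^{\otimes k}$. The second, $p^x\le \binom{n+m-1}{m-1}\int dq\, q^x$, is exactly the paper's bound $M_p(\kappa)\le 1$ combined with its Dirichlet moment $\int d\gamma\, M_p(\kappa)=1/\dsym_R(n)$.

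The difference lies in the bookkeeping. The paper expands $\rho_{AB}^{\otimes n}$ by types $\kappa$ using the $S_n$-twirled multinomial expansion (\cref{prop:multinomial_expansion}). It first proves the state-independent bound $\rho_{AB}^{\otimes n}\preceq\Lambda_{A^nB^n}$ (\cref{thm:stronger_forward}). It then compares $\Lambda_{A^nB^n}$ with the closed form of $\Omega_{A^nB^n}$ term by term, controlling $\dsym_H(\kappa)$ through log-concavity (\cref{lem:blockwise_bound}).

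You instead adjoin a classical label register, so that both sides become block diagonal over strings $x\in[m]^n$. This removes any need for twirling and lets you compare with $\Omega_{A^nB^n}$ directly, blockwise, using a cruder per-label dimension bound.

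Your route is more self-contained. It also lands on the exponent $d_{AB}^2(d_A+d_B-1)-1$. That is the true polynomial degree of the paper's explicit $f_{AB}(n)$, which the paper only loosens to $d_{AB}^3-1$ via $\dsym_A(n)\dsym_B(n)\le\dsym_{AB}(n)$.

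The paper's route, in exchange, produces the intermediate operator $\Lambda_{A^nB^n}$ and a specific $f_{AB}(n)$. It reuses both later: for the witnesses $p_{n,m}$ and in \cref{thm:completeness}.
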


The immediate application of \cref{thm:forward} is to certify the entanglement of a state $\rho_{AB}$ --
if a state $\rho_{AB}$ happens to violate \cref{eq:forward} for some value of $n$, then \cref{thm:forward} implies that $\rho_{AB}$ is entangled.
For this purpose, we prove a slightly stronger statement than \cref{thm:forward}. Namely, we identify an intermediate operator $\Lambda_{A^{n}B^{n}}$ such that, for all $\rho_{AB} \in\SEPAB$, $\rho_{AB}^{\otimes n} \preceq \Lambda_{A^{n}B^{n}} \preceq f_{AB}(n)\Omega_{A^{n} B^{n}}$. We refer to \cref{thm:stronger_forward} for the precise statement. 

Another application of \cref{thm:forward} is a quantitative bound on the probabilities of \textit{arbitrary} measurements performed on $n$-copies of separable states. 
In particular, \cref{thm:forward} implies the following corollary, which states that the maximum likelihood of any measurement outcome on $n$-copies of a separable state cannot deviate too much, from the average likelihood taken with respect to the measure $\mu$.
\begin{cor}
    \label{cor:avg_mean_likelihood}
    Let $0 \preceq M_{A^{n}B^{n}} \preceq I_{AB}^{\otimes n}$ be a measurement effect.
    Then,
    \begin{equation}
        \Tr[M_{A^{n}B^{n}} \Omega_{A^{n}B^{n}}] \leq \max_{\sigma_{AB} \in \SEP} \Tr[M_{A^{n}B^{n}} \sigma_{AB}^{\otimes n}] \leq f_{AB}(n) \Tr[M_{A^{n}B^{n}} \Omega_{A^{n}B^{n}}].
    \end{equation}
\end{cor}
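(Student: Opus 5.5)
The corollary follows directly from \cref{thm:forward} together with the fact that $\mu$ is a probability measure supported on $\SEPAB$. I would prove the two inequalities separately.

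\emph{Left inequality.} Since $\Omega_{A^{n}B^{n}}=\int_{\SEP} d\mu(\sigma_{AB})\,\sigma_{AB}^{\otimes n}$, linearity of the trace (and of integration against the bounded operator $M_{A^{n}B^{n}}$) gives
\begin{equation*}
\Tr[M_{A^{n}B^{n}}\Omega_{A^{n}B^{n}}]=\int_{\SEP} d\mu(\sigma_{AB})\,\Tr[M_{A^{n}B^{n}}\sigma_{AB}^{\otimes n}] .
\end{equation*}
The integrand is a continuous function of $\sigma_{AB}$ on the compact set $\SEPAB$, so it attains its maximum there. Because $\mu$ is a probability measure on $\SEPAB$, the average is at most this maximum.

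\emph{Right inequality.} Choose $\sigma^\star_{AB}\in\SEPAB$ attaining the maximum; it exists by compactness and continuity. \Cref{thm:forward} gives $(\sigma^\star_{AB})^{\otimes n}\preceq f_{AB}(n)\,\Omega_{A^{n}B^{n}}$. Since $M_{A^{n}B^{n}}\succeq 0$, we have $\Tr[MX]\ge 0$ for every $X\succeq 0$. Applying this to $X=f_{AB}(n)\Omega_{A^{n}B^{n}}-(\sigma^\star_{AB})^{\otimes n}$ yields
\begin{equation*}
\Tr[M_{A^{n}B^{n}}(\sigma^\star_{AB})^{\otimes n}]\le f_{AB}(n)\,\Tr[M_{A^{n}B^{n}}\Omega_{A^{n}B^{n}}],
\end{equation*}
which is the claimed bound. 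Only $M\succeq 0$ is used here; the condition $M\preceq I$ is not needed.

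Neither step is a real obstacle. The only points requiring care are that the maximum is attained, or that the argument is phrased with a supremum, and that the trace may be interchanged with the $\mu$-integral. The latter holds because the integrand is bounded and continuous on a compact set.
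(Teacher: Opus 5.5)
Your proof is correct and follows the paper's argument. The paper states that the first inequality is trivial, since $\Omega_{A^nB^n}$ averages $\sigma_{AB}^{\otimes n}$ over a probability measure supported on $\SEPAB$. It derives the second inequality directly from \cref{thm:forward} by pairing the operator inequality with $M_{A^nB^n}\succeq 0$; your remarks that the maximum is attained and that $M\preceq I$ is not needed are both accurate.
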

From the perspective of entanglement detection, \cref{cor:avg_mean_likelihood} implies that if a state $\rho_{AB}$ achieves a likelihood $\Tr[M_{A^{n}B^{n}} \rho_{AB}^{\otimes n}]$ much greater than the average likelihood $\Tr[M_{A^{n}B^{n}} \Omega_{A^{n}B^{n}}]$ over all separable states, then $\rho_{AB}$ must be entangled.
Note that the first inequality in \cref{cor:avg_mean_likelihood} holds trivially while the second inequality is a direct consequence of \cref{thm:forward}.

Using this operational perspective, together with techniques from quantum state tomography, we also prove the \textit{converse} of \cref{thm:forward}.
In other words, if a state $\rho_{AB}$ satisfies $\rho_{AB}^{\otimes n} \preceq f_{AB}(n)\Omega_{A^{n} B^{n}}$ for every $n \in \mathbb N$, then one can conclude that $\rho_{AB}$ is indeed separable.
In fact, we formulate the converse statement in a quantitative manner using the \textit{logarithmic fidelity of separability} \cite{streltsov2010linking}
\begin{equation}
    G(\rho_{AB}) \coloneqq - \log F_{\mathrm{s}}(\rho_{AB}), \quad \text{with} \quad F_{\mathrm{s}}(\rho_{AB})\coloneqq \max_{\sigma_{AB} \in \SEP} F(\rho_{AB}, \sigma_{AB}),
\end{equation}
where $F(\rho, \sigma) \coloneqq \Tr(\abs{\sqrt{\rho}\sqrt{\sigma}})^2$ is the (squared) fidelity between $\rho$ and $\sigma$. 
The quantity $G(\rho_{AB})$ is faithful, in the sense that $G(\rho_{AB}) \geq 0$ for all states $\rho_{AB}$, and $G(\rho_{AB}) = 0$ if and only if $\rho_{AB} \in \SEPAB$. We
now state our quantitative converse to \cref{thm:forward}.

\begin{restatable}{thm}{completeness}
    \label{thm:completeness}
    Let $n \in \mathbb N$ be such that $n > d_{AB}^{2}$ and suppose that $\rho_{AB}$ satisfies $\rho_{AB}^{\otimes n} \preceq f_{AB}(n) \Omega_{A^{n} B^{n}}$.
    Then
    \begin{equation}
        \label{eq:G_bound}
        G(\rho_{AB}) \leq  \frac{\log (C(\rho_{AB})^{-1} f_{AB}(n))}{n-d_{AB}^{2}},
    \end{equation}
    where $C(\rho_{AB}) \coloneqq x_1^{s} x_2^{s-1} \cdots x_{s-1}^{2} x_{s} \in (0,1]$
    with $x_1 \geq \cdots \geq x_{s} > 0$ being the non-zero eigenvalues of $\rho_{AB}$, and $s = \mathrm{rank}(\rho_{AB})$.
\end{restatable}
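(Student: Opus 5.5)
The plan is to turn the operator inequality into a statement about a single well-chosen measurement effect, in the spirit of \cref{cor:avg_mean_likelihood}. If $\rho_{AB}^{\otimes n} \preceq f_{AB}(n)\Omega_{A^nB^n}$, then for every effect $0 \preceq M \preceq I^{\otimes n}$ we have
\begin{equation*}
\Tr[M\rho_{AB}^{\otimes n}] \leq f_{AB}(n)\int_{\SEP} d\mu(\sigma_{AB})\, \Tr[M\sigma_{AB}^{\otimes n}] \leq f_{AB}(n) \max_{\sigma_{AB}\in\SEP}\Tr[M\sigma_{AB}^{\otimes n}].
\end{equation*}
It therefore suffices to build a \emph{tomographic} effect $M = M_\rho$, adapted to $\rho_{AB}$, with two properties:
\begin{equation*}
\text{(i) } \Tr[M_\rho\rho_{AB}^{\otimes n}] \geq C(\rho_{AB}),
\qquad
\text{(ii) } \Tr[M_\rho\sigma^{\otimes n}] \leq F(\rho_{AB},\sigma)^{\,n-d_{AB}^2} \text{ for every state } \sigma,
\end{equation*}
possibly with harmless polynomial factors absorbed into the exponent. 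Combining these with the display above gives $C(\rho_{AB}) \leq f_{AB}(n) F_{\mathrm s}(\rho_{AB})^{n-d_{AB}^2}$. Taking logarithms and dividing by $n-d_{AB}^2>0$ yields \cref{eq:G_bound}.

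For the construction I would use Schur--Weyl duality on $(\mathbb C^{d_{AB}})^{\otimes n}$ together with Keyl-type estimation. Write $\rho_{AB} = W\,\mathrm{diag}(x_1,\dots,x_s,0,\dots)\,W^\dagger$. Choose a Young diagram $\lambda \vdash n$ with at most $s$ rows and rows roughly proportional to $(x_1,\dots,x_s)$. Let $\ket{h_\lambda^W}$ be the highest-weight vector of the irrep $\mathcal Q_\lambda$ in the rotated basis $W$. Define
\begin{equation*}
M_\rho = \dim(\mathcal Q_\lambda)\,\bigl(\ketbra{h_\lambda^W}\otimes I_{\mathcal P_\lambda}\bigr),
\end{equation*}
rescaled if needed so that $M_\rho\preceq I$. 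This is one POVM element of the covariant estimator $\{\dim(\mathcal Q_\lambda)\, U\ketbra{h_\lambda}U^\dagger\otimes I\, dU\}$ at outcome $(\lambda, W)$.

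The key computation is the overlap of a highest-weight vector with a tensor power. For any $\sigma$, the quantity $\bra{h^W_\lambda}\pi_\lambda(\sigma)\ket{h^W_\lambda}$ is a product of leading principal minors of $W^\dagger\sigma W$ raised to the differences $\lambda_i-\lambda_{i+1}$.

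For (i), take $\sigma=\rho_{AB}$ and the choice $\lambda=(s,s-1,\dots,1)$ padded appropriately, or the staircase-plus-rows shape. The minors are then exactly $x_1\cdots x_k$, and multiplying by $\dim\mathcal P_\lambda$ gives a lower bound of the form $x_1^{s}x_2^{s-1}\cdots x_s = C(\rho_{AB})$. I would also check that the degree bookkeeping is what produces the $C(\rho_{AB})$ rather than a power of it.

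For (ii), I would bound the minor products by fidelity. Each leading minor satisfies a Fischer/Hadamard-type bound $\det_k(W^\dagger\sigma W)^{1/2}\det_k(\cdot)\lesssim$, which combined across $k$ gives
\begin{equation*}
\Tr[M_\rho\sigma^{\otimes n}] \leq \mathrm{poly}(n)\,F(\rho_{AB},\sigma)^{n}.
\end{equation*}
Here $\dim\mathcal Q_\lambda\dim\mathcal P_\lambda$ contributes at most $(n+1)^{d_{AB}^2}$-type factors, and these are what turn the exponent $n$ into $n-d_{AB}^2$ after using $F\le1$ and reshuffling. This is essentially the mechanism in Keyl's large-deviation analysis and in the fidelity bounds for pretty-good tomography.

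I expect step (ii) to be the main obstacle. The difficulty is getting a clean fidelity bound with the exact exponent $n-d_{AB}^2$ and no stray constants, since the Keyl rate function is a relative-entropy-like quantity rather than $-\log F$.

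My first attempt would replace the single highest-weight projector by the \emph{integrated} effect
\begin{equation*}
M=\dim(\mathrm{Sym}^{n}) \int dU\, (U\ket{\psi_\rho}\bra{\psi_\rho}U^\dagger)^{\otimes n}\text{-type twirl}
\end{equation*}
on a purification $\ket{\psi_\rho}$ on a doubled system. For pure tensor powers, $\Tr[(\ketbra{\psi})^{\otimes n}\ketbra{\phi}^{\otimes n}] = \abs{\braket{\psi|\phi}}^{2n}$ directly yields $F^n$ via Uhlmann's theorem. The symmetric-subspace dimension $\binom{n+d_{AB}^2-1}{n}$ then accounts for the loss of $d_{AB}^2$ in the exponent. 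The remaining work would be to lower-bound the overlap with $\rho^{\otimes n}$ by $C(\rho_{AB})$, which requires controlling the Schur-basis weight of the purification.
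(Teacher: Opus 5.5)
Your reduction is the same as the paper's. Because the argument is homogeneous in $M$, the hypothesis gives $\Tr[M\rho_{AB}^{\otimes n}]\le f_{AB}(n)\max_{\sigma\in\SEP}\Tr[M\sigma^{\otimes n}]$ for every $M\succeq 0$. Everything therefore rests on finding a $\rho_{AB}$-adapted $M\succeq 0$ with $\Tr[M\sigma^{\otimes n}]/\Tr[M\rho_{AB}^{\otimes n}]\le C(\rho_{AB})^{-1}F(\rho_{AB},\sigma)^{n-d_{AB}^2}$ for all $\sigma$. That is \cref{lem:likelihood_ratio_fidelity}, which the paper simply invokes. Your (i)+(ii) is the same statement once you normalize $\Tr[M\rho_{AB}^{\otimes n}]=C(\rho_{AB})$.

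The gap is that you never establish it, and the mechanisms you sketch would not.

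\emph{Normalization.} With either normalization you write down, $\Tr[M_\rho\rho_{AB}^{\otimes n}]=c\,\dim\mathcal P_\lambda\prod_j x_j^{\lambda_j}$ with $c\in\{1,\dim\mathcal Q_\lambda\}$. Take a rank-two $\rho_{AB}$ with distinct eigenvalues and $\lambda\approx nx$. For $c=1$ this is $O(n^{-1/2})$, so (i) fails. For $c=\dim\mathcal Q_\lambda$ it grows at least like $n^{1/2}$, so (ii) fails at $\sigma=\rho_{AB}$. Only the ratio is meaningful, and in it $\dim\mathcal P_\lambda$ cancels exactly.

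\emph{The constant and the exponent.} $C(\rho_{AB})$ does not come from taking $\lambda$ to be a staircase; a diagram with $n$ boxes is not one. The loss of $d_{AB}^2$ in the exponent also cannot come from absorbing $\mathrm{poly}(n)$ dimension factors: $\mathrm{poly}(n)F^{n}\le F^{n-d_{AB}^2}$ fails whenever $F$ is close to $1$, which is exactly the nearly separable regime.

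\emph{Purification fallback.} An effect on $(AB)^{\otimes n}$ acts as $M\otimes I$ on purifications, so you cannot test against $\ketbra{\psi_\rho}^{\otimes n}$. Moreover, a Haar twirl of $\ketbra{\psi_\rho}^{\otimes n}$ over the full unitary group is proportional to the symmetric projector and forgets $\rho_{AB}$.

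What is missing is the construction in \cref{sec:state_estimation}, in three steps.

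\emph{Rounding.} Take $\mu$ with $\mu_j-\mu_{j+1}=\lceil k(x_j-x_{j+1})\rceil$. Then $\mu=kx+\delta$ with $0\le\delta_j-\delta_{j+1}\le 1$, and the difference is $0$ when $x_j=x_{j+1}$. Using $\Delta_{kx+\delta}=\Delta_x^{k}\Delta_\delta$, the likelihood ratio of $\Phi_\mu^U$ splits into $e^{-kD^{*}(\rho\|\sigma)}$ times a rounding factor. That factor is at most $1/\Delta_\delta(\mathrm{diag}(x))\le 1/C(\rho_{AB})$; the staircase is the worst-case rounding error, and this is the only place it enters.

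\emph{Padding.} Since $|\mu|\le k+d_{AB}^2$, the largest $k$ with $|\mu|\le n$ satisfies $k\ge n-d_{AB}^2$. One then pads, $M_n=\Phi_\mu^U\otimes I^{\otimes(n-|\mu|)}$. This, not polynomial bookkeeping, is where the exponent $n-d_{AB}^2$ comes from.

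\emph{Fidelity comparison.} You need $e^{-D^{*}(\rho\|\sigma)}\le F(\rho,\sigma)$, i.e.\ $\Delta_x(U^\dagger\sigma U)\le F(\rho,\sigma)\,\Delta_x(\mathrm{diag}(x))$. This is the step you flag as the obstacle and leave as a truncated Fischer/Hadamard inequality. The paper obtains it from three facts: $D^{*}=\lim_{\alpha\to1^-}D_{\alpha,1-\alpha}$, the monotonicity of the reverse sandwiched R\'enyi divergence in $\alpha$, and $D_{1/2,1/2}=-\log F$.

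Your minor-based instinct can in fact be completed. For invertible $\sigma$, write $U^\dagger\sigma U=LTL^\dagger$ with $L$ unit lower triangular and $T=\mathrm{diag}(t)$, so that $\mathrm{pm}_j=t_1\cdots t_j$. Then the ratio equals $\prod_{i\le s}(t_i/x_i)^{x_i}\le(\sum_i\sqrt{x_it_i})^2=|\Tr(\sqrt{\mathrm{diag}(x)}\,L\sqrt{T})|^2\le F(\rho,\sigma)$. This uses weighted AM--GM, $|\Tr X|\le\|X\|_1$, and $L\sqrt{T}=\sqrt{U^\dagger\sigma U}\,V$ for some unitary $V$; the singular case follows by continuity. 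Nothing like this appears in your proposal, however.
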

This result shows that the operator inequalities in \cref{thm:forward}, taken for all $n \in \mathbb N$, provide a \textit{complete} characterization of $\SEP(A\colonsep B)$.
Indeed, since $f_{AB}(n)$ grows only polynomially in $n$, the upper bound in \cref{eq:G_bound} scales like $O((\log n)/n)$, and therefore vanishes as $n \to \infty$.
Hence, if $\rho_{AB}^{\otimes n} \preceq f_{AB}(n) \Omega_{A^{n} B^{n}}$ holds for all arbitrarily large $n$, we conclude $G(\rho_{AB}) = 0$ and therefore $\rho_{AB} \in \SEPAB$.
In other words, \cref{thm:completeness} states that every entangled state $\rho_{AB}$ must violate the criterion in \cref{thm:forward}, that is, $\rho_{AB}^{\otimes n} \not \preceq f_{AB}(n)\Omega_{A^nB^n}$ for all sufficiently large $n$, where \cref{eq:G_bound} fails.

Building on this complete characterization of $\SEP(A\colonsep B)$ in terms of operator inequalities, we additionally construct a countable family of polynomial entanglement witnesses that is complete.
These entanglement witnesses arise from a technical lemma, \cref{lem:witness_from_lambda}, based on Sylvester's criterion for operator positivity, which allows us to convert operator inequalities into polynomial inequalities. Using this lemma, we construct polynomial entanglement witnesses indexed by $n \in \mathbb N$ and $m \in \{1,\ldots, d_{AB}^{n}\}$, given by 
\begin{equation}
    p_{n,m}(\rho_{AB}) = \Tr[P_{\wedge^{m}(A^nB^n)} (f_{AB}(n)\Omega_{A^{n} B^{n}}-\rho_{AB}^{\otimes n})^{\otimes m}],
\end{equation}
where $P_{\wedge^{m}(A^nB^n)}$ denotes the projection onto the antisymmetric subspace $\wedge^{m}(A^nB^n)$ of $m$-tensor copies of $(A\otimes B)^{\otimes n}$.
We additionally show that $p_{n,m}(\rho_{AB})$ can be identified with a Hermitian observable, $W_{A^{nm}B^{nm}}^{n,m}$, which is \textit{independent} of the state $\rho_{AB}$ being tested. More precisely, for every states $\rho_{AB}$,
\begin{equation}
    p_{n,m}(\rho_{AB})=\Tr(W_{A^{nm}B^{nm}}^{n,m} \rho_{AB}^{\otimes nm}).
\end{equation}
An explicit formula for $W_{A^{nm}B^{nm}}^{n,m}$, in terms of $f_{AB}(n)\Omega_{A^{n} B^{n}}$ and $m$, is given in \cref{lem:witness_from_lambda}.
Moreover, we show that these polynomial entanglement witnesses are local-unitary invariant, thereby respecting a fundamental symmetry of the set of separable states.

Using these entanglement witnesses, we obtain the following complete characterization of $\SEPAB$:
\begin{cor}
    \label{cor:complete_witnesses}
    A state $\rho_{AB}$ is separable if and only if, for all $n \in \mathbb N$ and all $m \in \{1,\ldots, d_{AB}^{n}\}$,
    \begin{equation}
        p_{n,m}(\rho_{AB}) \geq 0.
    \end{equation}
\end{cor}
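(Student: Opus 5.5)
Both directions reduce to the operator inequality of \cref{thm:forward} together with Sylvester-type positivity, which is the content of \cref{lem:witness_from_lambda}. Write $X_n(\rho_{AB}) \coloneqq f_{AB}(n)\Omega_{A^nB^n} - \rho_{AB}^{\otimes n}$, a Hermitian operator on a space of dimension $D = d_{AB}^n$. The key identity is
\[
\Tr[P_{\wedge^{m}(A^nB^n)} X^{\otimes m}] = e_m(\lambda_1,\ldots,\lambda_D),
\]
where $\lambda_1,\ldots,\lambda_D$ are the eigenvalues of $X$ and $e_m$ is the $m$-th elementary symmetric polynomial. Equivalently, this trace is the sum of all $m\times m$ principal minors of $X$ in any basis. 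We then use the standard fact that a Hermitian matrix is positive semidefinite if and only if all $e_m(\lambda)\geq 0$ for $m=1,\ldots,D$. This holds because the characteristic polynomial
\[
\det(tI + X) = \sum_{m} e_m(\lambda)\, t^{D-m}
\]
has nonnegative coefficients exactly when it has no positive root $t>0$, i.e., when no eigenvalue is negative. For the reverse implication, suppose some $\lambda_i<0$ while all $e_m\geq 0$. Then evaluating at $t=-\lambda_i>0$ gives a polynomial with nonnegative coefficients and leading coefficient $1$, which is strictly positive, a contradiction.

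For the ``only if'' direction, take $\rho_{AB}\in\SEPAB$. By \cref{thm:forward}, $X_n(\rho_{AB})\succeq 0$ for every $n$. All eigenvalues are therefore nonnegative, so every $e_m\geq 0$, which gives $p_{n,m}(\rho_{AB})\geq 0$ for all $n$ and $m\in\{1,\ldots,d_{AB}^n\}$.

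For the ``if'' direction, suppose $p_{n,m}(\rho_{AB})\geq 0$ for all $n,m$. By the characterization above, $X_n(\rho_{AB})\succeq 0$, i.e., $\rho_{AB}^{\otimes n}\preceq f_{AB}(n)\Omega_{A^nB^n}$ for every $n$. Apply \cref{thm:completeness} for all $n>d_{AB}^2$. Since $f_{AB}(n)\leq (n+1)^{d_{AB}^3-1}$ and $C(\rho_{AB})>0$ is a fixed constant independent of $n$, the right-hand side of \cref{eq:G_bound} is $O(\log n/n)\to 0$. Together with $G\geq 0$, this gives $G(\rho_{AB})=0$, and by faithfulness of $G$, $\rho_{AB}\in\SEPAB$.

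The only nonroutine step is the identity between the antisymmetric-projector trace and $e_m$. I would prove it by diagonalizing $X=\sum_i\lambda_i\ketbra{i}$. Then $P_\wedge X^{\otimes m}P_\wedge$ acts on $\wedge^m$ with basis $\ket{i_1}\wedge\cdots\wedge\ket{i_m}$, $i_1<\cdots<i_m$, and eigenvalues $\lambda_{i_1}\cdots\lambda_{i_m}$. Taking the trace, using $P_\wedge^2=P_\wedge$ and that $P_\wedge$ commutes with $X^{\otimes m}$, gives the sum $e_m(\lambda)$. For $m>D$ the antisymmetric space is trivial, which is why $m$ only ranges up to $d_{AB}^n$. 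Since \cref{lem:witness_from_lambda} is referenced as already packaging exactly this Sylvester-type argument, I would simply invoke it if its statement covers the equivalence.
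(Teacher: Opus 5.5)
Your proof is correct and follows the paper's own argument. The antisymmetric-projector form of Sylvester's criterion (\cref{prop:sylvesters}, packaged as \cref{lem:witness_from_lambda}) turns the operator inequality into the polynomial conditions; \cref{thm:forward} gives necessity; and \cref{thm:completeness}, together with the polynomial growth of $f_{AB}(n)$ (i.e.\ \cref{cor:reverse}), gives sufficiency. The paper's \cref{sec:poly_ent_witnesses} builds $p_{n,m}$ from $\Lambda_{A^nB^n}$ rather than $f_{AB}(n)\Omega_{A^nB^n}$, but this changes nothing, since $\rho_{AB}^{\otimes n}\preceq\Lambda_{A^nB^n}\preceq f_{AB}(n)\Omega_{A^nB^n}$ for separable states, and the $\Lambda$-bound already implies the hypothesis of \cref{thm:completeness}.
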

The advantage of \cref{cor:complete_witnesses} over \cref{thm:stronger_forward} is that the evaluation of $p_{n,m}(\rho_{AB})=\Tr(W_{A^{nm}B^{nm}}^{n,m} \rho_{AB}^{\otimes nm})$ does not depend on having a classical description of $\rho_{AB}$. 
In principle, $p_{n,m}(\rho_{AB})$ may be estimated by direct measurement on $nm$-copies of an unknown state $\rho_{AB}$.

In the subsequent sections, we develop the techniques required to prove these results.
\Cref{sec:preliminaries} begins by reviewing separable and entangled bipartite states in \cref{sec:entanglement}, the non-commutative multinomial expansion for tensor powers of operator sums in \cref{sec:multinomial_expansions}, and the theory of symmetric and block-wise symmetric subspaces in \cref{sec:symmetric_subspaces}.
Our proof of \cref{thm:completeness} also relies on a likelihood-ratio and fidelity bound between quantum states, which we review in \cref{sec:likelihood_fidelity}.

With these prerequisites established, \cref{sec:results} provides the proofs of our main contributions. First, we introduce the aforementioned probability measure $\mu$ in \cref{sec:random_sep_states} and derive a formula for the de Finetti operator $\Omega_{A^{n}B^{n}}$ associated with this measure. Second, we prove a stronger version of \cref{thm:forward}, namely \cref{thm:stronger_forward}, in \cref{sec:necessary_inequalities}. Third, we use the likelihood-ratio and fidelity bound from \cref{sec:likelihood_fidelity} to prove \cref{thm:completeness}.
Fourth, in \cref{sec:poly_ent_witnesses}, we show how to construct a complete family of local-unitary invariant polynomial entanglement witnesses, thereby establishing \cref{cor:complete_witnesses}.
Finally, \cref{sec:discussion} concludes with a discussion and outlook. 

\section{Preliminaries}
\label{sec:preliminaries}
\subsection{Entanglement}
\label{sec:entanglement}
Given a $d_{H}$-dimensional complex Hilbert space $H \simeq \mathbb C^{d_{H}}$, we let the set of density matrices on $H$ be denoted by
\begin{equation}
    \mathcal S(H) \coloneqq \{ \rho : H \to H \mid \rho = \rho^{\dagger}, \rho \succeq 0, \Tr(\rho) = 1  \}.
\end{equation}
\begin{defn}
    \label{defn:separable}
    A bipartite quantum state $\rho_{AB} \in \mathcal S(A\otimes B)$ is \textit{separable} if it can be decomposed into the following form:
    \begin{equation}
        \label{eq:sep_r_ab}
        \rho_{AB} = \sum_{i=1}^{r} p_i \ketbra{\alpha_i}_{A} \otimes \ketbra{\beta_i}_{B},
    \end{equation}
    where $r$ is a positive integer, $(p_1,\ldots, p_r)$ is a probability distribution, and $\ket{\alpha_i}$ and $\ket{\beta_i}$ are pure states in the local Hilbert spaces $A$ and $B$.
    If $\rho_{AB}$ cannot be expressed in this form, then $\rho_{AB}$ is said to be \textit{entangled}.
    We denote the set of all separable states by $\SEP(A\colonsep B)$.
\end{defn}

The \textit{minimum} value of $r$ needed to express a given separable state $\rho_{AB}$ as in \cref{eq:sep_r_ab} is known as the \textit{separable rank} of $\rho_{AB}$.\footnote{Note that the separable rank is distinct from the notion of \textit{mixed} separable rank, where the local states $\ketbra{\alpha_i}_{A}$ and $\ketbra{\beta_i}_{B}$ are permitted to be mixed states $\sigma_{A}$ and $\sigma_{B}$ \cite{gribling2022bounding}.}
By definition, the set of \textit{all} separable states $\SEPAB$  allows decompositions with arbitrary rank $r$. Nevertheless, by Carathéodory's theorem, every separable state admits a decomposition with separable rank at most 
\cite{gribling2022bounding}
\begin{equation}
    \label{eq:sep_rank_finite}
    r \leq \mathrm{rank}(\rho_{AB})^2 \leq d_{AB}^{2}.
\end{equation}
Throughout this paper, when working with separable states, we will make frequent use of this finite and dimension dependent bound on the separable rank $r$.
In particular, without loss of generality, we always assume that each separable state $\rho_{AB}$ has a decomposition of the form \cref{eq:sep_r_ab}, with $r = d_{AB}^{2}$.

\subsection{Multinomial expansions}
\label{sec:multinomial_expansions}
A central ingredient in our techniques is to expand tensor powers of separable states, which are convex combinations of product states. 
We therefore begin by describing, in a more general setting, how to decompose tensor powers of convex combinations of quantum states\footnote{Tensor-power multinomial expansions of this form have also been used to derive various bounds on sums of positive semidefinite matrices in \cite{tie2011rearrangement,lin2014completely,alekseev2025tetrahedral}.}.
As an example, let $H$ be a finite-dimensional Hilbert space, let $p \in [0,1]$ and consider an operator of the form $\rho = p \sigma + (1-p) \eta$, where $\sigma$ and $\eta$ are operators on $H$. 
Then, $\rho^{\otimes 3}$ can be expanded as
\begin{align}
    \begin{split}
        \rho^{\otimes 3} 
        &= \big(p \sigma + (1-p) \eta\big)^{\otimes 3}\\
        &= p^3 (\sigma \ot \sigma \ot \sigma) + (1-p)^3 (\eta \ot \eta \ot \eta) \\
        &\quad+ p^{2}(1-p) \left(\sigma^{\otimes 2} \ot \eta + \sigma \ot \eta \ot \sigma + \eta \ot \sigma^{\otimes 2}\right) \\
        &\quad+ p(1-p)^{2} \left(\sigma \ot \eta^{\otimes 2} + \eta \ot \sigma \ot \eta + \eta^{\otimes 2} \ot \sigma\right).
    \end{split}
\end{align}
Many terms in the above expansion are related by permutations of the tensor factors. For notational convenience, we therefore group together all terms that can be obtained from one another by \textit{permuting} the tensor factors.
Let $T : S_{n} \to \mathrm{GL}(H^{\otimes n})$ denote the \textit{tensor-permutation representation} of the symmetric group $S_{n}$ on $H^{\otimes n}$, defined on product vectors via
\begin{align}
    \label{eq:tensor_perm}
    T(\pi) \ket {v_1} \ot \ket{v_2} \ot \cdots \ot \ket{v_n}
    &= \ket {v_{\pi^{-1}(1)}} \ot \ket{v_{\pi^{-1}(2)}} \ot \cdots \ot \ket{v_{\pi^{-1}(n)}}.
\end{align}
We further define the symmetric group \textit{twirling} 
operator $\twirl_{S_n}$, acting on operators $X$ on $H^{\otimes n}$ by conjugation
\begin{equation}
    \label{eq:twirl}
    \twirl_{S_n}(X) = \frac{1}{n!} \sum_{\pi \in S_n} T(\pi) X T(\pi^{-1}).
\end{equation}

For example, the action of twirling the operator $\sigma^{\otimes 2} \ot \eta$ by permutations $\pi \in S_3$ is given by
\begin{align}
    \begin{split}
        \twirl_{S_3}(\sigma^{\otimes 2} \ot \eta)
        &= \frac{1}{3!} \sum_{\pi \in S_3} T(\pi) (\sigma^{\otimes 2} \ot \eta) T(\pi^{-1}) \\ 
        &= \frac{2!1!}{3!}\left( \sigma^{\otimes 2} \ot \eta + \sigma \ot \eta \ot \sigma + \eta \ot \sigma^{\otimes 2} \right),
    \end{split}
\end{align}
where the coefficient $2!1! = 2$ accounts for the size of the subgroup that leaves $\sigma^{\otimes 2} \ot \eta$ invariant.
To state the general result, we first introduce some notation.
\begin{defn}
    For positive integers $n$ and $r$, let $\intpart(n,r)$ denote the set of \textit{weak compositions} of $n$ into $r$ parts, i.e.,
    \begin{equation}
        \label{eq:intpart}
        \intpart(n,r) \coloneqq \{ (k_1,\ldots, k_r) \in \mathbb Z_{\geq 0}^{r} \mid k_1 + \cdots + k_r = n\}.
    \end{equation}
\end{defn}
A weak composition $\kappa = (k_1,\ldots, k_r)$ of $n$ into $r$ parts can also be viewed as an unsorted integer partition of $n$ into $r$ (possibly empty) parts.
\begin{defn}
    Given a probability distribution $p = (p_1,\ldots, p_r)$ and a weak composition $\kappa = (k_1,\ldots,k_r) \in \intpart(n,r)$, we let the \textit{multinomial distribution} be denoted by
    \begin{equation}
        M_{p}(\kappa) = \binom{n}{k_1,\ldots,k_r} p_1^{k_1}\cdots p_r^{k_r}.
    \end{equation}
\end{defn}
With this notation in mind, we can now describe more generally how tensor powers of convex combinations of states can be decomposed.
\begin{prop}
    \label{prop:multinomial_expansion}
    Let $\rho$ be a convex combination of $\sigma_1,\ldots,\sigma_r$ with convex weights $p = (p_1,\ldots, p_r)$
    \begin{equation}
        \label{eq:decomposition}
        \rho = \sum_{i=1}^{r} p_i \sigma_{i}.
    \end{equation}
    Then, $\rho^{\ot n}$ admits the expansion 
    \begin{align}
        \label{eq:multinomial_expansion}
        \rho^{\ot n}
        = \sum_{\kappa \in \intpart(n,r)} M_{p}(\kappa) \twirl_{S_n}(\sigma_{1}^{\otimes k_1} \ot \cdots \ot \sigma_{r}^{\otimes k_r}).
    \end{align}
\end{prop}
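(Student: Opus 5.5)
We expand $\rho^{\otimes n}$ by multilinearity and then group terms by permutation orbits. Substituting \cref{eq:decomposition} into each of the $n$ tensor factors gives
\begin{equation*}
    \rho^{\otimes n} = \sum_{(i_1,\ldots,i_n)\in [r]^n} p_{i_1}\cdots p_{i_n}\, \sigma_{i_1}\otimes\cdots\otimes\sigma_{i_n}.
\end{equation*}
To each word $w=(i_1,\ldots,i_n)$ we associate its type $\kappa(w)=(k_1,\ldots,k_r)\in\intpart(n,r)$, where $k_j$ counts the occurrences of $j$ in $w$. The scalar coefficient depends only on the type, since $p_{i_1}\cdots p_{i_n}=p_1^{k_1}\cdots p_r^{k_r}$. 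We therefore split the sum into classes of words sharing a common type $\kappa$.

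Next we identify each class with a twirl. Fix $\kappa$ and let $w_\kappa=(1^{k_1},2^{k_2},\ldots,r^{k_r})$ be the sorted word, so that $\sigma_{w_\kappa}=\sigma_1^{\otimes k_1}\otimes\cdots\otimes\sigma_r^{\otimes k_r}$. By \cref{eq:tensor_perm}, conjugation by $T(\pi)$ permutes tensor factors:
\begin{equation*}
    T(\pi)(X_1\otimes\cdots\otimes X_n)T(\pi^{-1}) = X_{\pi^{-1}(1)}\otimes\cdots\otimes X_{\pi^{-1}(n)}.
\end{equation*}
This can be checked on product vectors, and extends by linearity since product operators span. Hence $T(\pi)\sigma_{w_\kappa}T(\pi^{-1})=\sigma_{\pi\cdot w_\kappa}$, where $S_n$ acts on words by permuting positions.

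The orbit of $w_\kappa$ under this action is exactly the set of words of type $\kappa$. Its stabilizer is the Young subgroup $S_{k_1}\times\cdots\times S_{k_r}$, of order $k_1!\cdots k_r!$. By orbit–stabilizer, summing over $\pi\in S_n$ visits each word of type $\kappa$ exactly $k_1!\cdots k_r!$ times. Therefore
\begin{equation*}
    \twirl_{S_n}(\sigma_{w_\kappa}) = \frac{k_1!\cdots k_r!}{n!}\sum_{w:\,\kappa(w)=\kappa}\sigma_w,
    \qquad\text{i.e.}\qquad
    \sum_{w:\,\kappa(w)=\kappa}\sigma_w=\binom{n}{k_1,\ldots,k_r}\twirl_{S_n}(\sigma_{w_\kappa}).
\end{equation*}
Multiplying by $p_1^{k_1}\cdots p_r^{k_r}$ and summing over $\kappa\in\intpart(n,r)$ yields \cref{eq:multinomial_expansion}, because the coefficient is precisely $M_p(\kappa)$.

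The only step requiring care is the convention in \cref{eq:tensor_perm}: we must confirm that conjugation by $T(\pi)$ realizes the position action, with $\pi^{-1}$ rather than $\pi$. Even this matters little, because the sum runs over all of $S_n$, which is invariant under inversion. The rest is bookkeeping, and the proof does not use positivity or normalization of the $\sigma_i$ or $p$.
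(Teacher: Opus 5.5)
Your proof is correct and is the general form of the paper's own argument. The paper gives no separate proof, only the $n=3$, $r=2$ worked example, in which terms related by permuting tensor factors are grouped and the twirl coefficient $2!\,1!/3!$ is attributed to the stabilizer of $\sigma^{\otimes 2}\otimes\eta$; your multilinear expansion over words, grouping by type, and orbit--stabilizer count over the Young subgroup $S_{k_1}\times\cdots\times S_{k_r}$ make that reasoning rigorous for all $n$ and $r$.
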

In subsequent sections, we apply \cref{prop:multinomial_expansion} to separable states expressed as convex decompositions of pure product states.

\subsection{Symmetric subspaces}
\label{sec:symmetric_subspaces}

Our results also rely heavily on the structure of the symmetric subspaces $\Sym^{n}(H) \subseteq H^{\otimes n}$ of tensor powers of $d_{H}$-dimensional Hilbert spaces.
The symmetric subspace, denoted $\Sym^{n}(H)$, is the subspace of $S_n$-invariant vectors in $H^{\otimes n}$
\begin{equation}
    \Sym^{n}(H) = \{ \ket{\Phi} \in H^{\otimes n} \mid \forall \pi \in S_n : T(\pi)\ket{\Psi} = \ket{\Psi}\}.
\end{equation}
For a $d_{H}$-dimensional Hilbert space $H \simeq \mathbb C^{d_{H}}$, there exists a unique $U(d_{H})$-invariant probability measure, $d \psi$, on the set of pure states $\ket{\psi} \in H$, induced by the Haar measure on $U(d_{H})$ \cite{harrow2013church}.
With respect to this measure, the mean value of the the $n$-th tensor power $\ketbra{\psi}^{\otimes n}$ is given by 
\begin{equation}
    \label{eq:integral_sym_proj}
    \int d\psi \ketbra{\psi}^{\ot n} = \frac{\Pi_{\Sym^{n}(H)}}{\dsym_H(n)},
\end{equation}
where $\Pi_{\Sym^{n}(H)}$ denotes the projection operator onto the symmetric subspace $\Sym^{n}(H) \subseteq H^{\otimes n}$ \cite{harrow2013church}.
The quantity $\dsym_H(n)$ denotes the dimension of the symmetric subspace:
\begin{equation}
    \dsym_H(n) \coloneqq \Tr(\Pi_{\Sym^{n}(H)}) = \dim \Sym^{n}(H) = \binom{n+d_H-1}{n}.
\end{equation}

In \cref{sec:multinomial_expansions} we introduced the set $\intpart(n,r)$ of weak compositions of $n$ into $r$ parts, as indexing outcomes $\kappa = (k_1,\ldots, k_r)$ for the multinomial distribution $M_p(\kappa)$.
Here, we remark that $\kappa \in \intpart(n,r)$ also indexes an orthonormal basis for $\Sym^{n}(H)$, called the \textit{Dicke basis} (whenever $d_{H} = r$). 
Given a fixed orthonormal set of vectors $\{\ket 1, \ket 2, \ldots, \ket{r}\}$ and $\kappa = (k_1,\ldots,k_r) \in \intpart(n,r)$, we let $\ket{\Psi_{\kappa}}$ denote the vector
\begin{equation}
    \ket{\Psi_{\kappa}} \coloneqq \ket 1^{\otimes k_1} \otimes \cdots \otimes \ket r^{\otimes k_r}.
\end{equation}
Note that $\ket{\Psi_{\kappa}}$ generally does not belong to the symmetric subspace $\Sym^{n}(H)$ but rather in the larger \textit{block-wise} symmetric subspace, denoted by\footnote{Throughout this paper, we use the notational convention that $\Sym^{0}(H) \coloneqq \mathbb C$ for a Hilbert space $H$.}
\begin{equation}
    \Sym^{\kappa}(H) \coloneqq \Sym^{k_1}(H) \otimes \cdots \otimes \Sym^{k_r}(H).
\end{equation}
By projecting $\ket{\Psi_{\kappa}}$ into the symmetric subspace and re-normalizing, one obtains the Dicke basis state associated to $\kappa$:
\begin{equation}
    \ket{\dsym_{\kappa}} \coloneqq \sqrt{\binom{n}{k_1,\ldots,k_r}} \Pi_{\Sym^{n}(H)}\ket{\Psi_{\kappa}}.
\end{equation}
The set of Dicke states $\{\ket{\dsym_{\kappa}} \mid \kappa \in  \intpart(n,r)\}$ forms an orthonormal basis for $\Sym^{n}(H)$ when the Hilbert space $H$ has dimension $d_{H} = r$.

In subsequently sections, we will need to bound the dimension of the block-wise symmetric subspace,
\begin{equation}
    \dsym_{H}(\kappa) = \dim \Sym^{\kappa}(H). 
\end{equation}
by a quantity that depends only on the total $n = k_1 +\cdots + k_r$ for a fixed number of parts $r$ and the dimension $d_{H} = \dim H$.
In particular, we observe that $\dsym_{H}(\kappa)$ is maximized when the parts $k_i$ are as uniform as possible, that is, when $k_i \approx n/r$.
\begin{lem}
    \label{lem:blockwise_bound}
    Let $\kappa = (k_1, \ldots, k_r) \in \intpart(n,r)$. Then,
    \begin{equation}
        \dsym_{H}(\kappa) \leq \dsym_{H}\left(\left\lceil\frac{n}{r}\right\rceil\right)^{r}.
    \end{equation}
\end{lem}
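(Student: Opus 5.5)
We have $\dsym_H(\kappa)=\prod_{i=1}^r \dsym_H(k_i)$ with $\dsym_H(k)=\binom{k+d_H-1}{k}$. Write $q=\lceil n/r\rceil$. The plan is to show that this product, over weak compositions of $n$ into $r$ parts, is dominated by the product in which every part equals $q$. Since $\dsym_H(k)$ is nondecreasing in $k$, it would suffice to compare with a composition whose parts are all at most $q$ (e.g. the balanced composition), but arbitrary $\kappa$ may have parts exceeding $q$. So we need a balancing argument.

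\textbf{Step 1 (log-concavity).} We show that $k\mapsto \dsym_H(k)$ is log-concave on $\mathbb Z_{\ge0}$. The ratio
\[
\frac{\dsym_H(k+1)}{\dsym_H(k)}=\frac{k+d_H}{k+1}=1+\frac{d_H-1}{k+1}
\]
is nonincreasing in $k$, and this is exactly discrete log-concavity. The consequence we use is the following: if $k_i\ge k_j+2$, then replacing $(k_i,k_j)$ by $(k_i-1,k_j+1)$ does not decrease $\dsym_H(k_i)\dsym_H(k_j)$. Indeed, this move multiplies the product by
\[
\frac{\dsym_H(k_j+1)/\dsym_H(k_j)}{\dsym_H(k_i)/\dsym_H(k_i-1)},
\]
which is $\ge1$ because $k_j<k_i-1$.

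\textbf{Step 2 (balancing).} Apply such moves repeatedly. The quantity $\sum_i k_i^2$ strictly decreases with each move, so the process terminates. It ends at a composition whose parts differ pairwise by at most one. That composition has all parts in $\{\lfloor n/r\rfloor,\lceil n/r\rceil\}$, so every part is at most $q$. Hence $\dsym_H(\kappa)\le \dsym_H(\kappa^{\mathrm{bal}})$.

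\textbf{Step 3 (monotonicity).} Because $\dsym_H$ is nondecreasing in $k$ and each balanced part is at most $q$, we get $\dsym_H(\kappa^{\mathrm{bal}})\le \dsym_H(q)^r$.

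The only real content is Step 1: the ratio monotonicity, and checking that the exchange step goes in the right direction. Everything else is bookkeeping. The case $d_H=1$ is trivial, since all factors then equal $1$.
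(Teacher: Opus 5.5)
Your proof is correct and follows the same route as the paper: log-concavity of $k\mapsto \dsym_H(k)$ via the ratio $1+\frac{d_H-1}{k+1}$, then balancing the composition until all parts lie in $\{\lfloor n/r\rfloor,\lceil n/r\rceil\}$, then bounding by $\dsym_H(\lceil n/r\rceil)^r$. Your version also makes explicit two points the paper leaves implicit: that the balancing moves terminate (since $\sum_i k_i^2$ strictly decreases), and that the final step uses the monotonicity of $\dsym_H$.
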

\begin{proof}
    The proof follows by noting that the function $k \mapsto \dsym_{H}(k)$ is a log concave function for any dimension $d_{H} = \dim H$.
    Indeed, for any $k$:
    \begin{equation}
        \frac{\dsym_{H}(k+1)}{\dsym_{H}(k)} = \frac{\binom{k+1+d_H-1}{d_H-1}}{\binom{k+d_H-1}{d_H-1}} = \frac{k+d_{H}}{k+1} = 1 + \frac{d_{H}-1}{k+1} \leq \frac{\dsym_{H}(k)}{\dsym_{H}(k-1)},
    \end{equation}
    and therefore
    \begin{equation}
        \dsym_{H}(k)^2 \geq \dsym_{H}(k+1)\dsym_{H}(k-1).
    \end{equation}
    By iterating this inequality one sees that the dimension of the block-wise symmetric subspace, $\dsym_{H}(\kappa)$, can only be maximized when the components of $\kappa = (k_1, k_2, \ldots, k_r)$ differ by at most one, i.e., $\abs{k_i - k_j} \leq 1$.
    In turn, this implies that $\dsym_{H}(\kappa)$ is maximized when each block has size $k_i \in \{\lfloor\frac{n}{r}\rfloor, \lceil\frac{n}{r}\rceil \}$, and therefore
    \begin{equation}
        \max_{\kappa \in \intpart(n,r)} \dsym_{H}(\kappa) \leq \dsym_{H}\left(\left\lceil\frac{n}{r}\right\rceil\right)^{r}.
    \end{equation}
\end{proof}

\subsection{Likelihood-ratios and state fidelity}
\label{sec:likelihood_fidelity}

Essential to the proof of \cref{thm:completeness} is the idea that one can always distinguish $n$-copies of a state $\rho$ from $n$-copies of any different state $\sigma \neq \rho$ by performing suitable measurements.  
Specifically, there must exist sequences of measurement effects $M_n \succeq 0$ such that the probability $\Tr(M_n \rho^{\otimes n})$ is large, while the probability $\Tr(M_n \sigma^{\otimes n})$ is small whenever $\sigma \neq \rho$.
To quantify this distinction, we consider the associated likelihood-ratio.
\begin{defn}
    Given $\rho, \sigma \in \mathcal S(H)$ and measurement effect $0\preceq M_n \preceq I_{H}^{\otimes n}$ acting on $H^{\otimes n}$, we define the \textit{likelihood-ratio} by
    \begin{equation}
        \mathcal R_{\sigma/\rho}(M_n) \coloneqq \frac{\Tr(M_n \sigma^{\otimes n})}{\Tr(M_n \rho^{\otimes n})}.
    \end{equation}
\end{defn}
To quantify the difference between two quantum states $\rho, \sigma \in \mathcal S(H)$, we use the quantum state fidelity.
\begin{defn}
    The fidelity between two quantum states $\rho, \sigma \in \mathcal S(H)$  is defined by
    \begin{equation}
        F(\rho, \sigma) \coloneqq \Tr(\abs{\sqrt{\rho}\sqrt{\sigma}})^2.
    \end{equation}
    In general, $0 \leq F(\rho, \sigma) \leq 1$ with
    \begin{equation}
        \label{eq:faithful_fidelity}
        F(\rho, \sigma) = 1 \quad \Longleftrightarrow \quad \rho = \sigma.
    \end{equation}
\end{defn}
For our purposes, we will use the following likelihood-ratio bound in terms of the fidelity between $\rho$ and $\sigma$ \cite{fraser2022sufficient}.
\begin{lem}
    \label{lem:likelihood_ratio_fidelity}
    Let $\rho \in \mathcal S(H)$ be a $d_{H}$-dimensional quantum state.
    Then, there exists a sequence of measurements $M_n \succeq 0$ such that for all $n \in \mathbb N$ and $\sigma \in \mathcal S(H)$
    \begin{equation}
         \mathcal R_{\sigma/\rho}(M_n) = \frac{\Tr(M_n \sigma^{\otimes n})}{\Tr(M_n \rho^{\otimes n})}\leq C(\rho)^{-1} F(\rho, \sigma)^{n-d_{H}^{2}},
    \end{equation}
    where $C(\rho) \in (0,1]$ is given by 
    \begin{equation}
        \label{eq:staircase_constant}
        C(\rho) \coloneqq x_1^{s} x_2^{s-1} \cdots x_{s-1}^{2} x_{s},
    \end{equation}
    with $(x_1 \geq \cdots \geq x_{d_{H}})$ being the eigenvalues of $\rho$, and $s = \mathrm{rank}(\rho)$.
\end{lem}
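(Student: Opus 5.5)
The plan is to build $M_n$ from Schur--Weyl duality, choosing the projector so that $\Tr(M_n\sigma^{\otimes n})$ becomes an explicit product of leading principal minors of $\sigma$. The fidelity then enters through a Cholesky factorization of $\sigma$ and a weighted AM--GM inequality.

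\textbf{Setup and construction.} Fix an eigenbasis $\ket{e_1},\ldots,\ket{e_{d_H}}$ of $\rho$ with eigenvalues $x_1\ge\cdots\ge x_{d_H}$. By Schur--Weyl duality, $H^{\otimes n}\simeq\bigoplus_\lambda V_\lambda\otimes[\lambda]$, where $V_\lambda$ is an irreducible $\mathrm{GL}(d_H)$-module and $[\lambda]$ an irreducible $S_n$-module. For a Young diagram $\lambda$ with at most $s=\mathrm{rank}(\rho)$ rows, let $\ket{h_\lambda}\in V_\lambda$ be the highest-weight vector with respect to the ordered basis $\{e_i\}$, and set
\begin{equation*}
M_n=\ketbra{h_\lambda}\otimes I_{[\lambda]} .
\end{equation*}
This is a projector, so $0\preceq M_n\preceq I$. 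The standard highest-weight identity gives $\bra{h_\lambda}g\ket{h_\lambda}=\prod_{i}\Delta_i(g)^{\lambda_i-\lambda_{i+1}}$, with $\Delta_i$ the $i$-th leading principal minor in this basis. Applying it to $g=\sigma$ and $g=\rho$, the factor $\dim[\lambda]$ cancels and
\begin{equation*}
\mathcal R_{\sigma/\rho}(M_n)=\prod_{i=1}^{s}\left(\frac{\Delta_i(\sigma)}{\Delta_i(\rho)}\right)^{\lambda_i-\lambda_{i+1}},\qquad \Delta_i(\rho)=x_1\cdots x_i .
\end{equation*}

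\textbf{Fidelity bound.} Write $\sigma=LL^\dagger$ with $L$ lower triangular in the basis $\{e_i\}$ (Cholesky factorization), so that $\Delta_i(\sigma)=\prod_{j\le i}|L_{jj}|^2$. Regrouping the telescoping exponents gives
\begin{equation*}
\mathcal R_{\sigma/\rho}(M_n)=\prod_{j\le s}\left(\frac{|L_{jj}|^2}{x_j}\right)^{\lambda_j}.
\end{equation*}
Ideally $\lambda_j=n x_j$. Weighted AM--GM with weights $x_j$ (restricted to the support, so $\sum_{j\le s}x_j=1$) would then give
\begin{equation*}
\prod_j\left(\frac{|L_{jj}|}{\sqrt{x_j}}\right)^{2x_j}\le\Bigl(\sum_j\sqrt{x_j}\,|L_{jj}|\Bigr)^2 .
\end{equation*}
Choosing phases in $L$ so that each $L_{jj}\ge0$, the right-hand side equals $|\Tr(\sqrt\rho\,L)|^2$. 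Writing $L=\sqrt\sigma\,V$ with $V$ unitary (polar decomposition), we get
\begin{equation*}
|\Tr(\sqrt\rho\sqrt\sigma V)|\le\Tr\abs{\sqrt\rho\sqrt\sigma}=\sqrt{F(\rho,\sigma)} .
\end{equation*}
The ideal choice would therefore yield $\mathcal R_{\sigma/\rho}(M_n)\le F(\rho,\sigma)^{n}$.

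\textbf{Rounding (main obstacle).} The exponents $nx_j$ are generally not integers and need not form a valid diagram, and this is where the loss of $d_H^2$ in the exponent and the constant $C(\rho)^{-1}$ must come from. The plan is to choose an integer, nonincreasing $\lambda$ with $|\lambda|=n$ and $\lambda_j=(n-d_H^2)x_j+\epsilon_j$, where $0\le\epsilon_j\le d_H^2$ and the increments $\epsilon_j-\epsilon_{j+1}$ are nonnegative integers. Such a $\lambda$ exists for $n\ge d_H^2$ by taking floors and redistributing the remainder toward the top rows. The main part of the product is then bounded by $F^{\,n-d_H^2}$ as above. The leftover factor equals $\prod_i\bigl(\Delta_i(\sigma)/\Delta_i(\rho)\bigr)^{\epsilon_i-\epsilon_{i+1}}$ and is bounded using $\Delta_i(\sigma)\le1$. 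What remains is $\prod_i(x_1\cdots x_i)^{-(\epsilon_i-\epsilon_{i+1})}$, and I would need to make the choice of $\lambda$ so that this is at most $(x_1\cdots x_s\cdots)^{-1}=C(\rho)^{-1}$. Concretely, arranging $\epsilon_i-\epsilon_{i+1}\le1$ gives exactly the staircase $x_1^{s}x_2^{s-1}\cdots x_s$. The bookkeeping needed to get exactly $C(\rho)$, rather than some power of it, is the delicate step, and I would adjust the rounding scheme (or the constant $d_H^2$) to make it work. For $n<d_H^2$ the bound holds trivially, since then $F^{\,n-d_H^2}\ge1$ and the $M_n$ above still satisfies $\mathcal R_{\sigma/\rho}(M_n)\le\prod_i\Delta_i(\rho)^{-(\lambda_i-\lambda_{i+1})}$, which can be handled with any admissible $\lambda$.
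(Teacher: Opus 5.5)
\textbf{Where your proposal differs from the paper.} Your construction is the same as the paper's: a highest-weight projector in the eigenbasis of $\rho$, so that the likelihood ratio is $\prod_i(\Delta_i(\sigma)/\Delta_i(\rho))^{\lambda_i-\lambda_{i+1}}$. Your fidelity step is correct and takes a different route. The paper bounds the main factor by $\exp(-kD^{*}(\rho\|\sigma))$ and then cites monotonicity of the reverse sandwiched R\'enyi divergence to get $D^{*}\ge-\log F$. Your Cholesky, weighted AM--GM and $\abs{\Tr(\sqrt\rho\sqrt\sigma V)}\le\sqrt F$ argument instead proves $\prod_{j\le s}(\abs{L_{jj}}^2/x_j)^{x_j}\le F(\rho,\sigma)$ directly and self-containedly. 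Since this holds at any real scale, it bounds the main factor by $F^{k}$ for every $k\ge0$.

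\textbf{The gap is the rounding step.} You leave it open, and the scheme you sketch cannot be completed as stated.
(i) If $\lambda$ is integral and $\lambda_j=(n-d_H^2)x_j+\epsilon_j$, the increments $\epsilon_j-\epsilon_{j+1}$ must absorb the fractional parts of $(n-d_H^2)(x_j-x_{j+1})$. So they are not integers in general.
(ii) Requiring $\abs{\lambda}=n$ at the exact scale $n-d_H^2$ forces $\sum_{j\le s}\epsilon_j=d_H^2$. But $0\le\epsilon_j-\epsilon_{j+1}\le1$ with $\epsilon_{s+1}=0$, which you need to get $C(\rho)^{-1}$ rather than a power of it, forces $\sum_j\epsilon_j\le s(s+1)/2<d_H^2$ when $d_H\ge2$. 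Redistributing the remainder toward the top rows therefore necessarily produces large increments.
(iii) Even with a free real scale $k$, size exactly $n$ can be unattainable. For $x=(0.6,0.4)$, no $\lambda\vdash n$ with $n\equiv4\pmod 5$ equals $kx+\epsilon$ with all increments of $\epsilon$ in $[0,1]$.

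\textbf{The missing idea is to decouple the diagram size from $n$.}
First, for an integer scale $k$, set $\mu_j-\mu_{j+1}=\lceil k(x_j-x_{j+1})\rceil$. Then $\delta=\mu-kx$ automatically satisfies $0\le\delta_j-\delta_{j+1}<1$, with the increment vanishing wherever $x_j=x_{j+1}$, so $\mu$ has at most $s$ rows. Moreover $k\le\abs{\mu}\le k+d_H^2$.
Second, take $k$ maximal with $\abs{\mu}\le n$, which guarantees $k\ge n-d_H^2$.
Third, pad with identities: $M_n=\Phi_\mu\otimes I^{\otimes(n-\abs{\mu})}$, where $\Phi_\mu$ is your highest-weight projector for $\mu$. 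This leaves the likelihood ratio unchanged because $\Tr\sigma=\Tr\rho=1$.

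With this choice, your bound gives $F^{k}\le F^{n-d_H^2}$ for the main factor. For the remainder, $\Delta_i(\sigma)\le1$ together with $\Delta_i(\rho)^{-(\delta_i-\delta_{i+1})}\le\Delta_i(\rho)^{-1}$ gives exactly $C(\rho)^{-1}$.

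The same padding, with $k=0$ (i.e.\ $M_n=I$), also handles small $n$. There your claim that ``any admissible $\lambda$'' works is unjustified: a diagram with at most $s$ rows and size $n>s(s+1)/2$ must have an increment larger than one, and then $\prod_i\Delta_i(\rho)^{-(\lambda_i-\lambda_{i+1})}$ exceeds $C(\rho)^{-1}$.
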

\begin{proof}
    An explicit construction for the sequence of measurements $M_n \succeq 0$ which satisfies the claim was given in \cite[Appendix J, Theorem 15]{fraser2022sufficient}. 
    The proof is also reproduced in \cref{sec:state_estimation}, with minor improvements.
\end{proof}
The statement of \cref{lem:likelihood_ratio_fidelity} admits the following interpretation. If $\sigma=\rho$, then $\mathcal R_{\rho/\rho}(M_n)=1$ and $F(\rho,\rho)=1$, so the bound is trivial because $C(\rho)^{-1}\geq 1$.
On the other hand, if $\sigma\neq\rho$, then $F(\rho,\sigma)<1$, and \cref{lem:likelihood_ratio_fidelity} shows that the likelihood ratio $\mathcal R_{\sigma/\rho}(M_n)$ decays to zero as $n\to\infty$ at an exponential rate determined by $-\log F(\rho,\sigma)$. Beyond this exponential decay, the essential feature of the \cref{lem:likelihood_ratio_fidelity} is that the measurements $M_n$ are \textit{independent} of $\sigma$.

\section{Complete entanglement detection}
\label{sec:results}

\subsection{Random separable states}
\label{sec:random_sep_states}

This section introduces a probability measure $\mu$ on the set of separable states $\SEPAB$. We later obtain an analytic formula for the associated de Finetti state $\Omega_{A^{n}B^{n}}$; see \cref{lem:sep_de_finetti_formula}. The state is given by
\begin{align}
    \label{eq:de_finetti_state}
    \begin{split}
    \Omega_{A^{n}B^{n}}
    &= \int d \mu(\sigma_{AB}) \sigma_{AB}^{\ot n}.
    \end{split}
\end{align}
To define the measure $\mu$, we introduce an auxiliary register Hilbert space $R$ with dimension 
\begin{equation}
    r = d_{R} = d_{AB}^{2}.
\end{equation}
The purpose of this register is to index the summands in the separable decomposition of a generic separable state $\rho_{AB}$.

Now let $\ket{\alpha_1}_{A},\ldots, \ket{\alpha_r}_{A} \in A$, $\ket{\beta_1}_{B},\ldots, \ket{\beta_r}_{B} \in B$, and $\ket{\gamma}_{R} \in R$ be independent uniformly random pure states in their respective Hilbert spaces. 
Using these random pure states, the measure $\mu$ is induced by constructing the following separable state $\sigma_{AB}$ according to the formula
\begin{equation}
    \label{eq:random_sep}
    \sigma_{AB} = \sum_{i=1}^{r} \abs{\braket{i|\gamma}_{R}}^2 \ketbra{\alpha_i}_{A} \ot \ketbra{\beta_i}_{B},
\end{equation}
where the weights $p_i = \abs{\braket{i|\gamma}_{R}}^2$ are given by the squared coordinates of the random pure state $\ket{\gamma} \in R$ taken with respect to any orthonormal basis $\{\ket 1_{R}, \ldots, \ket{r}_{R}\}$ for $R = \mathbb C^{r}$.
This choice of weights is equivalent to letting the distribution $p = (p_1,\ldots, p_{r})$ be distributed according to the uniform Dirichlet measure \cite{zyczkowski2001induced}.

As we shall demonstrate momentarily, this construction gives rise to a tractable formula for the average $n$-th tensor power of $\sigma_{AB}$ distributed according to $\mu$, analogous to \cref{eq:integral_sym_proj}.
To state our result, first recall from the multinomial expansion formula (\cref{prop:multinomial_expansion}), that $\intpart(n,r)$ denotes the set of weak compositions $\kappa = (k_1, \ldots, k_r)$ of $n$ (\cref{eq:intpart}), and $\twirl_{S_n}$ denotes the symmetric group twirling operator (\cref{eq:twirl}).
\begin{lem}
    \label{lem:sep_de_finetti_formula}
    The de Finetti state $\Omega_{A^{n}B^{n}}$ in \cref{eq:de_finetti_state} 
    admits the representation 
    \begin{equation}
        \label{eq:omega_formula}
        \Omega_{A^{n}B^{n}}
        = \frac{1}{\dsym_R(n)}\sum_{\kappa \in \intpart(n,r)} \twirl_{S_n}\left(\frac{\Pi_{\Sym^{\kappa}(A)}}{\dsym_{A}(\kappa)} \ot \frac{\Pi_{\Sym^{\kappa}(B)}}{\dsym_{B}(\kappa)}\right),
    \end{equation}
    where $\dsym_H(\kappa)$ is the dimension of the block-wise symmetric subspace $\Sym^{\kappa}(H)$ of a Hilbert space $H$ and $r=d_{R} = d_{AB}^{2}$.
\end{lem}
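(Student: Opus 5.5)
Fix a draw of the random data $\ket{\gamma}_R$, $\ket{\alpha_i}_A$, $\ket{\beta_i}_B$ and apply \cref{prop:multinomial_expansion} to the separable state \cref{eq:random_sep}, with $p_i = \abs{\braket{i|\gamma}}^2$ and $\sigma_i = \ketbra{\alpha_i}_A\ot\ketbra{\beta_i}_B$. This gives
\begin{equation*}
\sigma_{AB}^{\ot n} = \sum_{\kappa\in\intpart(n,r)} M_p(\kappa)\, \twirl_{S_n}\Big(\bigotimes_{i=1}^r \ketbra{\alpha_i}_A^{\ot k_i}\ot\ketbra{\beta_i}_B^{\ot k_i}\Big),
\end{equation*}
where the tensor factors are regrouped into $A^nB^n$ order. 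The twirl permutes the $n$ copies of $AB$ simultaneously, so it acts as $T_A(\pi)\ot T_B(\pi)$. Since the $\ket{\gamma}$, $\ket{\alpha_i}$ and $\ket{\beta_i}$ are independent and $\twirl_{S_n}$ is linear, the expectation over $\mu$ factorizes term by term:
\begin{equation*}
\Omega_{A^nB^n} = \sum_{\kappa} \mathbb E_\gamma[M_p(\kappa)]\;\twirl_{S_n}\Big(\mathbb E_\alpha\Big[\bigotimes_i\ketbra{\alpha_i}^{\ot k_i}\Big]\ot \mathbb E_\beta\Big[\bigotimes_i\ketbra{\beta_i}^{\ot k_i}\Big]\Big).
\end{equation*}

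The two local expectations follow from \cref{eq:integral_sym_proj} together with independence of the $\ket{\alpha_i}$:
\begin{equation*}
\mathbb E_\alpha\Big[\bigotimes_i\ketbra{\alpha_i}^{\ot k_i}\Big] = \bigotimes_i \frac{\Pi_{\Sym^{k_i}(A)}}{\dsym_A(k_i)} = \frac{\Pi_{\Sym^\kappa(A)}}{\dsym_A(\kappa)},
\end{equation*}
and likewise on $B$. Blocks with $k_i=0$ contribute the trivial factor $\Sym^0=\mathbb C$. This produces exactly the operator appearing inside the twirl in \cref{eq:omega_formula}.

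It remains to show that $\mathbb E_\gamma[M_p(\kappa)] = 1/\dsym_R(n)$ for every $\kappa$. This is the step that needs an actual argument. I would apply \cref{eq:integral_sym_proj} to $R$:
\begin{equation*}
\mathbb E_\gamma\big[\ketbra{\gamma}^{\ot n}\big] = \frac{\Pi_{\Sym^n(R)}}{\dsym_R(n)},
\end{equation*}
and take the expectation value of both sides in the Dicke state $\ket{\dsym_\kappa}$. Expanding $\ket{\gamma}=\sum_i\gamma_i\ket i$, the left side equals $\mathbb E_\gamma\abs{\braket{\dsym_\kappa|\gamma^{\ot n}}}^2$. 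A direct computation using the definition of $\ket{\dsym_\kappa}$ gives
\begin{equation*}
\braket{\dsym_\kappa|\gamma^{\ot n}} = \sqrt{\tbinom{n}{k_1,\ldots,k_r}}\;\bar\gamma_1^{k_1}\cdots\bar\gamma_r^{k_r},
\end{equation*}
so the left side is $\mathbb E_\gamma[M_p(\kappa)]$. Because $\ket{\dsym_\kappa}\in\Sym^n(R)$ is normalized, the right side is $1/\dsym_R(n)$. This is the uniform Dirichlet fact: the multinomial is uniform over compositions after averaging.

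Substituting the three computations into the factorized expression yields \cref{eq:omega_formula}. The main obstacle is bookkeeping rather than analysis. One must verify that the twirl acting diagonally on $A^nB^n$ commutes with taking expectations and regrouping factors, and that the Dicke normalization matches the multinomial coefficient. Both checks are routine.
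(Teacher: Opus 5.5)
Your proof is correct and is essentially the paper's argument: multinomial expansion via \cref{prop:multinomial_expansion}, factorization of the average by independence, and \cref{eq:integral_sym_proj} applied on $A$, $B$, and $R$. The only difference is cosmetic: you evaluate the $R$-average on the normalized Dicke state $\ket{\dsym_\kappa}$, whereas the paper uses $\ket{\Psi_\kappa}$ together with $\bra{\Psi_\kappa}\Pi_{\Sym^{n}(R)}\ket{\Psi_\kappa} = k_1!\cdots k_r!/n!$, which is the same computation with the multinomial coefficient absorbed into the normalization; also, the overlap is $\gamma_1^{k_1}\cdots\gamma_r^{k_r}$ rather than its conjugate, which is harmless after taking the modulus squared.
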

\begin{proof}
    First apply the multinomial expansion from \cref{prop:multinomial_expansion} to the integrand in \cref{eq:de_finetti_state}, with $p_i = \abs{\braket{i|\gamma}_{R}}^2$ and $\sigma_i = \ketbra{\alpha_i}_{A} \ot \ketbra{\beta_i}_{B}$, to obtain
    \begin{equation}
        \Omega_{A^{n}B^{n}}
        = \int d\gamma \int \prod_{i=1}^{r} d \alpha_i d \beta_i  \sum_{\kappa \in \intpart(n,r)} M_{p}(\kappa) \twirl_{S_n}
        \left(\bigotimes_{i=1}^{r}\ketbra{\alpha_i}_A^{\otimes k_i} \ot \ketbra{\beta_i}_B^{\otimes k_i} \right). \\
    \end{equation}
    The claimed formula follows by evaluating this integral through repeated applications of \cref{eq:integral_sym_proj}.
    The resulting expression involves integrals over the pure states $\ket{\gamma}_{R}$, $\ket{\alpha_i}_{A}$, and $\ket{\beta_i}_{B}$, which we address separately.
    For each $\kappa = (k_1, \ldots, k_r) \in \intpart(n,r)$ and each $i \in \{1, \ldots, r\}$, the integral over $\ket{\alpha_i}_{A}$ of $\ketbra{\alpha_i}_{A}^{\otimes k_i}$ is a direct application of \cref{eq:integral_sym_proj}, yielding the term $\frac{\Pi_{\Sym^{k_i}(A)}}{\dsym_{A}(k_i)}$.
    The same argument applies to the integrals over $\ket{\beta_i}_{B}$, giving $\frac{ \Pi_{\Sym^{k_i}(B)}}{\dsym_{B}(k_i)}$. 
    It remains to evaluate the integral over $\ket{\gamma}_{R}$ of the multinomial distribution $M_p(k_1,\ldots, k_r)$, with probability distribution $p_i = \abs{\braket{i|\gamma}_{R}}^2$. For this, consider the state
    \begin{equation}
        \ket{\Psi_k}_{R^{n}} \coloneqq \ket{1}_{R}^{\otimes k_1} \ot \cdots \ot \ket{r}_{R}^{\otimes k_r} \in R^{\ot n},
    \end{equation}
    so that 
    \begin{equation}
        p_1^{k_1}\cdots p_r^{k_r} = \abs{\braket{\Psi_k|\gamma}^{\ot n}_{R}}^2.
    \end{equation}
    Although $\ket{\Psi_k}$ is not, in general, $S_n$-symmetric, it is invariant under $S_{k_1} \times \cdots \times S_{k_r}$.
    Further, as $\{\ket{1}_{R}, \ldots, \ket{r}_{R}\} \subseteq R$ was assumed orthonormal, we obtain
    \begin{equation}
        \bra{\Psi_k}\Pi_{\Sym^{n}(R)}\ket{\Psi_k}_{R^{n}} = \frac{k_1!\cdots k_r!}{n!}.
    \end{equation}
    Using this, we compute the integral over $\ket \gamma_{R}$ and obtain
    \begin{align}
    \begin{split}
        \label{eq:dirichlet_moment}
        \int d \gamma M_{p}(k_1,\ldots, k_r)
        &= \frac{n!}{k_1!\cdots k_r!}\int d \gamma p_1^{k_1}\cdots p_r^{k_r} \\
        &= \frac{n!}{k_1!\cdots k_r!}\int d \gamma \bra{\Psi_k} (\ketbra{\gamma}_{R})^{\ot n} \ket{\Psi_k} \\
        &= \frac{n!}{k_1!\cdots k_r!}\bra{\Psi_k} \frac{\Pi_{(n)}^{r}}{\dsym_R(n)} \ket{\Psi_k} \\
        &= \frac{1}{\dsym_R(n)},
    \end{split}
    \end{align}
    which completes the proof.
\end{proof}

\subsection{Separable state inequalities}
\label{sec:necessary_inequalities}

We now prove \cref{thm:forward}, which establishes a universal operator upper bound on $\rho_{AB}^{\otimes n}$ for all separable states $\rho_{AB} \in \SEPAB$ and all positive integers $n$.

To illustrate our technique, we begin by describing how the method works for the case of \textit{pure} product states $\rho_{AB} = \ketbra{\Psi}_{AB}$ where $\ket{\Psi}_{AB} = \ket{\alpha}_{A} \otimes \ket{\beta}_{B}$.
The core intuition of our approach is to notice that pure product states $\ket{\Psi}_{AB} = \ket{\psi}_{A} \ot \ket{\phi}_{B}$ have more \textit{symmetry} than pure entangled states;
specifically, while the $n$-th tensor power of a pure state, $\ket{\Psi}_{AB}^{\ot n}$, lies in the symmetric subspace $\Sym^{n}(A \ot B)$, the $n$-th tensor power of a pure \textit{product} state, $\ket{\Psi}_{AB}^{\ot n} = \ket{\alpha}_{A}^{\ot n} \ot \ket{\beta}_{B}^{\ot n}$, lies in the smaller \textit{locally-symmetric subspace} $\Sym^{n}(A) \ot \Sym^{n}(B)$.

This extra symmetry of pure product states can be encoded into an inequality as follows. 

First, let $\Pi_{\Sym^{n}(H)}$ denote the projection operator onto the symmetric subspace $\Sym^{n}(H)$, so that $\Pi_{\Sym^{n}(A)} \ot \Pi_{\Sym^{n}(B)}$ denotes the projection operator onto the corresponding locally-symmetric subspace, $\Sym^{n}(A) \ot \Sym^{n}(B)$.
Then, the vector $\ket{\Psi}_{AB}^{\ot n}$ lies in the locally-symmetric subspace if and only if for every $n \in \mathbb{N}$,
\begin{equation}
    \label{eq:pure_main_ineq}
    \ketbra{\Psi}_{AB}^{\ot n} \preceq \Pi_{\Sym^n(A)} \ot \Pi_{\Sym^n(B)}.
\end{equation}
Note that for this case, it actually suffices to check $n=2$.

To extend this intuition from pure states $\ket{\Psi}_{AB}$ to mixed states $\rho_{AB}$, we use the multinomial expansion given by \cref{prop:multinomial_expansion} and obtain the following result.
\begin{thm}
    \label{thm:stronger_forward}
    If $\rho_{AB} \in\SEPAB$, then for all $n \in \mathbb{N}$,
    \begin{equation}
        \label{eq:stronger_forward}
        \rho_{AB}^{\otimes n} \preceq \Lambda_{A^{n}B^{n}},
    \end{equation}
    where $\Lambda_{A^{n}B^{n}}$ is given by
    \begin{equation}
        \label{eq:lambda_upper_bound}
        \Lambda_{A^{n}B^{n}} = \sum_{\kappa \in \intpart(n,r)} \twirl_{S_n}\left(\Pi_{\Sym^{\kappa}(A)} \otimes \Pi_{\Sym^{\kappa}(B)}\right),
    \end{equation}
    with $r = d_{AB}^{2}$.
\end{thm}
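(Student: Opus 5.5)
Write $\rho_{AB}$ in the separable form of \cref{defn:separable} with exactly $r = d_{AB}^{2}$ terms, which \cref{eq:sep_rank_finite} allows (pad with zero weights if needed):
\begin{equation*}
\rho_{AB} = \sum_{i=1}^{r} p_i \ketbra{\alpha_i}_A \ot \ketbra{\beta_i}_B .
\end{equation*}
Then apply \cref{prop:multinomial_expansion} with $\sigma_i = \ketbra{\alpha_i}_A\ot\ketbra{\beta_i}_B$. This gives
\begin{equation*}
\rho_{AB}^{\ot n} = \sum_{\kappa\in\intpart(n,r)} M_p(\kappa)\, \twirl_{S_n}\Big(\bigotimes_{i=1}^{r} \ketbra{\alpha_i}_A^{\ot k_i}\ot\ketbra{\beta_i}_B^{\ot k_i}\Big),
\end{equation*}
where the $A$ and $B$ factors are reordered so that all $A^n$ registers come first. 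The permutation $T(\pi)$ acts simultaneously on the $A^n$ and $B^n$ factors, so twirling commutes with this regrouping.

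The key termwise observation is the pure-product intuition of \cref{eq:pure_main_ineq}, applied block by block. For each $i$, the vector $\ket{\alpha_i}^{\ot k_i}$ is a unit vector in $\Sym^{k_i}(A)$, so $\ketbra{\alpha_i}^{\ot k_i}\preceq \Pi_{\Sym^{k_i}(A)}$, and likewise on $B$. Tensor products of PSD inequalities $0\preceq X\preceq X'$, $0\preceq Y\preceq Y'$ give $X\ot Y\preceq X'\ot Y'$ (write $X'\ot Y'-X\ot Y=(X'-X)\ot Y'+X\ot(Y'-Y)$). Iterating this yields
\begin{equation*}
\bigotimes_i \ketbra{\alpha_i}^{\ot k_i}\ot\ketbra{\beta_i}^{\ot k_i}\preceq \Pi_{\Sym^{\kappa}(A)}\ot\Pi_{\Sym^{\kappa}(B)}.
\end{equation*}
Since $\twirl_{S_n}$ is a positive linear map (an average of unitary conjugations), it preserves this Loewner inequality.

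It remains to handle the coefficients. We have $M_p(\kappa)\in[0,1]$ for each $\kappa$, because these are probabilities summing to one. Each twirled projector term is PSD, so replacing $M_p(\kappa)$ by $1$ only increases the operator. Summing over $\kappa\in\intpart(n,r)$ then gives $\rho_{AB}^{\ot n}\preceq\Lambda_{A^nB^n}$.

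I expect no real obstacle here. The only points needing care are bookkeeping: checking that the reordering of $A$ and $B$ tensor factors is compatible with $\twirl_{S_n}$, and making sure the fixed choice $r=d_{AB}^2$ (rather than the separable rank of $\rho_{AB}$) is harmless. Zero-weight terms simply contribute $M_p(\kappa)=0$ whenever some $k_i>0$ with $p_i=0$, and the bound still uses all $\kappa$.
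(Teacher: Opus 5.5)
Your proposal is correct and follows the paper's proof step for step: pad the separable decomposition to $r=d_{AB}^2$ terms, apply the multinomial expansion, bound each block $\ketbra{\alpha_i}^{\ot k_i}$ by $\Pi_{\Sym^{k_i}(A)}$ (and likewise on $B$), and then replace $M_p(\kappa)$ by $1$. The extra bookkeeping you supply fills in details the paper leaves implicit: tensor products preserve the Loewner order, $\twirl_{S_n}$ is positive, and the $A/B$ regrouping is compatible with the twirl.
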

\begin{proof}
    Let $\rho_{AB} \in \SEPAB$ be a separable state with a decomposition as in \cref{eq:sep_r_ab}, namely
    \begin{equation}
        \rho_{AB} = \sum_{i=1}^{r} p_i \ketbra{\alpha_i}_{A} \otimes \ketbra{\beta_i}_{B}.
    \end{equation}
    By \cref{eq:sep_rank_finite}, we may take $r = d_{AB}^{2}$ without loss of generality.
    By the multinomial expansion formula from \cref{prop:multinomial_expansion}, the $n$-th tensor power of $\rho_{AB}$ can be written as
 \begin{equation}
        \rho_{AB}^{\ot n}
        = \sum_{\kappa \in \intpart(n,r)} M_{p}(\kappa) \twirl_{S_n}
        \left(\bigotimes_{i=1}^{r}\ketbra{\alpha_i}_A^{\otimes k_i} \ot \ketbra{\beta_i}_B^{\otimes k_i} \right), \\
    \end{equation}
    where $M_p(\kappa)$ denotes the multinomial probability associated with the probability vector $p = (p_1,\dots,p_r)$. We now use the operator bound
    \begin{equation}
        \bigotimes_{i=1}^{r}\ketbra{\alpha_i}_A^{\otimes k_i} \preceq \bigotimes_{i=1}^{r}\Pi_{\Sym^{k_i}(A)} = \Pi_{\Sym^{\kappa}(A)}.
    \end{equation}
    An analogously bound holds for the for the $\ket{\beta_i}_B$ factors.
    Therefore, we obtain 
    \begin{equation}
        \rho_{AB}^{\ot n} \preceq \sum_{\kappa \in \intpart(n,r)} M_{p}(\kappa) \twirl_{S_n}
        \left(\Pi_{\Sym^{\kappa}(A)} \ot \Pi_{\Sym^{\kappa}(B)} \right).
    \end{equation}
    Finally, since the multinomial distribution probability satisfies
    \begin{equation}
        \label{eq:multinomial_bound}
        M_p(\kappa) \leq 1,
    \end{equation}
    for every $\kappa \in \intpart(n,r)$, and $r = d_{AB}^2$, we obtain the operator upper bound $\Lambda_{A^{n}B^{n}}$ as claimed.
\end{proof}
Before proceeding, we remark the bound in \cref{thm:stronger_forward} could be improved further.
For example, in \cref{eq:multinomial_bound} we bounded the multinomial probability $M_p(\kappa)$ by one. 
Alternatively, one could also bound $M_{p}(k_1,\ldots,k_r)$ by the maximum likelihood probability, $M_{q^{*}}(k_1,\ldots,k_r)$ where $q^{*} = (\frac{k_1}{n},\ldots, \frac{k_r}{n})$, or even group the weak compositions $\kappa$ related by permutation to obtain a tighter version of \cref{thm:stronger_forward}.

Nevertheless, the operator $\Lambda_{A^{n}B^{n}}$ admits a straightforward comparison with the de Finetti operator $\Omega_{A^{n}B^{n}}$ from \cref{thm:forward} in \cref{sec:random_sep_states}, which is sufficient for our purposes.
By direct comparison of their formulas, we see that $\Lambda_{A^{n}B^{n}}$ and $\Omega_{A^{n}B^{n}}$ differ only by normalization factors involving dimensions of the symmetric subspaces.
\begin{lem}
    \label{lem:omega_is_poly_lambda}
    For all $n$, the operators $\Omega_{A^{n}B^{n}}$ and $\Lambda_{A^{n}B^{n}}$ satisfy
    \begin{equation}
        \Omega_{A^{n}B^{n}} \preceq \Lambda_{A^{n}B^{n}} \preceq f_{AB}(n) \Omega_{A^{n}B^{n}},
    \end{equation}
    where 
    \begin{equation}
        \label{eq:f_definition}
        f_{AB}(n) = \dsym_{R}(n)\dsym_{A}\left(\left\lceil\frac{n}{d_{R}}\right\rceil\right)^{d_{R}}\dsym_{B}\left(\left\lceil\frac{n}{d_{R}}\right\rceil\right)^{d_{R}},
    \end{equation}
    with $d_{R} = d_{AB}^2$ and $\dsym_{H}(n) = \dim \Sym^{n}(H)$.
\end{lem}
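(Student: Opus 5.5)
The plan is a term-by-term comparison of the two formulas. By \cref{lem:sep_de_finetti_formula} and \cref{thm:stronger_forward}, both operators are sums over the same index set $\intpart(n,r)$, with $r = d_R = d_{AB}^2$, of the positive semidefinite operators
\begin{equation}
    X_\kappa \coloneqq \twirl_{S_n}\left(\Pi_{\Sym^{\kappa}(A)} \ot \Pi_{\Sym^{\kappa}(B)}\right) \succeq 0 .
\end{equation}
Positivity of $X_\kappa$ holds because a tensor product of projections is positive semidefinite, and $\twirl_{S_n}$ is a convex combination of unitary conjugations. Concretely,
\begin{equation}
    \Lambda_{A^nB^n} = \sum_{\kappa} X_\kappa, \qquad \Omega_{A^nB^n} = \sum_{\kappa} c_\kappa X_\kappa, \qquad c_\kappa \coloneqq \frac{1}{\dsym_R(n)\dsym_A(\kappa)\dsym_B(\kappa)} .
\end{equation}
It therefore suffices to bound the scalars $c_\kappa$ uniformly from above and below. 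If $a \le c_\kappa \le b$ for all $\kappa$, then summing the positive operators gives $a\Lambda \preceq \Omega \preceq b\Lambda$.

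For the lower inequality $\Omega \preceq \Lambda$, I only need $c_\kappa \le 1$. This holds because every dimension is at least one, since $\dsym_H(k) = \binom{k+d_H-1}{k} \ge 1$, with the convention $\Sym^0(H) = \mathbb C$.

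For the upper inequality $\Lambda \preceq f_{AB}(n)\Omega$, I need $c_\kappa^{-1} \le f_{AB}(n)$, that is,
\begin{equation}
    \dsym_R(n)\,\dsym_A(\kappa)\,\dsym_B(\kappa) \le f_{AB}(n) \quad \text{for all } \kappa \in \intpart(n,r).
\end{equation}
I will invoke \cref{lem:blockwise_bound} once for $H = A$ and once for $H = B$, with $r = d_R$ parts. This gives $\dsym_A(\kappa) \le \dsym_A(\lceil n/d_R\rceil)^{d_R}$, and the same bound for $B$. Multiplying by $\dsym_R(n)$ yields exactly the $f_{AB}(n)$ of \cref{eq:f_definition}. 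Then $X_\kappa \preceq f_{AB}(n)\, c_\kappa X_\kappa$ for each $\kappa$, and summing gives the claim.

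There is no real obstacle. The only points to check are that each $X_\kappa$ is positive, so that scalar comparisons lift to the Loewner order, and that the block-wise dimension bound of \cref{lem:blockwise_bound} applies with $r = d_R$ parts, including empty parts $k_i = 0$. Both are immediate.
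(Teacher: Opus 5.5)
Your proposal is correct and is essentially the paper's own proof. Both arguments write $\Lambda_{A^nB^n}$ and $\Omega_{A^nB^n}$ as sums of the same positive operators $\twirl_{S_n}(\Pi_{\Sym^{\kappa}(A)}\ot\Pi_{\Sym^{\kappa}(B)})$, with coefficients that differ only by dimension factors; these factors are at least one for the lower bound, and \cref{lem:blockwise_bound} applied to $A$ and $B$ with $r=d_R$ parts gives the upper bound (you also spell out the positivity of each summand, which the paper leaves implicit).
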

\begin{proof}
    These inequalities follow by noting that the definition for $\Lambda_{A^{n}B^{n}}$ in \cref{eq:lambda_upper_bound} and the formula for $\Omega_{A^{n}B^{n}}$ in \cref{eq:omega_formula}
    differ only by factors involving dimensions of symmetric subspaces.
    The first inequality holds because each of these dimension factors are at least one, i.e., $\dsym_R(n), \dsym_{A}(\kappa), \dsym_{B}(\kappa) \geq 1$. 
    
    For the second inequality, \cref{lem:blockwise_bound} implies, for any $\kappa = (k_1,\ldots, k_r) \in \intpart(n,r)$, that
    \begin{equation}
        \dsym_{H}(\kappa) \leq \dsym_{H}\left(\left\lceil\frac{n}{r}\right\rceil\right)^r. 
    \end{equation}
    Therefore, by letting $f_{AB}(n)$ be as in \cref{eq:f_definition} with $r = d_{R} = d_{AB}^2$, we see
    \begin{align}
    \begin{split}
        f_{AB}(n) \Omega_{A^{n}B^{n}}
        &= \underbrace{\frac{\dsym_{R}(n)}{ \dsym_{R}(n)}}_{=1} \sum_{\kappa \in \intpart(n,r)} 
        \underbrace{\frac{\dsym_{A}\left(\left\lceil\frac{n}{r}\right\rceil\right)^r}{\dsym_{A}(\kappa)}}_{\geq 1}
        \underbrace{\frac{\dsym_{B}\left(\left\lceil\frac{n}{r}\right\rceil\right)^r}{\dsym_{B}(\kappa)}}_{\geq 1}
        \twirl_{S_n}\left(\Pi_{\Sym^{\kappa}(A)} \ot \Pi_{\Sym^{\kappa}(B)}\right) \\  
        &\succeq \sum_{\kappa \in \intpart(n,r)} \twirl_{S_n}\left(\Pi_{\Sym^{\kappa}(A)} \ot \Pi_{\Sym^{\kappa}(B)}\right)
        = \Lambda_{A^{n}B^{n}}.
    \end{split}
    \end{align}
    This proves the second inequality.
\end{proof}
To prove completeness of our criteria in subsequent sections, we will additionally exploit that the prefactor $f_{AB}(n)$ in \cref{lem:omega_is_poly_lambda} grows at most polynomially in $n$.
To see this clearly, it suffices to consider the following crude bound:
\begin{align}
    \label{eq:f_n_poly_bound}
    \begin{split}
    f_{AB}(n) 
    &= \dsym_{R}(n)\dsym_{A}\left(\left\lceil\frac{n}{d_{R}}\right\rceil\right)^{d_{R}}\dsym_{B}\left(\left\lceil\frac{n}{d_{R}}\right\rceil\right)^{d_{R}}, \\
    &\leq \dsym_{R}(n)\dsym_{A}\left(n\right)^{d_{R}}\dsym_{B}\left(n\right)^{d_{R}}, \\
    &\leq \dsym_{R}(n)\dsym_{AB}\left(n\right)^{d_{R}}, \\
    &\leq (n+1)^{d_{R}-1}(n+1)^{d_{R}(d_{AB}-1)}, \\
    &= (n+1)^{d_{AB}^{3}-1},
    \end{split}
\end{align}
where we again use that $d_{R} = d_{AB}^2$ together with the simple bound $\dsym_H(n) \leq (n+1)^{d_{H}-1}$ \cite{harrow2013church,christandl2009postselection}.
Of course, in practice, the prefactor $f_{AB}(n)$ is typically smaller than the bound provided by \cref{eq:f_n_poly_bound}.

With \cref{lem:omega_is_poly_lambda} established, we now observe that \cref{thm:stronger_forward} implies \cref{thm:forward}, which we restate and prove now.
\forward*
\begin{proof}
    If $\rho_{AB}$ is separable, then \cref{thm:stronger_forward} implies $\rho_{AB}^{\otimes n} \preceq \Lambda_{A^nB^n}$ and therefore \cref{lem:omega_is_poly_lambda} implies,
    \begin{equation}
        \rho_{AB}^{\otimes n} \preceq \Lambda_{A^nB^n}\preceq f_{AB}(n)\Omega_{A^nB^n}.
    \end{equation}
    This observation, combined with the bound on $f_{AB}(n)$ in \cref{eq:f_n_poly_bound} is the statement of \cref{thm:forward}.
\end{proof}

\subsection{Bounding fidelity of separability}

In this section, we establish \cref{thm:completeness} which provides a bound on the logarithmic fidelity of separability, $G(\rho_{AB}) = - \log F_s(\rho_{AB})$, whenever $\rho_{AB}$ satisfies the operator inequality $\rho_{AB}^{\otimes n} \leq f_{AB}(n) \Omega_{A^{n}B^{n}}$ in \cref{thm:forward}.
The proof relies on \cref{lem:likelihood_ratio_fidelity}, which compares the fidelity between any two states $\rho$ and $\sigma$ using a sequence of measurement effects $M_n$ depending only on $\rho$.
For convenience, we restate \cref{thm:completeness} below.
\completeness*
\begin{proof}
    For any state $\rho_{AB} \in \mathcal S(A\otimes B)$, by \cref{lem:likelihood_ratio_fidelity} there exists a sequence of measurements $M_{A^nB^n} \succeq 0$ such that, for all $n \in \mathbb N$ and all $\sigma_{AB} \in \mathcal S(A\otimes B)$,
    \begin{equation}
        \mathcal R_{\sigma_{AB}/\rho_{AB}}(M_{A^nB^n})= \frac{\Tr(M_{A^nB^n} \sigma_{AB}^{\otimes n})}{\Tr(M_{A^nB^n} \rho_{AB}^{\otimes n})} \leq C(\rho_{AB})^{-1} F(\rho_{AB}, \sigma_{AB})^{n - d_{AB}^2},
    \end{equation}
    where $C(\rho_{AB})$ is the constant defined in \cref{eq:staircase_constant}.
    Maximizing over all separable states $\sigma_{AB} \in \SEPAB$ gives
    \begin{equation}
        \sup_{\sigma_{AB} \in \SEP}\mathcal R_{\sigma_{AB}/\rho_{AB}}(M_{A^nB^n}) \leq C(\rho_{AB})^{-1} F_{s}(\rho_{AB})^{n - d_{AB}^2},
    \end{equation}
    where $F_{s}(\rho_{AB}) = \sup_{\sigma_{AB} \in \SEP} F(\rho_{AB}, \sigma_{AB})$ is the fidelity of separability of $\rho_{AB}$.
    Now assume that $\rho_{AB}$ satisfies the operator inequality $\rho_{AB}^{\otimes n} \leq f_{AB}(n)\Omega_{A^nB^n}$ for some $n > d_{AB}^2$. 
    Then, applying the measurement effect $M_{A^{n}B^{n}}$ yields
    \begin{equation}
        \Tr(M_{A^nB^n}\rho_{AB}^{\otimes n}) \leq  f_{AB}(n) \Tr(M_{A^nB^n}  \Omega_{A^{n}B^{n}}) \leq  f_{AB}(n)\sup_{\sigma_{AB}\in \SEP}   \Tr(M_{A^nB^n} \sigma_{AB}^{\otimes n}).
    \end{equation}
    This implies the lower bound
    \begin{equation}
        \sup_{\sigma_{AB} \in \SEP} \mathcal R_{\sigma_{AB}/\rho_{AB}}(M_{A^nB^n}) \geq \frac{1}{ f_{AB}(n)}.
    \end{equation}
    Combining these two bounds on the maximum likelihood ratio, and noting $n-d_{AB}^{2} > 0$, yields
    \begin{equation}
        F_{s}(\rho_{AB}) \geq \left(\frac{1}{C(\rho_{AB})^{-1}  f_{AB}(n)}\right)^{\frac{1}{n-d_{AB}^{2}}},
    \end{equation}
    and therefore
    \begin{equation}
        G(\rho_{AB}) = - \log F_{s}(\rho_{AB}) \leq \frac{\log (C(\rho_{AB})^{-1}  f_{AB}(n))}{n-d_{AB}^2}.
    \end{equation}
\end{proof}
From \cref{thm:completeness}, we conclude that the criteria in \cref{thm:forward} is complete: any state satisfying the tensor-power bound for arbitrarily large $n$ must have vanishing logarithmic fidelity of separability, and hence must be separable.
\begin{cor}
    \label{cor:reverse}
    If $\rho_{AB} \in \mathcal S(A \otimes B)$ satisfies $\rho_{AB}^{\otimes n} \preceq f_{AB}(n)\Omega_{A^{n} B^{n}}$ for every $n \in \mathbb N$, then $\rho_{AB} \in \SEPAB$.
\end{cor}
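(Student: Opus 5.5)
The plan is to derive \cref{cor:reverse} directly from the quantitative bound in \cref{thm:completeness}, combined with the polynomial growth of $f_{AB}(n)$ from \cref{eq:f_n_poly_bound} and the faithfulness of $G$.

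Fix $\rho_{AB}$ satisfying the hypothesis for every $n$. For each $n > d_{AB}^{2}$, \cref{thm:completeness} gives
\begin{equation*}
    0 \leq G(\rho_{AB}) \leq \frac{\log C(\rho_{AB})^{-1} + \log f_{AB}(n)}{n - d_{AB}^{2}} \leq \frac{\log C(\rho_{AB})^{-1} + (d_{AB}^{3}-1)\log(n+1)}{n-d_{AB}^{2}}.
\end{equation*}
The constant $C(\rho_{AB})$ is independent of $n$. It is strictly positive because it is a product of powers of the nonzero eigenvalues of $\rho_{AB}$, so $\log C(\rho_{AB})^{-1}$ is a finite nonnegative number. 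The right-hand side therefore behaves like $O((\log n)/n)$ and tends to $0$ as $n\to\infty$. Since $G(\rho_{AB})$ does not depend on $n$, letting $n\to\infty$ gives $G(\rho_{AB}) = 0$, that is, $F_s(\rho_{AB}) = 1$.

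It remains to pass from $F_s(\rho_{AB})=1$ to separability. This is the only point needing care, because $F_s$ is defined as a supremum. I will argue that $\SEPAB$ is compact: it is the convex hull of the compact set of pure product states in finite dimension, and by Carath\'eodory the convex hull of a compact set is compact. The fidelity $F(\rho_{AB},\cdot)$ is continuous, so the supremum is attained at some $\sigma^\ast_{AB}\in\SEPAB$ with $F(\rho_{AB},\sigma^\ast_{AB})=1$. By \cref{eq:faithful_fidelity}, $\rho_{AB}=\sigma^\ast_{AB}$, hence $\rho_{AB}\in\SEPAB$.

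This is just the faithfulness statement for $G$ already recorded in the paper, so I expect no real obstacle. The only steps to state explicitly are the $n$-independence and positivity of $C(\rho_{AB})$, and attainment of the supremum.
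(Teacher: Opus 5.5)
Your proposal is correct and follows the paper's proof: for each $n > d_{AB}^{2}$ you apply \cref{thm:completeness}, use the polynomial bound on $f_{AB}(n)$ to get $G(\rho_{AB}) = O((\log n)/n) \to 0$, and conclude by faithfulness of $G$. Your compactness-and-attainment argument spells out the faithfulness step, which the paper takes for granted by defining $F_{\mathrm{s}}$ as a maximum.
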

\begin{proof}
    By \cref{thm:completeness}, we have $G(\rho_{AB}) \leq O(\log n / n)$, since the pre-factor $f_{AB}(n)$ grows polynomially in $n$ by \cref{eq:f_n_poly_bound}.
    Taking the limit $n \to \infty$, we obtain $G(\rho_{AB}) = 0$. Since $G(\rho_{AB})$ is faithful, this implies that $\rho_{AB} \in \SEPAB$.
\end{proof}

\subsection{Entanglement witnesses}
\label{sec:poly_ent_witnesses}
The main application of our results is the construction of a complete family of polynomial entanglement witnesses that are, in addition, invariant under local unitaries. This establishes \cref{cor:complete_witnesses}.

We begin by observing that the entanglement criteria provided by \cref{thm:stronger_forward} are operator inequalities of the form $\rho_{AB}^{\otimes n} \preceq \Lambda_{A^nB^n}$, for some operator $\Lambda_{A^nB^n}$. As such, they are \textit{not} polynomial entanglement witnesses, which instead correspond to polynomial inequalities of the form
\begin{equation}
    p(\rho_{AB}) = \Tr(W_{A^{\ell}B^{\ell}} \rho_{AB}^{\otimes \ell}) \geq 0, 
\end{equation}
where $W_{A^{\ell}B^{\ell}}$ is a Hermitian operator on $(A\otimes B)^{\otimes \ell}$ for some $\ell \in \mathbb N$.
However, it is possible to extract polynomial entanglement witnesses from operator inequalities in several ways.
The simplest construction is to take any positive semidefinite operator $M_{A^nB^n} \succeq 0$ on $(A\otimes B)^{\otimes n}$, and then to let
\begin{equation}
    W_{A^{n}B^{n}} = \Tr(M_{A^nB^n}\Lambda_{A^{n}B^{n}})I_{AB}^{\ot n} - M_{A^nB^n}.
\end{equation}
The operator inequality $\rho_{AB}^{\otimes n} \preceq \Lambda_{A^nB^n}$ for $\rho_{AB} \in \SEPAB$ then implies that $\Tr(W_{A^{n}B^{n}} \rho_{AB}^{\ot n})$ is non-negative, because
\begin{equation}
    \Tr(W_{A^{n}B^{n}} \rho_{AB}^{\ot n}) = \Tr(M_{A^nB^n}(\Lambda_{A^{n}B^{n}}-\rho_{AB}^{\ot n})) \geq 0.
\end{equation}
One consequence of \cref{thm:stronger_forward} and \cref{thm:completeness} is that \textit{every} entangled state $\rho_{AB} \not \in \SEPAB$ can be detected by a witness of this form, for some sufficiently large $n$ and some operator $M_{A^nB^n} \succeq 0$.
Indeed, it suffices to choose $M_{A^nB^n}$ to be any measurement effect capable of tomographically distinguishing $\rho_{AB}$ from other states $\sigma_{AB}$, for instance, the $\rho_{AB}$-dependent measurement effects provided by \cref{lem:likelihood_ratio_fidelity}.
For the purposes of \textit{experimental} entanglement detection, however, this approach has a drawback: when $\rho_{AB}$ is unknown, it is unclear which measurement operator $M_{A^nB^n}$ and thus which witnesses $W_{A^{n}B^{n}}$ suffices to certify the entanglement of $\rho_{AB}$.

To address this issue, we adopt a different approach based on Sylvester's criterion for operator positivity, which leads to a family of entanglement witnesses that is both independent of the state $\rho_{AB}$ and also complete.
This criterion states that an $\ell \times \ell$ Hermitian matrix $X$ is positive semidefinite, $X \succeq 0$, if and only if, for every $m \in \{1,\ldots, \ell\}$ the polynomial
\begin{equation}
    E_m(X) = \Tr(P_{\wedge^{m}\mathbb C^{\ell}} X^{\otimes m}) 
\end{equation}
is nonnegative, $E_m(X)\geq 0$. Here, $P_{\wedge^{m}\mathbb C^{\ell}}$
denotes the projection onto the \textit{antisymmetric} subspace $\wedge^{m}\mathbb C^{\ell}$ of $(\mathbb C^{\ell})^{\otimes m}$ (see \cref{prop:sylvesters} in \cref{sec:sylvester} for a derivation).

The key to converting the operator inequality in \cref{thm:stronger_forward} into polynomial entanglement witnesses is to evaluate $E_m$ on $X = \Lambda_{A^{n}B^{n}} - \rho_{AB}^{\otimes n}$. Thus, for each $n \in \mathbb N$ and $m \in \{1, \ldots, d_{AB}^{n}\}$, we define the degree-$nm$ polynomial 
\begin{equation}
    p_{n,m}(\rho_{AB}) \coloneqq E_m(\Lambda_{A^{n}B^{n}} - \rho_{AB}^{\ot n}).
\end{equation}
Since $E_m(\Lambda_{A^{n}B^{n}} - \rho_{AB}^{\ot n})$ is a degree-$nm$ polynomial in the coefficients of $\rho_{AB}$, there necessarily exists a Hermitian operator $W_{A^{nm}B^{nm}}^{n,m}$ on $(A\otimes B)^{\otimes nm}$, independent of $\rho_{AB}$, such that for all $\rho_{AB}$ we have,
\begin{equation}
    \label{eq:sylvester_to_witness}
    p_{n,m}(\rho_{AB}) = \Tr(W_{A^{nm}B^{nm}}^{n,m} \rho_{AB}^{\otimes nm}).
\end{equation}
For an explicit formula for $W_{A^{nm}B^{nm}}^{n,m}$ as a function of $\Lambda_{A^{n}B^{n}}$ and $m$, we refer the reader to 
\cref{lem:witness_from_lambda} in \cref{sec:sylvester}.
This method of converting operator inequalities into polynomial inequalities is inspired by its application to $X = \rho_{AB}^{T_{B}}$, where it is used to characterize PPT-ness via polynomial entanglement witnesses \cite{elben2020mixed,neven2021symmetry}.

In any case, we obtain another characterization of $\SEPAB$, now in terms of a countable family of polynomial entanglement witnesses.
\begin{cor}
    \label{cor:complete_entanglement_witnesses}
    A state $\rho_{AB}$ belongs to $\SEPAB$ if and only if, for all $n,m \in \mathbb N$ with $m \leq d_{AB}^{n}$,
    \begin{equation}
        p_{n,m}(\rho_{AB}) \geq 0,
    \end{equation}
    where $p_{n,m}(\rho_{AB}) = E_m(\Lambda_{A^{n}B^{n}} - \rho_{AB}^{\ot n})$.
\end{cor}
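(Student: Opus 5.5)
The plan is to prove the two directions separately, each reducing to results already stated: Sylvester's criterion, \cref{thm:stronger_forward}, \cref{lem:omega_is_poly_lambda}, and \cref{thm:completeness}.

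\emph{Forward direction.} Let $\rho_{AB}\in\SEPAB$. By \cref{thm:stronger_forward}, for every $n$ the Hermitian operator $X_n \coloneqq \Lambda_{A^nB^n}-\rho_{AB}^{\otimes n}$ on $(A\otimes B)^{\otimes n}\simeq\mathbb C^{d_{AB}^n}$ is positive semidefinite. The easy direction of Sylvester's criterion (\cref{prop:sylvesters}) then gives $E_m(X_n)\geq 0$ for all $m\leq d_{AB}^n$. To see this directly, $E_m(X)=\Tr(P_{\wedge^m}X^{\otimes m})$ equals the $m$-th elementary symmetric polynomial of the eigenvalues of $X$, and this is nonnegative when all eigenvalues are nonnegative. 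Hence $p_{n,m}(\rho_{AB})\geq 0$.

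\emph{Converse direction.} Suppose $p_{n,m}(\rho_{AB})\geq 0$ for all admissible $n,m$. Fix $n$. Since $E_m(X_n)\geq 0$ for every $m\in\{1,\ldots,d_{AB}^n\}$, the nontrivial direction of Sylvester's criterion gives $X_n\succeq 0$. The justification is that the characteristic polynomial $\det(tI+X_n)=\sum_m E_m(X_n)t^{\ell-m}$ has all coefficients nonnegative, so it has no positive roots, and therefore $X_n$ has no negative eigenvalue. This gives $\rho_{AB}^{\otimes n}\preceq\Lambda_{A^nB^n}$. Applying the second inequality of \cref{lem:omega_is_poly_lambda} yields $\rho_{AB}^{\otimes n}\preceq f_{AB}(n)\Omega_{A^nB^n}$, and this holds for every $n$. \cref{cor:reverse}, which rests on \cref{thm:completeness}, then gives $\rho_{AB}\in\SEPAB$.

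\emph{Main obstacle.} The only point needing care is making sure Sylvester's criterion is used in its full "all $m$" form, which is an eigenvalue-based statement rather than a statement about leading principal minors. With $m$ ranging over all of $\{1,\ldots,d_{AB}^n\}$, the coefficient argument above settles it. It is also worth noting that $\rho_{AB}$ enters only through $\rho_{AB}^{\otimes n}$, so no condition on $n$ such as $n>d_{AB}^2$ is needed. That condition is handled inside \cref{cor:reverse} by letting $n\to\infty$.
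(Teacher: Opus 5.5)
Your proof is correct and matches the argument the paper intends (it gives no separate proof, but the surrounding text and \cref{cor:reverse} supply exactly these steps). The forward direction combines \cref{thm:stronger_forward} with the easy direction of \cref{prop:sylvesters}; the converse uses the nontrivial direction of \cref{prop:sylvesters}, via the same characteristic-polynomial coefficient argument the paper uses, to recover $\rho_{AB}^{\otimes n}\preceq\Lambda_{A^nB^n}$ for every $n$, and then \cref{lem:omega_is_poly_lambda} and \cref{cor:reverse} give separability.
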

Moreover, the polynomial entanglement witnesses in \cref{cor:complete_entanglement_witnesses} are local-unitary invariant. Indeed, this follows from the fact that the operator $\Lambda_{A^{n}B^{n}}$ from \cref{thm:stronger_forward} is invariant under local unitaries, and the polynomial $E_m$ is $U(d_{AB})$-invariant.

\section{Discussion}
\label{sec:discussion}

In this work, we provided a systematic characterization of the set of separable states in arbitrary local dimensions $(d_{A}, d_{B})$, both in terms of operator inequalities (\cref{thm:stronger_forward}) and in terms of entanglement witnesses (\cref{cor:complete_entanglement_witnesses}). We now address some of the advantages and limitations of our results, and outline directions for future investigation.

First and foremost, \cref{cor:complete_entanglement_witnesses} involves a countably-infinite family of polynomials of arbitrarily high degree in $\rho_{AB}$.
As there are intrinsic difficulties in evaluating high-degree polynomials or in measuring many copies of a quantum state, evaluating the entanglement witnesses in \cref{cor:complete_entanglement_witnesses}, or the operator inequalities in \cref{thm:stronger_forward}, becomes infeasible for larger and larger values of $n$. 
Indeed, from the definition of $\Lambda_{A^{n}B^{n}}$ in \cref{eq:lambda_upper_bound} it can be seen that the operator inequality in \cref{thm:stronger_forward} becomes a \textit{trivial} inequality when $n \leq d_{AB}^2$, which implies that the number of copies $n$ must always be taken larger than the separable rank $r$ which, in turn, can be as large as $d_{AB}^2$.\footnote{If $n, r \in \mathbb N$ are such that $n \leq r$ then $\intpart(n,r)$ contains the weak composition $\kappa = (\underbrace{1,\ldots,1}_{n},\underbrace{0,\ldots,0}_{r-n})$ for which $\Sym^{\kappa}(H) \equiv H^{\otimes n}$ and then $\Lambda_{A^{n}B^{n}} \succeq I_{AB}^{\ot n} \succeq \rho_{AB}^{\otimes n}$ holds for \textit{all} bipartite states $\rho_{AB}$.}
Consequently, we anticipate that families of low-degree entanglement witnesses, such as in \cite{miller2026detecting}, will remain more efficient and practically relevant for experimental entanglement detection, despite their incompleteness in higher dimensions.

Fortunately, for fixed dimensions $(d_{A}, d_{B})$ the set of separable states is \textit{semialgebraic}, meaning the set $\SEPAB$ can, in theory, be expressed using a \textit{finite} number of polynomial inequalities in the coefficients of $\rho_{AB}$.
In other words, for fixed local dimensions $(d_{A}, d_{B})$, theoretically there exists a \textit{finite} collection of polynomial entanglement witnesses which is complete for detecting entanglement.
In \cref{sec:semialgebraic}, we additionally prove that this semialgebraic property of $\SEPAB$ extends to the orbit space $\SEPAB/(U(d_{A}) \times U(d_{B}))$, meaning that there also exists a finite and complete family of entanglement witnesses consisting entirely of local-unitary invariant
polynomials.
Unfortunately, while algorithms for finding such polynomials exist \cite{jirstrand1995cylindrical}, they are computationally prohibitive -- so finding a complete family of entanglement witnesses that is \textit{finite}, even for $(d_{A}, d_{B}) = (3,3)$ or $(2,4)$, remains a challenging open problem.

Third, while we have only stated our results in terms of bipartite separability, our techniques can be readily extended to address the case of \textit{fully-separable} multipartite states.
For example, using the techniques in this paper one can see that a tripartite state $\rho_{ABC}$ is fully-separable,
\begin{equation}
    \rho_{ABC} = \sum_{i=1}^{r} p_i \ketbra{\alpha_i}_{A} \ot \ketbra{\beta_i}_{B} \otimes \ketbra{\gamma_i}_{C},
\end{equation}
with $r = d_{ABC}^2$, if and only if, for all $n \in \mathbb N$, it satisfies $\rho_{ABC}^{\ot n} \preceq \Lambda_{A^nB^nC^n}$, where
\begin{equation}
    \Lambda_{A^nB^nC^n} = \sum_{\kappa \in \intpart(n,r)} \twirl_{S_n} \left(\Pi_{\Sym^{\kappa}(A)} \ot \Pi_{\Sym^{\kappa}(B)} \ot \Pi_{\Sym^{\kappa}(C)}\right).
\end{equation}
While our methods can also be adapted to handle some additionally more nuanced notions of multipartite entanglement such as genuine multipartite entanglement \cite{dur1999separability,dur2000three,guhne2010separability}, it is presently unclear how to extend the techniques of this paper to derive an analogous complete characterization for the set of bi-separable states~\cite{guhne2009entanglement}, or $k$-partite states which are accessible via local operations, global shared randomness, and $(k-1)$-shared entanglement (LOSRSE) as proposed in \cite{schmid2023understanding}\footnote{Although, a promising approach would be to combine the techniques developed here with both the fully quantum inflation technique \cite{smith2026fully} and the quantum marginal inequalities from \cite{fraser2022sufficient}.}.

Finally, one may be concerned that the inequalities in \cref{thm:stronger_forward} and \cref{cor:complete_entanglement_witnesses} involve $(d_{AB})^{n}$-dimensional matrices, which grow rapidly with increasing $n$.
To address this issue, in \cref{sec:fourier_transform}, we show how to block-diagonalize \cref{thm:stronger_forward} into a polynomial number of blocks, of polynomial size, using the $S_n$-invariance of \cref{thm:stronger_forward}.
Moreover, \cref{sec:fourier_transform} provides a manifestly local-unitary invariant formulation of \cref{thm:stronger_forward}, in the form of \cref{cor:fourier_transform_ineq}.

\subsection*{Acknowledgments}
TCF and VB acknowledge financial support from VILLUM FONDEN via the QMATH Centre of Excellence (Grant No. 10059) and the Novo Nordisk Foundation (grant NNF20OC0059939 “Quantum for Life”).
RR acknowledges financial support from the ERC grant GIFNEQ 101163938.
FH was funded in whole or in part by the National Science Centre, Poland 2024/54/E/ST2/00451
and by the Polish National Agency for Academic Exchange under the Strategic Partnership Programme grant BNI/PST/2023/1/00013/U/00001.
For the purpose of Open Access,
the authors have applied a CC-BY public copyright licence to any Author Accepted Manuscript (AAM) version arising from this submission.
The authors thank Alexander Blomenhofer, Matthias Christandl, Alexander Mueller-Hermes, Harold Nieuwboer, Elias Theil and Michael M. Wolf for helpful discussions.

\bibliography{references}


\begin{thebibliography}{52}
\ifx \bisbn   \undefined \def \bisbn  #1{ISBN #1}\fi
\ifx \binits  \undefined \def \binits#1{#1}\fi
\ifx \bauthor  \undefined \def \bauthor#1{#1}\fi
\ifx \batitle  \undefined \def \batitle#1{#1}\fi
\ifx \bjtitle  \undefined \def \bjtitle#1{#1}\fi
\ifx \bvolume  \undefined \def \bvolume#1{\textbf{#1}}\fi
\ifx \byear  \undefined \def \byear#1{#1}\fi
\ifx \bissue  \undefined \def \bissue#1{#1}\fi
\ifx \bfpage  \undefined \def \bfpage#1{#1}\fi
\ifx \blpage  \undefined \def \blpage #1{#1}\fi
\ifx \burl  \undefined \def \burl#1{\textsf{#1}}\fi
\ifx \doiurl  \undefined \def \doiurl#1{\url{https://doi.org/#1}}\fi
\ifx \betal  \undefined \def \betal{\textit{et al.}}\fi
\ifx \binstitute  \undefined \def \binstitute#1{#1}\fi
\ifx \binstitutionaled  \undefined \def \binstitutionaled#1{#1}\fi
\ifx \bctitle  \undefined \def \bctitle#1{#1}\fi
\ifx \beditor  \undefined \def \beditor#1{#1}\fi
\ifx \bpublisher  \undefined \def \bpublisher#1{#1}\fi
\ifx \bbtitle  \undefined \def \bbtitle#1{#1}\fi
\ifx \bedition  \undefined \def \bedition#1{#1}\fi
\ifx \bseriesno  \undefined \def \bseriesno#1{#1}\fi
\ifx \blocation  \undefined \def \blocation#1{#1}\fi
\ifx \bsertitle  \undefined \def \bsertitle#1{#1}\fi
\ifx \bsnm \undefined \def \bsnm#1{#1}\fi
\ifx \bsuffix \undefined \def \bsuffix#1{#1}\fi
\ifx \bparticle \undefined \def \bparticle#1{#1}\fi
\ifx \barticle \undefined \def \barticle#1{#1}\fi
\bibcommenthead
\ifx \bconfdate \undefined \def \bconfdate #1{#1}\fi
\ifx \botherref \undefined \def \botherref #1{#1}\fi
\ifx \url \undefined \def \url#1{\textsf{#1}}\fi
\ifx \bchapter \undefined \def \bchapter#1{#1}\fi
\ifx \bbook \undefined \def \bbook#1{#1}\fi
\ifx \bcomment \undefined \def \bcomment#1{#1}\fi
\ifx \oauthor \undefined \def \oauthor#1{#1}\fi
\ifx \citeauthoryear \undefined \def \citeauthoryear#1{#1}\fi
\ifx \endbibitem  \undefined \def \endbibitem {}\fi
\ifx \bconflocation  \undefined \def \bconflocation#1{#1}\fi
\ifx \arxivurl  \undefined \def \arxivurl#1{\textsf{#1}}\fi
\csname PreBibitemsHook\endcsname

\bibitem[\protect\citeauthoryear{Gurvits}{2004}]{gurvits2004classical}
\begin{barticle}
\bauthor{\bsnm{Gurvits}, \binits{L.}}:
\batitle{Classical complexity and quantum entanglement}.
\bjtitle{Journal of Computer and System Sciences}
\bvolume{69}(\bissue{3}),
\bfpage{448}--\blpage{484}
(\byear{2004})
\doiurl{10.1016/j.jcss.2004.06.003}
\end{barticle}
\endbibitem

\bibitem[\protect\citeauthoryear{Aubrun and Szarek}{2015}]{aubrun2015dvoretzky}
\begin{barticle}
\bauthor{\bsnm{Aubrun}, \binits{G.}},
\bauthor{\bsnm{Szarek}, \binits{S.}}:
\batitle{Dvoretzky's theorem and the complexity of entanglement detection}.
\bjtitle{arXiv:1510.00578}
(\byear{2015})
\doiurl{10.48550/arXiv.1510.00578}
\end{barticle}
\endbibitem

\bibitem[\protect\citeauthoryear{Peres}{1996}]{peres1996separability}
\begin{barticle}
\bauthor{\bsnm{Peres}, \binits{A.}}:
\batitle{Separability criterion for density matrices}.
\bjtitle{Physical Review Letters}
\bvolume{77}(\bissue{8}),
\bfpage{1413}
(\byear{1996})
\doiurl{10.1103/PhysRevLett.77.1413}
\end{barticle}
\endbibitem

\bibitem[\protect\citeauthoryear{Doherty et~al.}{2004}]{doherty2004complete}
\begin{barticle}
\bauthor{\bsnm{Doherty}, \binits{A.C.}},
\bauthor{\bsnm{Parrilo}, \binits{P.A.}},
\bauthor{\bsnm{Spedalieri}, \binits{F.M.}}:
\batitle{Complete family of separability criteria}.
\bjtitle{Physical Review A}
\bvolume{69}(\bissue{2}),
\bfpage{022308}
(\byear{2004})
\doiurl{10.1103/physreva.69.022308}
\end{barticle}
\endbibitem

\bibitem[\protect\citeauthoryear{Navascu{\'e}s et~al.}{2009}]{navascues2009complete}
\begin{barticle}
\bauthor{\bsnm{Navascu{\'e}s}, \binits{M.}},
\bauthor{\bsnm{Owari}, \binits{M.}},
\bauthor{\bsnm{Plenio}, \binits{M.B.}}:
\batitle{Complete criterion for separability detection}.
\bjtitle{Physical review letters}
\bvolume{103}(\bissue{16}),
\bfpage{160404}
(\byear{2009})
\doiurl{10.1103/PhysRevLett.103.160404}
\end{barticle}
\endbibitem

\bibitem[\protect\citeauthoryear{Harrow et~al.}{2017}]{harrow2017improved}
\begin{barticle}
\bauthor{\bsnm{Harrow}, \binits{A.W.}},
\bauthor{\bsnm{Natarajan}, \binits{A.}},
\bauthor{\bsnm{Wu}, \binits{X.}}:
\batitle{An improved semidefinite programming hierarchy for testing entanglement}.
\bjtitle{Communications in Mathematical Physics}
\bvolume{352}(\bissue{3}),
\bfpage{881}--\blpage{904}
(\byear{2017})
\doiurl{10.1007/s00220-017-2859-0}
\end{barticle}
\endbibitem

\bibitem[\protect\citeauthoryear{Pena et~al.}{2025}]{pena2025tailored}
\begin{barticle}
\bauthor{\bsnm{Pena}, \binits{J.}},
\bauthor{\bsnm{Siddhu}, \binits{V.}},
\bauthor{\bsnm{Tayur}, \binits{S.}}:
\batitle{Tailored first-order and interior-point methods and a new semidefinite programming hierarchy for entanglement detection}.
\bjtitle{arXiv:2508.05854}
(\byear{2025})
\doiurl{10.48550/arXiv.2508.05854}
\end{barticle}
\endbibitem

\bibitem[\protect\citeauthoryear{G{\"u}hne and L{\"u}tkenhaus}{2006}]{guhne2006nonlinear}
\begin{barticle}
\bauthor{\bsnm{G{\"u}hne}, \binits{O.}},
\bauthor{\bsnm{L{\"u}tkenhaus}, \binits{N.}}:
\batitle{Nonlinear entanglement witnesses}.
\bjtitle{Physical Review Letters}
\bvolume{96}(\bissue{17}),
\bfpage{170502}
(\byear{2006})
\doiurl{10.1103/PhysRevLett.96.170502}
\end{barticle}
\endbibitem

\bibitem[\protect\citeauthoryear{Chru{\'s}ci{\'n}ski and Sarbicki}{2014}]{chruscinski2014entanglement}
\begin{barticle}
\bauthor{\bsnm{Chru{\'s}ci{\'n}ski}, \binits{D.}},
\bauthor{\bsnm{Sarbicki}, \binits{G.}}:
\batitle{Entanglement witnesses: construction, analysis and classification}.
\bjtitle{Journal of Physics A: Mathematical and Theoretical}
\bvolume{47}(\bissue{48}),
\bfpage{483001}
(\byear{2014})
\doiurl{10.1088/1751-8113/47/48/483001}
\end{barticle}
\endbibitem

\bibitem[\protect\citeauthoryear{Rico and Huber}{2024}]{rico2024entanglement}
\begin{barticle}
\bauthor{\bsnm{Rico}, \binits{A.}},
\bauthor{\bsnm{Huber}, \binits{F.}}:
\batitle{Entanglement detection with trace polynomials}.
\bjtitle{Physical Review Letters}
\bvolume{132}(\bissue{7}),
\bfpage{070202}
(\byear{2024})
\doiurl{10.1103/PhysRevLett.132.070202}
\end{barticle}
\endbibitem

\bibitem[\protect\citeauthoryear{Elben et~al.}{2020}]{elben2020mixed}
\begin{barticle}
\bauthor{\bsnm{Elben}, \binits{A.}},
\bauthor{\bsnm{Kueng}, \binits{R.}},
\bauthor{\bsnm{Huang}, \binits{H.-Y.}},
\bauthor{\bsnm{Bijnen}, \binits{R.}},
\bauthor{\bsnm{Kokail}, \binits{C.}},
\bauthor{\bsnm{Dalmonte}, \binits{M.}},
\bauthor{\bsnm{Calabrese}, \binits{P.}},
\bauthor{\bsnm{Kraus}, \binits{B.}},
\bauthor{\bsnm{Preskill}, \binits{J.}},
\bauthor{\bsnm{Zoller}, \binits{P.}}, \betal:
\batitle{Mixed-state entanglement from local randomized measurements}.
\bjtitle{Physical Review Letters}
\bvolume{125}(\bissue{20}),
\bfpage{200501}
(\byear{2020})
\doiurl{10.1103/PhysRevLett.125.200501}
\end{barticle}
\endbibitem

\bibitem[\protect\citeauthoryear{Neven et~al.}{2021}]{neven2021symmetry}
\begin{barticle}
\bauthor{\bsnm{Neven}, \binits{A.}},
\bauthor{\bsnm{Carrasco}, \binits{J.}},
\bauthor{\bsnm{Vitale}, \binits{V.}},
\bauthor{\bsnm{Kokail}, \binits{C.}},
\bauthor{\bsnm{Elben}, \binits{A.}},
\bauthor{\bsnm{Dalmonte}, \binits{M.}},
\bauthor{\bsnm{Calabrese}, \binits{P.}},
\bauthor{\bsnm{Zoller}, \binits{P.}},
\bauthor{\bsnm{Vermersch}, \binits{B.}},
\bauthor{\bsnm{Kueng}, \binits{R.}}, \betal:
\batitle{Symmetry-resolved entanglement detection using partial transpose moments}.
\bjtitle{npj Quantum Information}
\bvolume{7}(\bissue{1}),
\bfpage{152}
(\byear{2021})
\doiurl{10.1038/s41534-021-00487-y}
\end{barticle}
\endbibitem

\bibitem[\protect\citeauthoryear{Miller and Eisert}{2026}]{miller2026detecting}
\begin{barticle}
\bauthor{\bsnm{Miller}, \binits{D.}},
\bauthor{\bsnm{Eisert}, \binits{J.}}:
\batitle{Detecting entanglement from few partial transpose moments and their decay via weight enumerators}.
\bjtitle{arXiv:2604.12576}
(\byear{2026})
\doiurl{10.48550/arXiv.2604.12576}
\end{barticle}
\endbibitem

\bibitem[\protect\citeauthoryear{Gulati et~al.}{2024}]{gulati2024using}
\begin{barticle}
\bauthor{\bsnm{Gulati}, \binits{V.}},
\bauthor{\bsnm{Singh}, \binits{G.}},
\bauthor{\bsnm{Dorai}, \binits{K.}}:
\batitle{Using linear and nonlinear entanglement witnesses to generate and detect bound entangled states on an ibm quantum processor}.
\bjtitle{Physica Scripta}
\bvolume{99}(\bissue{11}),
\bfpage{115122}
(\byear{2024})
\doiurl{10.1088/1402-4896/ad87c7}
\end{barticle}
\endbibitem

\bibitem[\protect\citeauthoryear{Grassl et~al.}{1998}]{grassl1998computing}
\begin{barticle}
\bauthor{\bsnm{Grassl}, \binits{M.}},
\bauthor{\bsnm{R{\"o}tteler}, \binits{M.}},
\bauthor{\bsnm{Beth}, \binits{T.}}:
\batitle{Computing local invariants of quantum-bit systems}.
\bjtitle{Physical Review A}
\bvolume{58}(\bissue{3}),
\bfpage{1833}
(\byear{1998})
\doiurl{10.1103/physreva.58.1833}
\end{barticle}
\endbibitem

\bibitem[\protect\citeauthoryear{Leifer et~al.}{2004}]{leifer2004measuring}
\begin{barticle}
\bauthor{\bsnm{Leifer}, \binits{M.S.}},
\bauthor{\bsnm{Linden}, \binits{N.}},
\bauthor{\bsnm{Winter}, \binits{A.}}:
\batitle{Measuring polynomial invariants of multiparty quantum states}.
\bjtitle{Physical Review A}
\bvolume{69}(\bissue{5}),
\bfpage{052304}
(\byear{2004})
\doiurl{10.1103/physreva.69.052304}
\end{barticle}
\endbibitem

\bibitem[\protect\citeauthoryear{Schmid et~al.}{2023}]{schmid2023understanding}
\begin{barticle}
\bauthor{\bsnm{Schmid}, \binits{D.}},
\bauthor{\bsnm{Fraser}, \binits{T.C.}},
\bauthor{\bsnm{Kunjwal}, \binits{R.}},
\bauthor{\bsnm{Sainz}, \binits{A.B.}},
\bauthor{\bsnm{Wolfe}, \binits{E.}},
\bauthor{\bsnm{Spekkens}, \binits{R.W.}}:
\batitle{Understanding the interplay of entanglement and nonlocality: motivating and developing a new branch of entanglement theory}.
\bjtitle{Quantum}
\bvolume{7},
\bfpage{1194}
(\byear{2023})
\doiurl{10.22331/q-2023-12-04-1194}
\end{barticle}
\endbibitem

\bibitem[\protect\citeauthoryear{Horodecki et~al.}{2001}]{horodecki2001separability}
\begin{barticle}
\bauthor{\bsnm{Horodecki}, \binits{M.}},
\bauthor{\bsnm{Horodecki}, \binits{P.}},
\bauthor{\bsnm{Horodecki}, \binits{R.}}:
\batitle{Separability of n-particle mixed states: necessary and sufficient conditions in terms of linear maps}.
\bjtitle{Physics Letters A}
\bvolume{283}(\bissue{1-2}),
\bfpage{1}--\blpage{7}
(\byear{2001})
\doiurl{10.1016/S0375-9601(01)00142-6}
\end{barticle}
\endbibitem

\bibitem[\protect\citeauthoryear{Streltsov et~al.}{2010}]{streltsov2010linking}
\begin{barticle}
\bauthor{\bsnm{Streltsov}, \binits{A.}},
\bauthor{\bsnm{Kampermann}, \binits{H.}},
\bauthor{\bsnm{Bru{\ss}}, \binits{D.}}:
\batitle{Linking a distance measure of entanglement to its convex roof}.
\bjtitle{New Journal of Physics}
\bvolume{12}(\bissue{12}),
\bfpage{123004}
(\byear{2010})
\doiurl{10.1088/1367-2630/12/12/123004}
\end{barticle}
\endbibitem

\bibitem[\protect\citeauthoryear{Gribling et~al.}{2022}]{gribling2022bounding}
\begin{barticle}
\bauthor{\bsnm{Gribling}, \binits{S.}},
\bauthor{\bsnm{Laurent}, \binits{M.}},
\bauthor{\bsnm{Steenkamp}, \binits{A.}}:
\batitle{Bounding the separable rank via polynomial optimization}.
\bjtitle{Linear Algebra and its Applications}
\bvolume{648},
\bfpage{1}--\blpage{55}
(\byear{2022})
\doiurl{10.1016/j.laa.2022.04.010}
\end{barticle}
\endbibitem

\bibitem[\protect\citeauthoryear{Tie et~al.}{2011}]{tie2011rearrangement}
\begin{barticle}
\bauthor{\bsnm{Tie}, \binits{L.}},
\bauthor{\bsnm{Cai}, \binits{K.-Y.}},
\bauthor{\bsnm{Lin}, \binits{Y.}}:
\batitle{Rearrangement inequalities for {H}ermitian matrices}.
\bjtitle{Linear algebra and its applications}
\bvolume{434}(\bissue{2}),
\bfpage{443}--\blpage{456}
(\byear{2011})
\doiurl{10.1016/j.laa.2010.08.043}
\end{barticle}
\endbibitem

\bibitem[\protect\citeauthoryear{Lin and Sra}{2014}]{lin2014completely}
\begin{barticle}
\bauthor{\bsnm{Lin}, \binits{M.}},
\bauthor{\bsnm{Sra}, \binits{S.}}:
\batitle{Completely strong superadditivity of generalized matrix functions}.
\bjtitle{arXiv:1410.1958}
(\byear{2014})
\doiurl{10.48550/arXiv.1410.1958}
\end{barticle}
\endbibitem

\bibitem[\protect\citeauthoryear{Alekseev et~al.}{2025}]{alekseev2025tetrahedral}
\begin{barticle}
\bauthor{\bsnm{Alekseev}, \binits{A.}},
\bauthor{\bsnm{Christandl}, \binits{M.}},
\bauthor{\bsnm{Fraser}, \binits{T.C.}}:
\batitle{The tetrahedral {H}orn problem and asymptotics of {$U(n)$} {$6j$} symbols}.
\bjtitle{arXiv:2510.04877}
(\byear{2025})
\doiurl{10.48550/arXiv.2510.04877}
\end{barticle}
\endbibitem

\bibitem[\protect\citeauthoryear{Harrow}{2013}]{harrow2013church}
\begin{barticle}
\bauthor{\bsnm{Harrow}, \binits{A.W.}}:
\batitle{The church of the symmetric subspace}.
\bjtitle{arXiv:1308.6595}
(\byear{2013})
\doiurl{10.48550/arXiv.1308.6595}
\end{barticle}
\endbibitem

\bibitem[\protect\citeauthoryear{Fraser}{2022}]{fraser2022sufficient}
\begin{barticle}
\bauthor{\bsnm{Fraser}, \binits{T.C.}}:
\batitle{A sufficient family of necessary inequalities for the compatibility of quantum marginals}.
\bjtitle{arXiv:2211.00685}
(\byear{2022})
\doiurl{10.48550/arXiv.2211.00685}
\end{barticle}
\endbibitem

\bibitem[\protect\citeauthoryear{{\.Z}yczkowski and Sommers}{2001}]{zyczkowski2001induced}
\begin{barticle}
\bauthor{\bsnm{{\.Z}yczkowski}, \binits{K.}},
\bauthor{\bsnm{Sommers}, \binits{H.-J.}}:
\batitle{Induced measures in the space of mixed quantum states}.
\bjtitle{Journal of Physics A: Mathematical and General}
\bvolume{34}(\bissue{35}),
\bfpage{7111}
(\byear{2001})
\doiurl{10.1088/0305-4470/34/35/335}
\end{barticle}
\endbibitem

\bibitem[\protect\citeauthoryear{Christandl et~al.}{2009}]{christandl2009postselection}
\begin{barticle}
\bauthor{\bsnm{Christandl}, \binits{M.}},
\bauthor{\bsnm{K{\"o}nig}, \binits{R.}},
\bauthor{\bsnm{Renner}, \binits{R.}}:
\batitle{Postselection technique for quantum channels with applications to quantum cryptography}.
\bjtitle{Physical Review Letters}
\bvolume{102}(\bissue{2}),
\bfpage{020504}
(\byear{2009})
\doiurl{10.1103/physrevlett.102.020504}
\end{barticle}
\endbibitem

\bibitem[\protect\citeauthoryear{Jirstrand}{1995}]{jirstrand1995cylindrical}
\begin{botherref}
\oauthor{\bsnm{Jirstrand}, \binits{M.}}:
Cylindrical algebraic decomposition-an introduction.
Technical report,
Link{\"o}ping University
(1995).
\url{https://www.diva-portal.org/smash/record.jsf?pid=diva2%3A315832&dswid=5796}
\end{botherref}
\endbibitem

\bibitem[\protect\citeauthoryear{D{\"u}r et~al.}{1999}]{dur1999separability}
\begin{barticle}
\bauthor{\bsnm{D{\"u}r}, \binits{W.}},
\bauthor{\bsnm{Cirac}, \binits{J.I.}},
\bauthor{\bsnm{Tarrach}, \binits{R.}}:
\batitle{Separability and distillability of multiparticle quantum systems}.
\bjtitle{Physical Review Letters}
\bvolume{83}(\bissue{17}),
\bfpage{3562}
(\byear{1999})
\doiurl{10.1103/PhysRevLett.83.3562}
\end{barticle}
\endbibitem

\bibitem[\protect\citeauthoryear{D{\"u}r et~al.}{2000}]{dur2000three}
\begin{barticle}
\bauthor{\bsnm{D{\"u}r}, \binits{W.}},
\bauthor{\bsnm{Vidal}, \binits{G.}},
\bauthor{\bsnm{Cirac}, \binits{J.I.}}:
\batitle{Three qubits can be entangled in two inequivalent ways}.
\bjtitle{Physical Review A}
\bvolume{62}(\bissue{6}),
\bfpage{062314}
(\byear{2000})
\doiurl{10.1103/physreva.62.062314}
\end{barticle}
\endbibitem

\bibitem[\protect\citeauthoryear{G{\"u}hne and Seevinck}{2010}]{guhne2010separability}
\begin{barticle}
\bauthor{\bsnm{G{\"u}hne}, \binits{O.}},
\bauthor{\bsnm{Seevinck}, \binits{M.}}:
\batitle{Separability criteria for genuine multiparticle entanglement}.
\bjtitle{New Journal of Physics}
\bvolume{12}(\bissue{5}),
\bfpage{053002}
(\byear{2010})
\doiurl{10.1088/1367-2630/12/5/053002}
\end{barticle}
\endbibitem

\bibitem[\protect\citeauthoryear{Gühne and Tóth}{2009}]{guhne2009entanglement}
\begin{barticle}
\bauthor{\bsnm{Gühne}, \binits{O.}},
\bauthor{\bsnm{Tóth}, \binits{G.}}:
\batitle{{Entanglement detection}}.
\bjtitle{Phys. Rep.}
\bvolume{474}(\bissue{1}),
\bfpage{1}
(\byear{2009})
\doiurl{10.1016/j.physrep.2009.02.004}
\end{barticle}
\endbibitem

\bibitem[\protect\citeauthoryear{Smith et~al.}{2026}]{smith2026fully}
\begin{barticle}
\bauthor{\bsnm{Smith}, \binits{I.D.}},
\bauthor{\bsnm{Wolfe}, \binits{E.}},
\bauthor{\bsnm{Spekkens}, \binits{R.W.}}:
\batitle{Fully quantum inflation: quantum marginal problem constraints in the service of causal inference}.
\bjtitle{PRX Quantum}
\bvolume{7}(\bissue{1}),
\bfpage{010351}
(\byear{2026})
\doiurl{10.1103/244y-nh5n}
\end{barticle}
\endbibitem

\bibitem[\protect\citeauthoryear{Basu et~al.}{2007}]{basu2007algorithms}
\begin{bbook}
\bauthor{\bsnm{Basu}, \binits{S.}},
\bauthor{\bsnm{Pollack}, \binits{R.}},
\bauthor{\bsnm{Coste-Roy}, \binits{M.F.}}:
\bbtitle{Algorithms in Real Algebraic Geometry}.
\bsertitle{Algorithms and Computation in Mathematics}.
\bpublisher{Springer},
\blocation{Heidelberg, Germany}
(\byear{2007}).
\doiurl{10.1007/3-540-33099-2}
\end{bbook}
\endbibitem

\bibitem[\protect\citeauthoryear{Weyl}{1946}]{weyl1946classical}
\begin{bbook}
\bauthor{\bsnm{Weyl}, \binits{H.}}:
\bbtitle{The Classical Groups: Their Invariants and Representations}
vol. \bseriesno{1}.
\bpublisher{Princeton university press},
\blocation{Princeton, New Jersey}
(\byear{1946}).
\doiurl{10.1515/9781400883905}
\end{bbook}
\endbibitem

\bibitem[\protect\citeauthoryear{Procesi and Schwarz}{1985}]{procesi1985inequalities}
\begin{barticle}
\bauthor{\bsnm{Procesi}, \binits{C.}},
\bauthor{\bsnm{Schwarz}, \binits{G.}}:
\batitle{Inequalities defining orbit spaces}.
\bjtitle{Inventiones mathematicae}
\bvolume{81}(\bissue{3}),
\bfpage{539}--\blpage{554}
(\byear{1985})
\doiurl{10.1007/BF01388587}
\end{barticle}
\endbibitem

\bibitem[\protect\citeauthoryear{Blakaj}{2025}]{VjosaPhDthesis}
\begin{botherref}
\oauthor{\bsnm{Blakaj}, \binits{V.}}:
Transcendental tools in quantum information theory.
PhD thesis,
Technical University of Munich
(2025).
\url{https://nbn-resolving.org/urn:nbn:de:bvb:91-diss-20250326-1764825-0-4}
\end{botherref}
\endbibitem

\bibitem[\protect\citeauthoryear{Horn and Johnson}{1985}]{horn1985matrix}
\begin{bbook}
\bauthor{\bsnm{Horn}, \binits{R.A.}},
\bauthor{\bsnm{Johnson}, \binits{C.R.}}:
\bbtitle{Matrix Analysis}.
\bpublisher{Cambridge university press},
\blocation{Cambridge}
(\byear{1985}).
\doiurl{10.1017/CBO9780511810817}
\end{bbook}
\endbibitem

\bibitem[\protect\citeauthoryear{Hayashi}{2002}]{hayashi2002optimal}
\begin{barticle}
\bauthor{\bsnm{Hayashi}, \binits{M.}}:
\batitle{Optimal sequence of quantum measurements in the sense of {S}tein's lemma in quantum hypothesis testing}.
\bjtitle{Journal of Physics A: Mathematical and General}
\bvolume{35}(\bissue{50}),
\bfpage{10759}--\blpage{10773}
(\byear{2002})
\doiurl{10.1088/0305-4470/35/50/307}
\end{barticle}
\endbibitem

\bibitem[\protect\citeauthoryear{Christandl and Mitchison}{2006}]{christandl2006spectra}
\begin{barticle}
\bauthor{\bsnm{Christandl}, \binits{M.}},
\bauthor{\bsnm{Mitchison}, \binits{G.}}:
\batitle{The spectra of quantum states and the {K}ronecker coefficients of the symmetric group}.
\bjtitle{Communications in Mathematical Physics}
\bvolume{261}(\bissue{3}),
\bfpage{789}--\blpage{797}
(\byear{2006})
\doiurl{10.1007/s00220-005-1435-1}
\end{barticle}
\endbibitem

\bibitem[\protect\citeauthoryear{Keyl and Werner}{2001}]{keyl2001estimating}
\begin{barticle}
\bauthor{\bsnm{Keyl}, \binits{M.}},
\bauthor{\bsnm{Werner}, \binits{R.F.}}:
\batitle{Estimating the spectrum of a density operator}.
\bjtitle{Physical Review A}
\bvolume{64}(\bissue{5}),
\bfpage{052311}
(\byear{2001})
\doiurl{10.1142/9789812563071_0030}
\end{barticle}
\endbibitem

\bibitem[\protect\citeauthoryear{Folland}{2016}]{folland2016course}
\begin{bbook}
\bauthor{\bsnm{Folland}, \binits{G.B.}}:
\bbtitle{A Course in Abstract Harmonic Analysis}.
\bpublisher{CRC press},
\blocation{Boca Raton, FL}
(\byear{2016}).
\doiurl{10.1201/b19172}
\end{bbook}
\endbibitem

\bibitem[\protect\citeauthoryear{Pak and Panova}{2023}]{pak2023durfee}
\begin{barticle}
\bauthor{\bsnm{Pak}, \binits{I.}},
\bauthor{\bsnm{Panova}, \binits{G.}}:
\batitle{Durfee squares, symmetric partitions and bounds on {K}ronecker coefficients}.
\bjtitle{Journal of Algebra}
\bvolume{629},
\bfpage{358}--\blpage{380}
(\byear{2023})
\doiurl{10.1016/j.jalgebra.2023.04.006}
\end{barticle}
\endbibitem

\bibitem[\protect\citeauthoryear{Keyl}{2006}]{keyl2006quantum}
\begin{barticle}
\bauthor{\bsnm{Keyl}, \binits{M.}}:
\batitle{Quantum state estimation and large deviations}.
\bjtitle{Reviews in Mathematical Physics}
\bvolume{18}(\bissue{01}),
\bfpage{19}--\blpage{60}
(\byear{2006})
\doiurl{10.1142/s0129055x06002565}
\end{barticle}
\endbibitem

\bibitem[\protect\citeauthoryear{Hall}{2015}]{hall2015lie}
\begin{bbook}
\bauthor{\bsnm{Hall}, \binits{B.C.}}:
\bbtitle{Lie Groups, Lie Algebras, and Representations: An Elementary Introduction},
\bedition{2}nd edn.
\bsertitle{Graduate Texts in Mathematics 222}.
\bpublisher{Springer},
\blocation{Switzerland}
(\byear{2015}).
\doiurl{10.1007/978-3-319-13467-3}
\end{bbook}
\endbibitem

\bibitem[\protect\citeauthoryear{Zhelobenko}{1973}]{zhelobenko1973compact}
\begin{bbook}
\bauthor{\bsnm{Zhelobenko}, \binits{D.P.}}:
\bbtitle{Compact Lie Groups and Their Representations}
vol. \bseriesno{40}.
\bpublisher{American Mathematical Society},
\blocation{Providence, Rhode Island}
(\byear{1973}).
\doiurl{10.1090/mmono/040}
\end{bbook}
\endbibitem

\bibitem[\protect\citeauthoryear{O'Donnell and Wright}{2016}]{odonnell2016efficient}
\begin{bchapter}
\bauthor{\bsnm{O'Donnell}, \binits{R.}},
\bauthor{\bsnm{Wright}, \binits{J.}}:
\bctitle{Efficient quantum tomography}.
In: \bbtitle{Proceedings of the Forty-eighth Annual ACM Symposium on Theory of Computing},
pp. \bfpage{899}--\blpage{912}
(\byear{2016}).
\doiurl{10.1145/2897518.2897544}
\end{bchapter}
\endbibitem

\bibitem[\protect\citeauthoryear{Audenaert and Datta}{2015}]{audenaert2013alpha}
\begin{barticle}
\bauthor{\bsnm{Audenaert}, \binits{K.M.}},
\bauthor{\bsnm{Datta}, \binits{N.}}:
\batitle{Alpha-z-relative {R}ényi entropies}.
\bjtitle{Journal of Mathematical Physics}
\bvolume{56},
\bfpage{022202}
(\byear{2015})
\doiurl{10.1063/1.4906367}
\end{barticle}
\endbibitem

\bibitem[\protect\citeauthoryear{Franks and Walter}{2020}]{franks2020minimal}
\begin{barticle}
\bauthor{\bsnm{Franks}, \binits{C.}},
\bauthor{\bsnm{Walter}, \binits{M.}}:
\batitle{Minimal length in an orbit closure as a semiclassical limit}.
\bjtitle{arXiv:2004.14872v2}
(\byear{2020})
\doiurl{10.48550/arXiv.2004.14872}
\end{barticle}
\endbibitem

\bibitem[\protect\citeauthoryear{Botero et~al.}{2021}]{botero2021large}
\begin{barticle}
\bauthor{\bsnm{Botero}, \binits{A.}},
\bauthor{\bsnm{Christandl}, \binits{M.}},
\bauthor{\bsnm{Vrana}, \binits{P.}}:
\batitle{Large deviation principle for moment map estimation}.
\bjtitle{Electronic Journal of Probability}
\bvolume{26},
\bfpage{1}--\blpage{23}
(\byear{2021})
\doiurl{10.1214/21-ejp636}
\end{barticle}
\endbibitem

\bibitem[\protect\citeauthoryear{Fraser}{2023}]{fraser2023estimation}
\begin{botherref}
\oauthor{\bsnm{Fraser}, \binits{T.C.}}:
An estimation theoretic approach to quantum realizability problems.
PhD thesis,
University of Waterloo
(2023).
\url{http://hdl.handle.net/10012/19970}
\end{botherref}
\endbibitem

\bibitem[\protect\citeauthoryear{Rubboli et~al.}{2024}]{rubboli2024mixed}
\begin{barticle}
\bauthor{\bsnm{Rubboli}, \binits{R.}},
\bauthor{\bsnm{Takagi}, \binits{R.}},
\bauthor{\bsnm{Tomamichel}, \binits{M.}}:
\batitle{Mixed-state additivity properties of magic monotones based on quantum relative entropies for single-qubit states and beyond}.
\bjtitle{Quantum}
\bvolume{8},
\bfpage{1492}
(\byear{2024})
\doiurl{10.22331/q-2024-10-04-1492}
\end{barticle}
\endbibitem

\end{thebibliography}

\begin{appendices}
\crefalias{section}{appendix}

\section{Semialgebraicity of separable states}
\label{sec:semialgebraic}

In \cref{sec:discussion}, it was mentioned that the set of bipartite separable states for fixed local dimensions, is semialgebraic.
In other words, $\SEPAB$ can be described by finitely 
many polynomial (in)equalities in the coefficients of $\rho_{AB}$, viewed as element of the real vector space of local dimension $(d_{A}, d_{B})$ Hermitian matrices, i.e., $\SEPAB$ can be characterized using finitely many polynomial entanglement witnesses.
This property of $\SEPAB$ follows from a combination of Sylvester's criterion (\cref{sec:sylvester}),  Carathéodory's theorem, and the Tarski-Seidenberg theorem \cite{basu2007algorithms}.
In this section, we further show that $\SEPAB$ admits a semialgebraic description in terms of finitely many local-unitary invariant polynomial entanglement witnesses.

The orbit space 
\begin{equation}
    \widetilde{\SEP}(A\colonsep B) \coloneq \SEPAB/\bigl(U(d_{A})\times U(d_{B})\bigr),
\end{equation}
(equipped with the quotient topology), is an \textit{abstract} quotient, and thus it is not \textit{a priori} a subset of some $\mathbb R^N$. 
To say that $\widetilde{\SEP}(A\colonsep B)$ \textit{admits a semialgebraic model}
means that there exist an $m \in \mathbb{N}$ and a semialgebraic set $Y\subseteq \mathbb R^m$ together with a homeomorphism, such that  $\widetilde{\SEP}(A\colonsep B)\ \cong\ Y$.
Although we can show that such a finite characterization exists, we are currently unable 
to provide such a finite characterization of $\SEPAB$ for all local dimensions $(d_{A}, d_{B})$ where $d_{A}d_{B} > 6$\footnote{For local dimensions $(d_{A}, d_{B})$ where $d_{A}d_{B} \leq 6$, the PPT criterion is necessary and sufficient \cite{horodecki2001separability,elben2020mixed}.}.

Let us start by recalling fundamental results concerning the action of a compact Lie group $K$, on a finite dimensional real vector spaces $W$.
In our setting, we consider the local unitary group $K = U(d_{A})\times U(d_{B})$, acting on the real vector space $W=\Herm(A\colonsep B)$ of Hermitian matrices on $\mathbb C^{d_{A}} \otimes \mathbb C^{d_{B}}$ by conjugation, i.e., for all $(U, V) \in U(d_{A}) \times U(d_{B})$ and $\rho_{AB} \in \Herm(A\colonsep B)$
\begin{equation}\label{eq:action}
    \rho_{AB} \mapsto (U\otimes V) \rho_{AB} (U\otimes V)^\dagger.
\end{equation}
Note that this action is orthogonal with respect to the Hilbert-Schmidt norm.

In general, for any real vector space $W$ and group $K$ by $\mathbb R[W]$ is denoted the ring of real polynomial functions on $W$, and by $\mathbb R[W]^K$ the subring of $K$-invariant polynomial functions on $W$:
\begin{equation}
    \mathbb R[W]^K = \{f\in \mathbb R[W] \mid \forall k \in K, w \in W: f(k\cdot w)=f(w)\}.
\end{equation}
By the results of Hilbert and Hurwitz (see \cite[Theorem 8.14.A, p. 274]{weyl1946classical}) the ring $\mathbb R[W]^K$ is \textit{finitely} generated when $K$ is compact.
Let $p_1,\dots,p_m \in \mathbb R[W]^K$ be a finite list of generators. 
We define the  map $P : W \to \mathbb R^{m}$ by
\begin{align}
    P(w) = (p_1(w),\dots,p_m(w)).
\end{align}
By the Tarski-Seidenberg theorem \cite{basu2007algorithms}, the image $X \coloneq \mathrm{im}(P) \subseteq \mathbb R^{m}$ of the polynomial map $P$ is semialgebraic and homeomorphic to the quotient $W/K$ by the following proposition.
\begin{prop}[\cite{procesi1985inequalities}, Proposition 0.4]
    \label{prop:PS}
    The map $P : W \to \mathbb R^{m}$ is proper and induces a homeomorphism $\bar P : W/K \to X$ between the quotient $W/K$ and its image $X = \mathrm{im}(P) \subseteq \mathbb R^{m}$.
\end{prop}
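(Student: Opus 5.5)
The plan is to establish three facts in turn: (i) the generators $p_1,\dots,p_m$ separate $K$-orbits, so that $\bar P$ is a well-defined continuous injection; (ii) $P$ is proper; (iii) a proper continuous map into $\mathbb R^m$ is closed, so $\bar P$ is a closed continuous bijection onto $X$, hence a homeomorphism.

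\emph{Step 1: $\bar P$ is well defined, continuous and injective.}
Since each $p_i$ is $K$-invariant, $P$ is constant on orbits. It therefore factors through the quotient map $q : W \to W/K$ as $P = \bar P \circ q$. By the universal property of the quotient topology, $\bar P$ is continuous.

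For injectivity, I would show that invariant polynomials separate orbits. Take $w, w'$ in distinct orbits.
\begin{itemize}
\item The orbits $Kw$ and $Kw'$ are disjoint compact sets, because $K$ is compact and acts continuously.
\item Urysohn's lemma gives a continuous $f$ on $W$ with $f \equiv 0$ on $Kw$ and $f \equiv 1$ on $Kw'$.
\item By Stone--Weierstrass on a closed ball containing both orbits, there is a polynomial $g$ with $|g-f| < 1/3$ on that ball.
\item Average $g$ with the normalized Haar measure to get the Reynolds projection $\bar g(v) = \int_K g(k\cdot v)\,dk$.
\end{itemize}
Because the action is linear, $\bar g$ is again a real polynomial of degree at most $\deg g$, and it is $K$-invariant. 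The ball can be taken centered at the origin, so it is $K$-invariant since the action is orthogonal. Both orbits are also $K$-invariant, so the averaged function still satisfies $\bar g < 1/3$ on $Kw$ and $\bar g > 2/3$ on $Kw'$. Writing $\bar g$ as a polynomial in $p_1,\dots,p_m$ then forces $P(w) \neq P(w')$.

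\emph{Step 2: properness.}
The action is orthogonal for the Hilbert--Schmidt norm, so $v \mapsto \|v\|^2$ lies in $\mathbb R[W]^K$. Hence $\|v\|^2 = h(P(v))$ for some polynomial $h$ on $\mathbb R^m$. If $L \subseteq \mathbb R^m$ is compact, $h$ is bounded on $L$, so $P^{-1}(L)$ is bounded. It is also closed by continuity of $P$, and therefore compact.

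\emph{Step 3: $\bar P$ is closed.}
Let $Z \subseteq W/K$ be closed. Then $C = q^{-1}(Z)$ is a closed $K$-invariant set with $\bar P(Z) = P(C)$. A proper continuous map into the locally compact Hausdorff space $\mathbb R^m$ is closed. Concretely, if $P(c_j) \to y$, the $c_j$ lie in the compact set $P^{-1}(\text{closed ball around } y)$, so a subsequence converges to some $c \in C$ with $P(c) = y$. Thus $\bar P(Z)$ is closed in $\mathbb R^m$, hence closed in $X$. Combining Steps 1 and 3, $\bar P : W/K \to X$ is a continuous closed bijection, i.e.\ a homeomorphism.

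I expect the main obstacle to be the orbit-separation argument in Step 1. Care is needed that the Reynolds average of a polynomial is still a polynomial, and that the averaging does not destroy the separation. Both rely on linearity and orthogonality of the action and on invariance of the orbits and the ball. Everything after that is point-set topology.
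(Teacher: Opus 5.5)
The paper gives no proof of this proposition; it quotes it from Procesi--Schwarz. Your proof is correct and follows the standard argument behind that citation.

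\begin{enumerate}
\item Haar-averaging a polynomial approximant shows that $\mathbb R[W]^K$ separates $K$-orbits, so $\bar P$ is injective.
\item The invariant quadratic $\|v\|^2=h(P(v))$ gives properness.
\item A proper continuous map into $\mathbb R^m$ is closed, so the continuous bijection $\bar P : W/K \to X$ is a homeomorphism.
\end{enumerate}

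What your write-up adds over the bare citation is an accounting of where each hypothesis set up in the paper is used:
\begin{itemize}
\item compactness of $K$ enters the orbit-separation step;
\item orthogonality of the action, which the paper records for the Hilbert--Schmidt norm, enters the properness step;
\item the fact that $p_1,\dots,p_m$ generate $\mathbb R[W]^K$ (the Hilbert--Hurwitz result quoted just before) is used in both steps, to rewrite $\bar g$ and $\|v\|^2$ as polynomials in $P$.
\end{itemize}

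One optional simplification: you need neither Urysohn's lemma nor a $K$-invariant ball. You can apply Stone--Weierstrass directly on the compact set $Kw\cup Kw'$, where the $0/1$ function is already continuous. The averaged values $\bar g(w)=\int_K g(k\cdot w)\,dk$ and $\bar g(w')$ only involve points of $Kw$ and $Kw'$.
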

Moreover, the image of $P$ restricted to any semialgebraic subset $S$ of $W$ is also semialgebraic by the following.
\begin{prop}[\cite{VjosaPhDthesis}, Chapter 4]
    \label{prop:image_semialg} 
    If $S\subseteq W$ is a semialgebraic subset of $W$, then its image, $P(S) = \mathrm{im}(P|_{S})$, under a polynomial map $P : W \to \mathbb R^{m}$ is also semialgebraic.
\end{prop}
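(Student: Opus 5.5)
The plan is to reduce the claim to the projection form of the Tarski--Seidenberg theorem \cite{basu2007algorithms}, which says that the image of a semialgebraic subset of $\mathbb R^{N}\times\mathbb R^{m}$ under the coordinate projection onto $\mathbb R^{m}$ is semialgebraic. First I fix a real basis of $W$, for instance an orthonormal basis of $\Herm(A\colonsep B)$ with respect to the Hilbert--Schmidt inner product. This identifies $W$ with $\mathbb R^{N}$, where $N=\dim_{\mathbb R}W$. Under this identification the components $p_1,\dots,p_m$ of $P$ become real polynomials in the $N$ coordinates. The notions of semialgebraic set and polynomial map do not depend on the basis, because a change of basis is a linear map and its inverse is again polynomial.

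Next I build the graph of $P$ over $S$:
\begin{equation}
    \Gamma_S \coloneqq \{(w,y)\in \mathbb R^{N}\times\mathbb R^{m} \mid w\in S,\ y_i - p_i(w) = 0 \text{ for } i=1,\dots,m\}.
\end{equation}
Since $S$ is semialgebraic, it is a finite union of sets cut out by finitely many polynomial equalities and strict inequalities in $w$. Intersecting each of these basic pieces with the algebraic set $\{y_i=p_i(w)\}$ gives a basic semialgebraic subset of $\mathbb R^{N+m}$. Hence $\Gamma_S$ is a finite union of basic semialgebraic sets, and so it is semialgebraic.

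Finally, let $\pi:\mathbb R^{N}\times\mathbb R^{m}\to\mathbb R^{m}$ be the coordinate projection. Then $\pi(\Gamma_S)=\{y \mid \exists w\in S,\ y=P(w)\}=P(S)$. By the Tarski--Seidenberg theorem, $\pi(\Gamma_S)$ is semialgebraic, and this is exactly the claim. Equivalently, $P(S)$ is defined by the first-order formula $\exists w\,(w\in S\wedge P(w)=y)$ over the reals, and quantifier elimination removes the existential quantifier.

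The only substantive input is the projection theorem itself, which I would cite rather than reprove. The remaining checks are bookkeeping: that intersecting with polynomial equalities preserves semialgebraicity, and that the real coordinatization of $W$ is harmless.
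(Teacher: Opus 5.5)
Your proposal is correct. Writing $P(S)$ as the coordinate projection of the semialgebraic graph $\Gamma_S\subseteq\mathbb R^{N}\times\mathbb R^{m}$ and then applying the Tarski--Seidenberg projection theorem is the standard argument, and it is the one the paper relies on: it gives no proof of its own (it cites \cite{VjosaPhDthesis}) but invokes exactly this Tarski--Seidenberg reasoning when it asserts that $\mathrm{im}(P)$ is semialgebraic.
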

Building on the above, we prove the following.
\begin{thm}
    \label{thm:orbit_semialg}
    If $S \subseteq W$ is a semialgebraic subset of $W$ that is $K$-invariant, then the orbit space $S/K$ is homeomorphic to the semialgebraic set $P(S) \subseteq \mathbb R^m$.
\end{thm}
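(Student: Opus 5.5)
The plan is to restrict the homeomorphism $\bar P : W/K \to X$ from \cref{prop:PS} to the orbit space $S/K$. Semialgebraicity of the target will then come from \cref{prop:image_semialg}. Concretely, I will show three things: $P(S)$ is semialgebraic; the restriction $P|_S$ induces a continuous bijection $\bar P|_{S/K} : S/K \to P(S)$; and this bijection is a homeomorphism.

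\emph{Semialgebraicity and bijectivity.} By \cref{prop:image_semialg}, $P(S) \subseteq \mathbb R^m$ is semialgebraic, since $S$ is semialgebraic and $P$ is polynomial. Next, $S$ is $K$-invariant, so $S$ is a union of $K$-orbits. Hence, writing $q : W \to W/K$ for the quotient map, we have $S = q^{-1}(q(S))$ and $S/K$ is identified with the subset $q(S) \subseteq W/K$. Injectivity of $\bar P$ on $W/K$ (from \cref{prop:PS}) then shows that the map $S/K \to P(S)$, $[s] \mapsto P(s)$, is well defined and injective. It is surjective by construction, and it is continuous because $P|_S$ is continuous and constant on orbits.

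\emph{Topology.} There are two topologies on $S/K$ to reconcile: the quotient topology induced from $S$, and the subspace topology of $q(S) \subseteq W/K$. I expect this step to be the main obstacle. Once the two agree, the conclusion is immediate: $\bar P$ is a homeomorphism $W/K \to X$, so its restriction to the subspace $q(S)$ is a homeomorphism onto $\bar P(q(S)) = P(S)$ with its subspace topology.

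To prove the two topologies agree, I would use that $q$ is an \emph{open} map, because $K$ acts by homeomorphisms: for $U \subseteq W$ open, $q^{-1}(q(U)) = \bigcup_{k} kU$ is open. Since $S$ is saturated, the restriction of an open quotient map to a saturated subset is again an open quotient map onto its image. Indeed, if $V \subseteq S$ is relatively open, say $V = U \cap S$ with $U$ open in $W$, then $q(V) = q(U) \cap q(S)$. The inclusion $\supseteq$ is where saturation is used: if $q(u) = q(s)$ with $u \in U$ and $s \in S$, then $u \in S$. So $q(V)$ is open in $q(S)$, and the quotient and subspace topologies coincide.

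Alternatively, one can avoid this by arguing directly with properness. $P$ is proper, and hence closed, since $\mathbb R^m$ is locally compact Hausdorff. Its restriction to the saturated set $S = P^{-1}(P(S))$ is then closed onto $P(S)$, so $P|_S$ is a quotient map, and it therefore induces a homeomorphism $S/K \cong P(S)$. I will use this alternative only as a backup or cross-check.
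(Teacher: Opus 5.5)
Your proposal is correct and takes the same route as the paper: you restrict the homeomorphism $\bar P : W/K \to X$ of \cref{prop:PS} to the orbits lying in $S$, and you get semialgebraicity of $P(S)$ from \cref{prop:image_semialg}. Your additional step goes beyond the paper's proof. You show that the quotient topology on $S/K$ agrees with the subspace topology of $q(S)\subseteq W/K$, either via openness of $q$ and saturation of $S$, or via closedness of the proper map $P$. The paper takes this for granted when it calls $\bar P_S$ the restriction of $\bar P$.
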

\begin{proof}
    Let $\pi_S : S\to S/K$ be the orbit map restricted to $S$, i.e., for $x \in S$ we have $\pi_S(x)=K \cdot x = [x]$.
    Since each component polynomial of $P=(p_1,\ldots, p_m)$ is $K$-invariant on $W$, $P$ is constant on each $K$-orbit $K\cdot x = [x]$ of an element $x \in S$.
    From here we obtain a well-defined continuous map $\bar P_S : S/K \to Y$ that is the restriction of the homeomorphism $\bar P$ from \cref{prop:PS} and $Y = P(S) = \mathrm{im}(P|_{S}) = \mathrm{im}(\bar P_S) = \mathrm{im}(\bar P|_{S/K})$. Finally, 
    by \cref{prop:image_semialg} the set $Y$ is semialgebraic, which proves the claim.
\end{proof}

In our setting, the set $S = \SEPAB$ is a $K=U(d_{A}) \times U(d_{B})$-invariant semialgebraic subset of $W = \Herm(A\colonsep B)$. Therefore, \cref{thm:orbit_semialg} yields the following result.
\begin{cor}
    \label{cor:sep_orbit_semialg}
    The orbit space
    \begin{equation}
        \widetilde{\SEP}(A\colonsep B) = \SEPAB/(U(d_{A}) \times U(d_{B}))
    \end{equation}
    is homeomorphic to the semialgebraic set $P(\SEPAB) \subseteq \mathbb R^m$.
    In particular, $\SEPAB$ admits a semialgebraic model consisting entirely of $U(d_{A})\times U(d_{B})$-invariant polynomials.
\end{cor}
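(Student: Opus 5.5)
The plan is to apply \cref{thm:orbit_semialg} with $W = \Herm(A\colonsep B)$, $K = U(d_A)\times U(d_B)$ acting by local conjugation as in \cref{eq:action}, and $S = \SEPAB$. Two hypotheses must be checked: that $\SEPAB$ is $K$-invariant, and that it is a semialgebraic subset of $W$. Once both hold, \cref{thm:orbit_semialg} gives the homeomorphism $\widetilde{\SEP}(A\colonsep B) \cong P(\SEPAB)$. The ``in particular'' clause then follows because $P$ is built from the generators $p_1,\ldots,p_m$ of $\mathbb R[W]^K$, all of which are local-unitary invariant polynomials. Moreover, $P(\SEPAB)$ is semialgebraic in $\mathbb R^m$, so it is cut out by finitely many polynomial (in)equalities $q_j(y)\ge 0$ (or $=0$, $>0$). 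Pulling these back gives the finitely many invariant polynomials $q_j\circ P$ on $W$.

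$K$-invariance is immediate from \cref{defn:separable}. Conjugating $\sum_i p_i \ketbra{\alpha_i}\otimes\ketbra{\beta_i}$ by $U\otimes V$ gives $\sum_i p_i \ketbra{U\alpha_i}\otimes\ketbra{V\beta_i}$, which is again separable.

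Semialgebraicity is the step that needs an actual argument. I will use Carathéodory's bound \cref{eq:sep_rank_finite}, which lets me fix $r = d_{AB}^2$. Consider the real-parameter set
\begin{equation*}
    Z = \Bigl\{(p_i,\alpha_i,\beta_i)_{i=1}^{r} \;\Bigm|\; p_i\ge 0,\ \textstyle\sum_i p_i = 1,\ \|\alpha_i\|^2 = 1,\ \|\beta_i\|^2=1\Bigr\},
\end{equation*}
where the complex vectors $\alpha_i \in \mathbb C^{d_A}$ and $\beta_i\in\mathbb C^{d_B}$ are written through their real and imaginary parts. $Z$ is defined by polynomial equalities and inequalities, so it is semialgebraic. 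The map $\Phi(p,\alpha,\beta) = \sum_i p_i \ketbra{\alpha_i}\otimes\ketbra{\beta_i}$ is polynomial in these real coordinates, with values in $W \cong \mathbb R^{d_{AB}^2}$. By \cref{eq:sep_rank_finite}, $\Phi(Z) = \SEPAB$. By Tarski--Seidenberg (equivalently \cref{prop:image_semialg}), the polynomial image of a semialgebraic set is semialgebraic, hence $\SEPAB$ is semialgebraic.

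The only delicate point is the identification in the ``in particular'' statement. By \cref{prop:PS}, $P$ is constant on orbits and separates orbits, so $\rho\in\SEPAB$ if and only if $P(\rho)\in P(\SEPAB)$. This uses that $\SEPAB$ is a union of full $K$-orbits, which is exactly the invariance established above. Hence the finitely many invariant conditions $q_j(P(\rho))$ characterize separability exactly, which is the claimed invariant semialgebraic model.
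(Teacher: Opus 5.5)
Your proposal is correct and follows the paper's route: apply \cref{thm:orbit_semialg} with $S=\SEPAB$, where $K$-invariance is immediate and semialgebraicity comes from Carathéodory's bound plus Tarski--Seidenberg. Your explicit pure-state parametrization $\Phi(Z)=\SEPAB$ fills in the detail the paper only sketches, and it makes the paper's mention of Sylvester's criterion unnecessary. One small precision: the description of $P(\SEPAB)$, and hence the pulled-back invariant conditions $q_j\circ P$, is in general a finite Boolean combination of sign conditions rather than a single list of inequalities $q_j\ge 0$.
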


\section{Sylvester's criterion}
\label{sec:sylvester}
A necessary and sufficient condition for the positive semidefiniteness of a Hermitian matrix is given by the non-negativity of all its principal minors; this is known as Sylvester’s criterion \cite[Theorem 7.2.5 (a)]{horn1985matrix}. 
In what follows, we present a well-known variation of this criterion, which we later use to convert operator inequalities into polynomial entanglement witnesses.
\begin{prop}
    \label{prop:sylvesters}
    Let $X$ be an $\ell \times \ell$ Hermitian matrix acting on a Hilbert space $L = \mathbb C^{\ell}$.
    Then, $X \succeq 0$ if and only if for all $m \in \{1, \ldots, \ell\}$
    \begin{equation}
        E_{m}(X) \geq 0,
    \end{equation}
    where $E_{m}(X) = \Tr(P_{\wedge^{m}L} X^{\ot m})$ and $P_{\wedge^{m}L}$ is the projection operator onto the skew-symmetric subspace $\wedge^{m}L \subseteq L^{\ot m}$. 
\end{prop}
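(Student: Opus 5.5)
The plan is to identify $E_m(X)$ with the $m$-th elementary symmetric polynomial of the eigenvalues of $X$ and then use a sign argument on the characteristic polynomial.

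\textbf{Step 1: $E_m$ as an elementary symmetric polynomial.} Since $X$ is Hermitian, write $X = U D U^{\dagger}$ with $D = \mathrm{diag}(\lambda_1,\ldots,\lambda_\ell)$. The projector $P_{\wedge^{m}L}$ commutes with $U^{\ot m}$, so
\[
E_m(X) = \Tr(P_{\wedge^m L} D^{\ot m}).
\]
The antisymmetric subspace has the orthonormal basis $\ket{e_{i_1}}\wedge\cdots\wedge\ket{e_{i_m}}$ with $i_1<\cdots<i_m$. The operator $D^{\ot m}$ preserves $\wedge^m L$ and acts diagonally on this basis with eigenvalue $\lambda_{i_1}\cdots\lambda_{i_m}$. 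Therefore
\[
E_m(X) = e_m(\lambda_1,\ldots,\lambda_\ell),
\]
the $m$-th elementary symmetric polynomial, which also equals the sum of all $m\times m$ principal minors of $X$. The only point needing care here is the normalization of the wedge basis vectors, so that the trace is exactly $e_m$ and not a multiple of it.

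\textbf{Step 2: the forward direction.} If $X\succeq 0$, then every $\lambda_i\ge 0$, so each $e_m(\lambda)$ is a sum of nonnegative products and hence $E_m(X)\ge 0$.

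\textbf{Step 3: the converse.} Suppose $E_m(X)\ge 0$ for all $m$, and assume for contradiction that some eigenvalue is negative, say $\lambda_j = -t$ with $t>0$. Expand the characteristic-type polynomial
\[
\prod_{i=1}^{\ell}(t+\lambda_i) = \sum_{m=0}^{\ell} e_m(\lambda)\, t^{\ell-m}.
\]
The left-hand side vanishes at this $t$ because of the factor $t+\lambda_j=0$. On the right-hand side, the $m=0$ term is $e_0 t^{\ell} = t^{\ell} > 0$, and every remaining term is nonnegative by hypothesis. The right-hand side is therefore strictly positive, a contradiction. Hence all $\lambda_i\ge 0$, that is, $X\succeq 0$.

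This argument is elementary throughout. The step requiring the most care is the basis computation in Step 1.
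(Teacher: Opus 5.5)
Your proof is correct and follows essentially the same route as the paper: both expand $\prod_i(t+\lambda_i)=\sum_{m} E_m(X)\,t^{\ell-m}$ (the paper via $q_X(t)=\det(X+tI)$) and observe that nonnegative coefficients rule out a positive root, i.e.\ a negative eigenvalue of $X$. The differences are cosmetic. Your Step 1 explicitly proves the identification $E_m(X)=e_m(\lambda)$, which the paper only asserts. Your forward direction uses nonnegativity of the eigenvalues, whereas the paper notes that $P_{\wedge^m L}$ and $X^{\ot m}$ are both positive semidefinite.
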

Note that $E_{m}(X)$ is also equivalent to both $(i)$ the sum over all principal minors of $X$ of degree $m$, and $(ii)$ the elementary symmetric polynomial in the eigenvalues of $X$.
We now provide the proof of \cref{prop:sylvesters}.
\begin{proof}[Proof of \cref{prop:sylvesters}]
    The characteristic polynomial of $X$,
    \begin{equation}
        p_X(t) = \det(X-tI) = \Tr(P_{\wedge^{\ell} L} (X-tI)^{\ot \ell}),
    \end{equation}
    is related to the quantities $E_m(X)$ for $m \in \{1,\ldots, \ell\}$ by the identity
    \begin{align}
    \begin{split}
        p_X(t) 
        &= (-t)^{\ell} + E_1(X)(-t)^{\ell-1} + \cdots + E_{\ell-1}(X) (-t) + E_\ell(X). 
    \end{split}
    \end{align}
    Since the eigenvalues of $X$ are the roots of $p_X(t)$, the number of strictly negative eigenvalues of $X$ is equal to the number of strictly positive roots of the polynomial $q_X(t) = p_X(-t)$ (both counted with multiplicities), where
    \begin{equation}
        q_X(t) = t^{\ell} + E_1(X)t^{\ell-1} + \cdots + E_{\ell-1}(X) t + E_\ell(X). 
    \end{equation}
    The assumption that $E_m(X) \geq 0$ for all $m \in \{1,\ldots,\ell\}$ implies that all coefficients of $q_X(t)$ are non-negative.
    Therefore, the number of strictly positive roots of $q_{X}(t)$ is zero as $q_X(t) > 0$, whenever $t > 0$.  
    This establishes that if $E_{m}(X) \geq 0$ for all $m \in \{1,\ldots, \ell\}$, then $X \succeq 0$.
    The converse statement follows because $X \succeq 0$ implies $X^{\otimes m} \succeq 0$, and hence $E_m(X) \geq 0$, as $P_{\wedge^{m}L}$ is positive semidefinite.
\end{proof}

In the main text, we applied \cref{prop:sylvesters} to the positive semidefinite operator $X = \Lambda_{A^{n}B^{n}} - \rho_{AB}^{\ot n}$ from \cref{thm:stronger_forward} to obtain an polynomial entanglement witnesses $W_{A^{nm}B^{nm}}^{n,m}$ for $\rho_{AB}$ as in \cref{eq:sylvester_to_witness}.
The following lemma (after making the substitutions $\ell=d_{AB}^{n}$, $\omega = \rho_{AB}^{\otimes n}$ and $\Lambda = \Lambda_{A^{n}B^{n}}$) provides a formula for computing the witness operator $W_{A^{nm}B^{nm}}^{n,m}$ from $\Lambda_{A^{n}B^{n}}$ for each value of $m$ using
\begin{equation}
    W_{A^{nm}B^{nm}}^{n,m} \coloneqq \Gamma_m[\Lambda_{A^{n}B^{n}}].
\end{equation}
\begin{lem}
    \label{lem:witness_from_lambda}
    Let $\Lambda, \omega$ be $\ell \times \ell$ Hermitian matrices with $\Tr(\omega) = 1$.
    Then $\omega \preceq \Lambda$ if and only if, for all $m \in \{1, \ldots, \ell\}$
    \begin{equation}
        \Tr[\Gamma_m[\Lambda] \omega^{\ot m}] \geq 0
    \end{equation}
    where $\Gamma_m[\Lambda]$ is the Hermitian operator on $L^{\ot m}$ defined by
    \begin{equation}
        \label{eq:witness_formula}
        \Gamma_m[\Lambda] = \sum_{j=0}^{m} (-1)^{j}\binom{m}{j}  \Tr_{1,\ldots,m-j}[P_{\wedge^{m}L} (\Lambda^{\ot m-j}\ot I^{\ot j} )] \ot I^{\ot m-j}.
    \end{equation}
\end{lem}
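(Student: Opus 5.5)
The plan is to show that, for every Hermitian $\omega$ with $\Tr(\omega)=1$, the identity
\begin{equation*}
    \Tr[\Gamma_m[\Lambda]\,\omega^{\ot m}] = E_m(\Lambda-\omega) = \Tr[P_{\wedge^{m}L}(\Lambda-\omega)^{\ot m}]
\end{equation*}
holds. The lemma then follows immediately from \cref{prop:sylvesters} applied to $X=\Lambda-\omega$: $\omega\preceq\Lambda$ if and only if $E_m(\Lambda-\omega)\geq 0$ for all $m\in\{1,\ldots,\ell\}$.

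\emph{Expanding the tensor power.} I will write $(\Lambda-\omega)^{\ot m}$ as a sum over subsets $S\subseteq\{1,\ldots,m\}$ of the tensor positions occupied by $\omega$. Each such term carries sign $(-1)^{|S|}$ and has $\omega$ on $S$ and $\Lambda$ elsewhere. The key symmetry fact is that $P_{\wedge^{m}L}$ commutes with every permutation operator $T(\pi)$; indeed $T(\pi)P_{\wedge^m L}=\mathrm{sgn}(\pi)P_{\wedge^m L}$. Hence conjugating a term by a permutation that moves $S$ to the last $j=|S|$ positions leaves its trace against $P_{\wedge^m L}$ unchanged. This gives
\begin{equation*}
    E_m(\Lambda-\omega)=\sum_{j=0}^{m}(-1)^j\binom{m}{j}\Tr[P_{\wedge^{m}L}(\Lambda^{\ot m-j}\ot\omega^{\ot j})].
\end{equation*}

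\emph{Rewriting each term as an expectation of $\omega^{\ot m}$.} For each $j$, taking the partial trace over the first $m-j$ factors gives
\begin{equation*}
    \Tr[P_{\wedge^{m}L}(\Lambda^{\ot m-j}\ot\omega^{\ot j})]=\Tr\big[\Tr_{1,\ldots,m-j}[P_{\wedge^{m}L}(\Lambda^{\ot m-j}\ot I^{\ot j})]\,\omega^{\ot j}\big].
\end{equation*}
Using $\Tr(\omega)=1$, I then multiply by $\Tr(\omega)^{m-j}=1$ to pad this to $\Tr[(Y_j\ot I^{\ot m-j})\,\omega^{\ot m}]$, where $Y_j$ denotes the partial trace above. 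The placement of $Y_j$ on the first $j$ factors rather than the last is immaterial, since $\omega^{\ot m}$ is permutation invariant. Summing over $j$ reproduces exactly $\Tr[\Gamma_m[\Lambda]\,\omega^{\ot m}]$ with $\Gamma_m[\Lambda]$ as in \cref{eq:witness_formula}. Note that $\Gamma_m[\Lambda]$ is independent of $\omega$, as required.

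\emph{Hermiticity.} It remains to check that $\Gamma_m[\Lambda]$ is Hermitian. The adjoint of $Y_j$ is $\Tr_{1,\ldots,m-j}[(\Lambda^{\ot m-j}\ot I^{\ot j})P_{\wedge^m L}]$. Because $\Lambda^{\ot m-j}\ot I^{\ot j}$ acts nontrivially only on the traced-out factors, partial-trace cyclicity over those factors moves it back to the right of $P_{\wedge^m L}$, so $Y_j$ is Hermitian. The main point needing care is the bookkeeping in the first step: the claim that all $\binom{m}{j}$ subsets with $|S|=j$ contribute equally. This rests on the permutation covariance of $P_{\wedge^m L}$; the sign characters cancel because the same permutation is applied on both sides of the conjugation.
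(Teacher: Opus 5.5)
Your proposal is correct and follows essentially the same route as the paper. Both arguments reduce to Sylvester's criterion for $X=\Lambda-\omega$, expand $(\Lambda-\omega)^{\ot m}$ using the permutation invariance of $P_{\wedge^{m}L}$, and use $\Tr(\omega)=1$ to turn each term into an expectation against $\omega^{\ot m}$; your explicit Hermiticity check via partial-trace cyclicity is a small detail the paper asserts without proof.
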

\begin{proof}
    From Sylvester's criterion we have $\omega \preceq \Lambda$ if and only if for all $m \in \{1, \ldots, \ell\}$
    \begin{equation}
        E_m(\Lambda - \omega) =\Tr(P_{\wedge^{m}L} (\Lambda - \omega)^{\ot m}) \geq 0.
    \end{equation}
    Now we apply the binomial expansion to $(\Lambda - \omega)^{\ot m}$ and note that $P_{\wedge^{m}L}$ is invariant under conjugation by permutations in $S_m$ so we obtain
    \begin{align}
         E_m(\Lambda - \omega)
         &= \sum_{j=0}^{m} (-1)^{j} \binom{m}{j} \Tr(P_{\wedge^{m}L} \Lambda^{\ot m-j} \ot \omega^{\ot j}).
    \end{align}
    From here we see that the definition of $\Gamma_m$ in \cref{eq:witness_formula} ensures 
    \begin{equation}
        E_m(\Lambda - \omega) = \Tr(\Gamma_m[\Lambda] \omega^{\ot m}),
    \end{equation}
    provided that $\Tr(\omega) = 1$.
\end{proof}

\section{Twirled pinching inequalities}
In this section, we establishes an operator inequality relating a positive semidefinite operator $X$ to its image $\twirl_{G}(X)$ under \textit{twirling}, or equivalently \textit{pinching}, by a compact group $G$.

Throughout, let $\pi : G \to \mathrm{GL}(H)$ be a unitary representation of $G$ on a finite-dimensional complex vector space $H = \mathbb C^{D}$. We assume that $H$ decomposes into irreducible subrepresentations $V_{\lambda}$ as
\begin{equation}
    \label{eq:mashcke_decomp}
    H = \bigoplus_{\lambda \in \hat G} V_{\lambda} \otimes M_{\lambda},
\end{equation}
with $M_{\lambda}$ denoting the corresponding multiplicity spaces.

Let $\twirl_{G}$ denote the twirling map with respect to $G$:
\begin{align}
    \twirl_{G}(X) 
    &= \int d \mu(g) \pi(g) X \pi(g^{-1}) , \\
    &= \bigoplus_{\lambda \in \hat G} \frac{I_{V_{\lambda}}}{d_{\lambda}} \otimes \Tr_{V_{\lambda}}(P_{\lambda} X P_{\lambda}).
\end{align}
Here $d_{\lambda} = \dim V_{\lambda}$ and $P_{\lambda}$ denotes the projection onto the $\lambda$-isotypic subspace $V_{\lambda} \otimes M_{\lambda} \subseteq H$.

We now prove an operator bound between any positive semidefinite operator $X \succeq 0$ and its image under twirling $\twirl_G(X)$. This bound is related to Hayashi's pinching inequality \cite[Lemma 9]{hayashi2002optimal}, adapted to the subspaces of $H$ in \cref{eq:mashcke_decomp}.
\begin{thm}
    \label{thm:pinching}
    For every positive semidefinite operator $X \succeq 0$, we have
    \begin{equation}
        X \preceq \Gamma_{G} \twirl_G(X),
    \end{equation}
    where  
    \begin{equation}
        \Gamma_{G} = \sum_{\lambda \in \hat G} d_{\lambda} \min\{d_\lambda, m_\lambda\},
    \end{equation}
    with $d_{\lambda} = \dim V_\lambda$ and $m_{\lambda} = \dim M_{\lambda}$.
\end{thm}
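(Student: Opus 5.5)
The plan is to reduce to a single isotypic component, prove a pinching-type bound there, and then sum over components.

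\textbf{Step 1: split into isotypic blocks.} The projections $P_\lambda$ are orthogonal and sum to the identity, so Hayashi's pinching inequality for the decomposition $\{P_\lambda\}$ gives
\begin{equation}
X \preceq |\hat G_H| \sum_\lambda P_\lambda X P_\lambda .
\end{equation}
The constant $|\hat G_H|$ here is too crude. I will instead use a weighted Cauchy--Schwarz bound. For a vector $v = \sum_\lambda v_\lambda$ and weights $c_\lambda > 0$ with $\sum_\lambda c_\lambda = \Gamma_G$, we have
\begin{equation}
\Bigl|\sum_\lambda \langle v_\lambda, w\rangle\Bigr|^2 \leq \Bigl(\sum_\lambda c_\lambda\Bigr) \sum_\lambda c_\lambda^{-1} |\langle v_\lambda, w\rangle|^2 .
\end{equation}
Since $X \succeq 0$ is a sum of rank-one terms $\ketbra{w}$, this yields
\begin{equation}
X \preceq \Gamma_G \sum_\lambda c_\lambda^{-1} P_\lambda X P_\lambda .
\end{equation}
I take $c_\lambda = d_\lambda \min\{d_\lambda, m_\lambda\}$. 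It then suffices to show, on each block $V_\lambda \otimes M_\lambda$ and for every $Y \succeq 0$,
\begin{equation}
Y \preceq d_\lambda \min\{d_\lambda, m_\lambda\}\, \frac{I_{V_\lambda}}{d_\lambda} \otimes \Tr_{V_\lambda}(Y) = \min\{d_\lambda, m_\lambda\}\, I_{V_\lambda} \otimes \Tr_{V_\lambda}(Y).
\end{equation}

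\textbf{Step 2: the single-block bound.} This is the main step. By linearity and positivity it suffices to treat rank-one $Y = \ketbra{\psi}$ with $\ket{\psi} \in V \otimes M$. Write the Schmidt decomposition $\ket{\psi} = \sum_{i=1}^{s} \sqrt{p_i}\, \ket{e_i}\ket{f_i}$, where $s \leq \min\{d_\lambda, m_\lambda\}$. Then
\begin{equation}
\Tr_V \ketbra{\psi} = \sum_i p_i \ketbra{f_i} .
\end{equation}
The needed bound is $\ketbra{\psi} \preceq s\, I_V \otimes \sum_i p_i \ketbra{f_i}$. Restrict to the support, spanned by $\ket{e_j}\ket{f_i}$. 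For any test vector $\ket{\phi} = \sum_{j,i} a_{ji} \ket{e_j}\ket{f_i}$, Cauchy--Schwarz gives
\begin{equation}
|\langle \phi|\psi\rangle|^2 = \Bigl|\sum_i \sqrt{p_i}\, a_{ii}\Bigr|^2 \leq s \sum_i p_i |a_{ii}|^2 \leq s \sum_{j,i} p_i |a_{ji}|^2 = s\, \langle \phi|\, I \otimes \textstyle\sum_i p_i \ketbra{f_i} \,|\phi\rangle .
\end{equation}
Components of $\ket{\phi}$ orthogonal to the $\ket{f_i}$ only increase the right-hand side, so the inequality holds on the whole block.

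\textbf{Step 3: combine.} Summing the block bounds with the weights from Step 1 gives
\begin{equation}
X \preceq \Gamma_G \sum_\lambda \frac{I_{V_\lambda}}{d_\lambda} \otimes \Tr_{V_\lambda}(P_\lambda X P_\lambda) = \Gamma_G\, \twirl_G(X),
\end{equation}
using the block formula for $\twirl_G$ stated above. The only point needing care is verifying the weighted Cauchy--Schwarz operator inequality of Step 1 with the chosen weights $c_\lambda$.
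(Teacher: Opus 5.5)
Your proposal is correct and follows essentially the same route as the paper: reduce to rank-one operators, bound each isotypic block by $s\, I_{V_\lambda}\otimes \Tr_{V_\lambda}(\cdot)$ with $s\le\min\{d_\lambda,m_\lambda\}$ the Schmidt rank, and combine the blocks with a weighted Cauchy--Schwarz inequality using weights $d_\lambda\min\{d_\lambda,m_\lambda\}$. The only differences are that you apply the weighted Cauchy--Schwarz step first, as a weighted pinching inequality for $\sum_\lambda c_\lambda^{-1}P_\lambda X P_\lambda$, and that you actually prove the Schmidt-rank bound, which the paper only asserts (blocks with $m_\lambda=0$ should simply be dropped so that all weights are positive).
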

\begin{proof}
    It suffices to prove the inequality for rank-one operators $X = \ketbra{\Psi}$, since the general case follows by spectral decomposition: if $X = \sum_{i}x_i \ketbra{\Psi_i}$ and $x_i\ge 0$ then the desired inequality for each $\ketbra{\Psi_i}$ implies it for $X$.
    Let $\ket{\Psi} = \bigoplus_{\lambda} \ket{\psi_{\lambda}}$ be the decomposition of $\ket{\Psi} \in H$ into isotypic blocks, where $\ket{\psi_{\lambda}} \in V_{\lambda} \otimes M_{\lambda}$ and note that
    \begin{align}
        \twirl_G(\ketbra{\Psi}) 
        &= \bigoplus_{\lambda \in \hat G} \frac{I_{V_{\lambda}}}{d_{\lambda}} \otimes \Tr_{V_{\lambda}}(\ketbra{\psi_{\lambda}}).
    \end{align}
    Now, for any $\ket{\psi_{\lambda}} \in V_{\lambda} \otimes M_{\lambda}$ with Schmidt rank $s \leq \min\{d_{\lambda}, m_{\lambda}\}$, one always has
    \begin{equation}
        \ketbra{\psi_{\lambda}} \preceq s I_{V_{\lambda}} \ot \Tr_{V_{\lambda}}(\ketbra{\psi_{\lambda}}).
    \end{equation}
   Hence
    \begin{equation}
        \ketbra{\psi_{\lambda}} \preceq \gamma_{\lambda} \left(\frac{I_{V_{\lambda}}}{d_{\lambda}} \ot \Tr_{V_{\lambda}}(\ketbra{\psi_{\lambda}})\right),
    \end{equation}
    where $\gamma_{\lambda} \coloneq d_{\lambda} \min \{d_{\lambda}, m_{\lambda}\}$.
    Let $\ket{\Phi} \in H$ be an arbitrary vector with block decomposition $\ket{\Phi} = \bigoplus_{\lambda} \ket{\phi_{\lambda}}$.
    Applying the Cauchy--Schwarz inequality, we obtain
    \begin{align}
        \abs{\braket{\Phi, \Psi}}^{2}
        &= \abs{\sum_{\lambda} \braket{\phi_\lambda, \psi_\lambda}}^2 \\
        &= \abs{\sum_{\lambda} \sqrt\gamma_{\lambda}\frac{\braket{\phi_\lambda, \psi_\lambda}}{\sqrt\gamma_\lambda}}^2 \\\
        &\leq \left(\sum_{\mu} \gamma_{\mu}\right) \left(\sum_{\lambda}\frac{\abs{\braket{\phi_\lambda, \psi_\lambda}}^{2}}{\gamma_\lambda}\right) \\
        &\leq \left(\sum_{\mu} \gamma_{\mu}\right) \sum_{\lambda}\bra{\Phi}\left(\frac{I_{V_{\lambda}}}{d_{\lambda}} \ot \Tr_{V_{\lambda}}(\ketbra{\psi_{\lambda}})\right)\ket{\Phi} \\
        &= \Gamma_{G} \bra{\Phi} \twirl_G(\ketbra{\Psi}) \ket{\Phi}.
    \end{align}
    This proves the claim for $X = \ketbra{\Psi}$, and hence for all $X \succeq 0$.
\end{proof}

\section{The local unitary Fourier transform}
\label{sec:fourier_transform}
We now exploit the symmetries of our operator inequalities to block-diagonalize the characterization of \cref{thm:stronger_forward}. This yields an equivalent formulation of separability in terms of the Fourier components of $\rho_{AB}$ with respect to the local unitary group $G = U(d_{A}) \times U(d_{B})$.
To begin, we recall Schur--Weyl duality asserts that the Hilbert space $H^{\otimes n} = (\mathbb C^{d})^{\otimes n}$, viewed as a representation of $S_n\times U(d)$, can be decomposed into sectors indexed by (sorted) integer partitions $\lambda = (\lambda_1\ldots, \lambda_{d})$ of $n$, with at most $d$ nonzero parts:
\begin{equation}
    \label{eq:schur_weyl}
    H^{\otimes n} 
    = 
    \bigoplus_{\lambda \vdash n} V_{\lambda} \ot U_{\lambda}^{d}.
\end{equation}
Here $V_{\lambda}$ carries an irreducible representation of the symmetric group $S_n$, while $U_{\lambda}^{d}$ carries an irreducible representation of the unitary group $U(d)$ where $d = \dim(H)$. 
Implicit in this decomposition is the condition that $\lambda$ has at most $d$ nonzero parts, since otherwise $U_{\lambda}^{d} = \{0\}$.

With respect to this decomposition, the $S_n$-twirling channel acts on an operator $X$ on $H^{\otimes n}$ as
\begin{align}
    \twirl_{S_n}(X) = \bigoplus_{\lambda \vdash n} \frac{I_{V_{\lambda}}}{\dim V_{\lambda}} \ot \Tr_{V_{\lambda}} [ P_{\lambda} X P_{\lambda} ],
\end{align}
where $P_{\lambda}$ denotes the orthogonal projection onto the subspace $V_{\lambda} \ot U_{\lambda}^{d} \subseteq (\mathbb C^{d})^{\otimes n}$.
Similarly, the bipartite space $(A\otimes B)^{\otimes n}$ decomposes, as a representation of $S_n \times U(d_{A}) \times U(d_{B})$, as
\begin{equation}
    (A \otimes B)^{\otimes n} = \bigoplus_{\alpha, \beta, \lambda \vdash n} V_{\lambda} \ot U_{\alpha}^{a} \otimes U_{\beta}^{b} \otimes M_{\alpha\beta}^{\lambda}.
\end{equation}
Here $\alpha, \beta$, and $\lambda$ are (sorted) integer partitions of $n$ with at most $d_{A}, d_{B}$, and $d_{AB}$ nonzero parts, respectively.
The multiplicity space is given by
\begin{equation}
    M_{\alpha\beta}^{\lambda} = \mathrm{Hom}_{U(d_{A}) \times U(d_{B})}(U_{\alpha}^{d_{A}} \otimes U_{\beta}^{d_{B}}, U_{\lambda}^{d_{AB}}),
\end{equation}
and its dimension
\begin{equation}
    g_{\alpha \beta}^{\lambda} \coloneqq \dim M_{\alpha\beta}^{\lambda}
\end{equation}
is the corresponding \textit{Kronecker coefficient}. 

Given an operator $X$ on $(A\otimes B)^{\otimes n}$, twirling with respect to the group $S_n \times U(d_{A}) \times U(d_{B})$ gives
\begin{align}
\begin{split}
    \twirl_{S_n\times U(d_{A}) \times U(d_{B})}(X)
    &= \bigoplus_{\lambda,\alpha,\beta \vdash n}\frac{I_{V_{\lambda}}}{\dim V_{\lambda}} \ot \frac{I_{U_{\alpha}^{d_{A}}}}{\dim U_{\alpha}^{d_{A}}} \ot \frac{I_{U_{\beta}^{d_{B}}}}{\dim U_{\beta}^{d_{B}}} \ot K_{\alpha\beta}^{\lambda}(X),
\end{split}
\end{align}
where $K_{\alpha\beta}^{\lambda} (X)$ is the operator on $M_{\alpha\beta}^{\lambda}$ defined by
\begin{equation}
    K_{\alpha\beta}^{\lambda} (X) \coloneqq \Tr_{V_{\lambda} \ot U_{\alpha}^{d_{A}} \otimes U_{\beta}^{d_{B}}} [ P_{\alpha\beta}^{\lambda} X P_{\alpha\beta}^{\lambda} ].
\end{equation}
Here, $P_{\alpha\beta}^{\lambda} = P_{\lambda}(P_{\alpha} \ot P_{\beta})$ denotes the orthogonal projection onto the $(\alpha,\beta,\lambda)$-isotypic subspace of $(A\otimes B)^{\otimes n}$.
By construction, $K_{\alpha\beta}^{\lambda}$ is $S_n \times U(d_{A}) \times U(d_{B})$-invariant, in the sense that
\begin{equation}
    K_{\alpha\beta}^{\lambda} \circ \twirl_{S_n\times U(d_{A}) \times U(d_{B})} = K_{\alpha\beta}^{\lambda}.
\end{equation}
The trace of $K_{\alpha\beta}^{\lambda} (\rho_{AB}^{\otimes n})$, namely
\begin{equation}
    q_\rho(\alpha,\beta,\lambda) \coloneqq \Tr(P_{\alpha\beta}^{\lambda}\rho_{AB}^{\otimes n}),
\end{equation}
defines a probability distribution over triples of partitions $\alpha,\beta,\lambda \vdash n$.
This distribution was studied by Christandl \& Mitchison \cite{christandl2006spectra}, who showed, using spectrum-estimation techniques \cite{keyl2001estimating}, that it concentrates around the spectra of $\rho_{AB}$ and of its marginals $\rho_{A}$ and $\rho_{B}$.

More generally, by the Peter--Weyl theorem \cite[Chapter 5]{folland2016course}, every polynomial function of $\rho_{AB}$ that is invariant under $(U(d_{A})\times U(d_{B}))$ can be expanded as a linear combination of matrix coefficients of the operators $K_{\alpha\beta}^{\lambda} (\rho_{AB}^{\otimes n})$, as $\alpha,\beta, \lambda$ range over partitions of $n \in \mathbb N$. In this sense, the operators $K_{\alpha\beta}^{\lambda} (\rho_{AB}^{\otimes n})$, together with their matrix coefficients, are precisely the Fourier components of $\rho_{AB}$ with respect to the non-abelian local-unitary Fourier transform. 

In this language, the Fourier-transform version of our main result takes the following form.
\begin{cor}
    \label{cor:fourier_transform_ineq}
    A state $\rho_{AB}$ belongs to $ \SEPAB$ $\Longleftrightarrow$ for all $n$ and all $\alpha,\beta,\lambda \vdash n$,
    \begin{equation}
        K_{\alpha\beta}^{\lambda}(\rho_{AB}^{\otimes n}) \preceq K_{\alpha\beta}^{\lambda}(\Lambda_{A^{n}B^{n}}).
    \end{equation}
\end{cor}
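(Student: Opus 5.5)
The plan is to reduce \cref{cor:fourier_transform_ineq} to the combination of \cref{thm:stronger_forward} and \cref{cor:reverse}, using the fact that both $\rho_{AB}^{\otimes n}$ and $\Lambda_{A^nB^n}$ are $S_n$-invariant. Then the inequality $\rho_{AB}^{\otimes n}\preceq\Lambda_{A^nB^n}$ block-diagonalizes along the isotypic decomposition, with the local-unitary symmetry handled by an averaging argument.

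For ($\Rightarrow$), suppose $\rho_{AB}\in\SEPAB$. By \cref{thm:stronger_forward}, $X\coloneqq\Lambda_{A^nB^n}-\rho_{AB}^{\otimes n}\succeq 0$. The map $K_{\alpha\beta}^{\lambda}$ is a compression $Y\mapsto P_{\alpha\beta}^{\lambda}YP_{\alpha\beta}^{\lambda}$ followed by a partial trace, so it is completely positive and in particular positivity-preserving. Linearity then gives $K_{\alpha\beta}^{\lambda}(\rho_{AB}^{\otimes n})\preceq K_{\alpha\beta}^{\lambda}(\Lambda_{A^nB^n})$ for every triple.

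For ($\Leftarrow$), assume the block inequalities hold for all $n,\alpha,\beta,\lambda$. Setting $\mathcal T\coloneqq\twirl_{S_n\times U(d_A)\times U(d_B)}$, the displayed formula for $\mathcal T$ shows that $\mathcal T(X)\succeq0$ for $X=\Lambda_{A^nB^n}-\rho_{AB}^{\otimes n}$, since it is a direct sum of positive multiples of $I\otimes K_{\alpha\beta}^{\lambda}(X)$. Next, $\Lambda_{A^nB^n}$ is invariant under $S_n$ and under local unitaries $U^{\otimes n}\otimes V^{\otimes n}$: each $\Pi_{\Sym^{\kappa}(A)}$ commutes with $U^{\otimes n}$, and the twirl over $S_n$ supplies permutation invariance. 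Hence $\mathcal T(\Lambda_{A^nB^n})=\Lambda_{A^nB^n}$, and we obtain
\begin{equation}
\Omega'_{n}\coloneqq\int dU\,dV\,\bigl((U\otimes V)\rho_{AB}(U\otimes V)^{\dagger}\bigr)^{\otimes n}\preceq\Lambda_{A^nB^n}.
\end{equation}
The $S_n$ part of the twirl acts trivially on tensor powers.

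The main obstacle is that this only controls the local-unitary orbit average of $\rho_{AB}^{\otimes n}$, not $\rho_{AB}^{\otimes n}$ itself. I would rerun the argument of \cref{thm:completeness}. Take the measurement $M_n$ from \cref{lem:likelihood_ratio_fidelity} built for $\rho_{AB}$. Restricting the Haar integral to a small neighbourhood $N_\epsilon$ of the identity, where $\Tr(M_n\rho_{UV}^{\otimes n})$ is comparable to $\Tr(M_n\rho_{AB}^{\otimes n})$, gives
\begin{equation}
\mathrm{vol}(N_\epsilon)\min_{N_\epsilon}\Tr(M_n\rho_{UV}^{\otimes n})\le\Tr(M_n\Lambda_{A^nB^n})\le f_{AB}(n)\sup_{\sigma\in\SEP}\Tr(M_n\sigma^{\otimes n}),
\end{equation}
where the last step uses \cref{lem:omega_is_poly_lambda} and the trivial inequality in \cref{cor:avg_mean_likelihood}.

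A cleaner alternative avoids continuity estimates. Since $\SEP$ and $F_s$ are local-unitary invariant, take $M_n$ to be the local-unitary twirl of the $\rho_{AB}$-adapted measurement. Then $\Tr(M_n\Omega'_n)=\Tr(M_n\rho_{AB}^{\otimes n})$, and the likelihood-ratio bound against separable $\sigma$ becomes
\begin{equation}
\int dU\,dV\;C(\rho_{AB})^{-1}F(\rho_{UV},\sigma)^{n-d_{AB}^2}\le C(\rho_{AB})^{-1}F_s(\rho_{AB})^{n-d_{AB}^2}.
\end{equation}
This holds because $F(\rho_{UV},\sigma)=F(\rho_{AB},\sigma')$ with $\sigma'$ separable. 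One checks that the denominator $\Tr(M_n\rho_{AB}^{\otimes n})$ is unchanged by the twirl, by the same invariance. This reproduces \cref{eq:G_bound} with the same polynomial $f_{AB}(n)$, so $G(\rho_{AB})=O(\log n/n)\to0$ and $\rho_{AB}\in\SEPAB$, exactly as in \cref{cor:reverse}.
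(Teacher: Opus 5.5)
Your ($\Rightarrow$) direction is correct and more direct than the paper's: $K_{\alpha\beta}^{\lambda}$ is a positive map, so it preserves $\Lambda_{A^nB^n}-\rho_{AB}^{\otimes n}\succeq 0$. The first half of ($\Leftarrow$), which gives $\Omega'_n\preceq\Lambda_{A^nB^n}$, is also fine. The gap is in your ``cleaner alternative''.

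Set $\widetilde M_n\coloneqq\int dU\,dV\,(U\otimes V)^{\otimes n}M_n(U\otimes V)^{\dagger\otimes n}$. The left end of your chain is $\Tr(\widetilde M_n\Omega'_n)=\Tr(\widetilde M_n\rho_{AB}^{\otimes n})=\int dU\,dV\,\Tr(M_n\rho_{UV}^{\otimes n})$. This is the $\rho_{AB}$-adapted measurement evaluated on \emph{rotated} states. Local-unitary invariance only says that each rotated measurement, evaluated on its own rotated state, returns $\Tr(M_n\rho_{AB}^{\otimes n})$. It does not make $\Tr(\widetilde M_n\rho_{AB}^{\otimes n})$ equal to $\Tr(M_n\rho_{AB}^{\otimes n})$.

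In fact \cref{lem:likelihood_ratio_fidelity} itself gives $\Tr(M_n\rho_{UV}^{\otimes n})\le C(\rho_{AB})^{-1}F(\rho_{AB},\rho_{UV})^{n-d_{AB}^2}\Tr(M_n\rho_{AB}^{\otimes n})$. This is exponentially smaller whenever $\rho_{UV}\neq\rho_{AB}$, so the claimed equality fails in general. What your chain actually proves is
$\Tr(\widetilde M_n\rho_{AB}^{\otimes n})\le f_{AB}(n)C(\rho_{AB})^{-1}F_s(\rho_{AB})^{n-d_{AB}^2}\Tr(M_n\rho_{AB}^{\otimes n})$.
To conclude, you still need $\Tr(M_n\rho_{AB}^{\otimes n})\le\mathrm{poly}(n)\,\Tr(\widetilde M_n\rho_{AB}^{\otimes n})$. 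That is exactly the orbit-average-versus-single-point obstacle you identified, restated rather than avoided.

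The missing ingredient is the twirled pinching inequality, \cref{thm:pinching}, which is what the paper uses:
$\rho_{AB}^{\otimes n}\preceq\Gamma(n)\,\twirl_{U(d_A)\times U(d_B)}(\rho_{AB}^{\otimes n})=\Gamma(n)\,\Omega'_n$, with $\Gamma(n)=\sum_{\alpha,\beta}\dim(U^{d_A}_\alpha)^2\dim(U^{d_B}_\beta)^2$ polynomial in $n$. Combined with your $\Omega'_n\preceq\Lambda_{A^nB^n}$ and \cref{lem:omega_is_poly_lambda}, this gives $\rho_{AB}^{\otimes n}\preceq\Gamma(n)f_{AB}(n)\Omega_{A^nB^n}$. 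Then \cref{thm:completeness} applies verbatim with a prefactor that is still polynomial, and no twirled measurement is needed.

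Your first alternative could also supply the missing lower bound, but not as stated. On a fixed neighbourhood $N_\epsilon$ the minimum of $\Tr(M_n\rho_{UV}^{\otimes n})$ is exponentially small relative to $\Tr(M_n\rho_{AB}^{\otimes n})$, so ``comparable'' is false there. To make it work you would need:
\begin{itemize}
\item a shrinking neighbourhood $N_{\epsilon_n}$ with $\epsilon_n\sim n^{-1/2}$;
\item the estimate $D^{*}(\rho_{AB}\|\rho_{UV})=O(\epsilon_n^2)$, which holds because the relevant leading principal minors stay positive, so $D^{*}$ is smooth with a minimum at the identity;
\item a lower bound on the rounding factor $\Delta_\delta$;
\item the observation that $\mathrm{vol}(N_{\epsilon_n})$ is only polynomially small in $n$.
\end{itemize}
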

\begin{proof}
    \textbf{(}$\boldsymbol{\Longrightarrow}$\textbf{)}
    The forward implication follows from the $U(d_{A}) \times U(d_{B})$-invariance of $\SEPAB$, together with \cref{thm:stronger_forward}. Indeed,
    by definition,
    \begin{equation}
        \twirl_{U(d_{A}) \times U(d_{B})}(\rho_{AB}^{\otimes n}) = \iint d U d V [(U \otimes V) \rho_{AB} (U^{\dagger} \otimes V^{\dagger})]^{\otimes n}.
    \end{equation}
    Since every state appearing in the integrand above belongs to $\SEPAB$, the forward direction of \cref{thm:stronger_forward} applies pointwise, and the claimed inequality follows.

    \textbf{(}$\boldsymbol{\Longleftarrow}$\textbf{)}
    The reverse implication is not immediate from  \cref{thm:completeness}, since in general $\rho_{AB}^{\otimes n} \neq \twirl_{U(d_{A}) \times U(d_{B})}(\rho_{AB}^{\otimes n})$. 
    To address this, we apply the pinching inequality from \cref{thm:pinching}, yielding
    \begin{equation}
        \rho_{AB}^{\otimes n} = \twirl_{S_n}(\rho_{AB}^{\otimes n}) \preceq \Gamma(n) \twirl_{U(d_{A}) \times U(d_{B})}(\rho_{AB}^{\otimes n}),
    \end{equation}
    where $\Gamma_{U(d_{A})\times U(d_{B})}(n) \in \mathrm{poly}(n)$ may be chosen as
    \begin{equation}
        \Gamma_{U(d_{A})\times U(d_{B})}(n) = \sum_{\alpha, \beta \vdash n} \dim(U_{\alpha}^{d_{A}})^2\dim(U_{\beta}^{d_{B}})^2.
    \end{equation}
    Finally, the proof of the converse direction in \cref{thm:completeness} is unaffected by multiplication with polynomial factors in $n$. 
    The same argument therefore applies here, with $f_{AB}(n)$ replaced by $f_{AB}(n) \Gamma_{U(d_{A})\times U(d_{B})}(n)$.
\end{proof}
The advantage of \cref{cor:fourier_transform_ineq} over \cref{thm:forward} is that it involves only operators acting on the multiplicity spaces $M_{\alpha\beta}^{\lambda}$, whose dimensions grow polynomially in $n$ when the numbers of parts of $\alpha,\beta$, and $\lambda$ are bounded \cite{pak2023durfee}. By contrast, \cref{thm:forward} involves operators on $(A\otimes B)^{\otimes n}$, whose dimension grows exponentially with $n$.
This reduction in complexity essentially mimics the symmetry-reductions from exponentially-sized to polynomially-sized matrices, required to perform algorithms for entanglement detection based on hierarchies of semidefinite programs \cite{doherty2004complete,harrow2017improved,pena2025tailored}.

\section{Quantum state estimation}
\label{sec:state_estimation}

The goal of this section is to provide a proof of \cref{lem:likelihood_ratio_fidelity} which relies on the covariant quantum state estimation scheme introduced by Keyl \cite{keyl2006quantum}.
Although this result was already established in \cite[Appendix J, Theorem 15]{fraser2022sufficient}, we reproduce the proof details here for completeness.
We being by reviewing the essential representation-theoretic features of this estimation scheme before turning to the proof of \cref{lem:likelihood_ratio_fidelity}.

Let $H = \mathbb C^{d}$ be a $d$-dimensional complex Hilbert space equipped with its standard inner product. Its $n$-th tensor power $H^{\otimes n}$ supports a representation of $U(d) \times S_n$: the symmetric group $S_n$ acts by permuting the tensor factors, as in \cref{eq:tensor_perm}), while the unitary group $U(d)$ acts diagonally by $U^{\otimes n}$.

Schur-Weyl duality states that $H^{\otimes n}$ decomposes into irreducible subrepresentations indexed by sorted integer partitions $\lambda = (\lambda_1 \geq \ldots \geq \lambda_d)$ of $n$, or Young diagrams, with at most $d$ parts, denoted by $\lambda \vdash_{d} n$:
\begin{equation}
    H^{\otimes n} = \bigoplus_{\lambda \vdash_{d} n} V_{\lambda} \ot U_{\lambda}^{d}.
\end{equation}
Here, $V_{\lambda}$ carries an irreducible representation of the symmetric group $S_n$, while $U_{\lambda}^{d}$ carries the irreducible representation $\pi_{\lambda}$ of the unitary group $U(d)$ with highest weight $\lambda$ \cite{hall2015lie}.
In particular, there exists a unique normalized highest-weight vector $\ket{\phi_{\lambda}} \in U_{\lambda}^{d}$ satisfying
\begin{equation}
    \pi_{\lambda}(\mathrm{diag}(x_1, \ldots, x_d)) \ket{\phi_{\lambda}} 
    = x_1^{\lambda_1} \cdots x_d^{\lambda_d} \ket{\phi_{\lambda}}.
\end{equation}
Following the notation of \cite{fraser2022sufficient}, for each sorted integer partition $\lambda \vdash_{d} n$ and each unitary $U \in U(d)$, let $\Phi_{\lambda}^{U}$ denote the projection operator on $H^{\otimes n}$ given by
\begin{equation}
    \label{eq:coherent_projectors}
    \Phi_{\lambda}^{U} = I_{V_{\lambda}} \otimes \ketbra{\phi_{\lambda}^{U}}, 
\end{equation}
where $\ket{\phi_{\lambda}^{U}}$ is the normalized $U$-twirled highest-weight vector $\ket{\phi_{\lambda}^{U}} = \pi_{\lambda}(U)\ket{\phi_{\lambda}}$.
For any state $\sigma$, we have the identity \cite[Eq. (141)]{keyl2006quantum}, proven in \cite[Section 49]{zhelobenko1973compact},
\begin{equation}
    \label{eq:gen_pow_func_identity}
    \Tr(\Phi_{\lambda}^{U} \sigma^{\otimes n}) 
    = \dim(V_{\lambda}) \Delta_{\lambda}(U^{\dagger}\sigma U).
\end{equation}
Here, for any vector $r = (r_1 \geq r_2 \geq \cdots \geq r_d) \in \mathbb R^{d}_{\geq 0}$, the function $\Delta_{r}(X)$ is the \textit{generalized power function} \cite{odonnell2016efficient}, defined by
\begin{equation}
    \Delta_{r}(X) = \prod_{j=1}^{d} \mathrm{pm}_j(X)^{r_j-r_{j+1}},
\end{equation}
with $\mathrm{pm}_j(X)$ denoting the leading $j\times j$ principal minor of $X$.
The notational convention is that $r_{d+1} = 0$.
See also the approach based on the \textit{Weyl-trick} in \cite[Section 3]{audenaert2013alpha}.

Moreover, for each $\lambda\vdash_d n$ and unitary $U \in U(d)$, we let $\eta_{\lambda}^{U}$ denote the density matrix with rational eigenvalues
\begin{equation}
    \label{eq:rational_approx_rho}
    \eta_{\lambda}^{U} = U \mathrm{diag}\left(\frac{\lambda_1}{n},\ldots,\frac{\lambda_d}{n}\right)U^{\dagger}.
\end{equation}
Using this notation, the state-estimation POVM introduced in \cite{keyl2006quantum} assigns to each measurable subset $C \subseteq \mathcal S(H)$ the operator
\begin{equation}
    E_n(C) = \sum_{\lambda \vdash_d n} \dim(U_\lambda^{d}) \int_{U(d)} d U \delta_{C}(\eta_{\lambda}^{U}) \Phi_{\lambda}^{U},
\end{equation}
where $dU$ denotes the Haar measure on $U(d)$ and $\delta_{C}$ is the indicator function of $C$, that is, $\delta_{C}(\rho) = 1$ if $\rho \in C$ and $\delta_{C}(\rho) = 0$ otherwise.

In \cite{keyl2006quantum}, it was shown that for each state $\sigma$, the probability measure $\Tr[E_n(d \rho) \sigma^{\otimes n}]$ on $\mathcal S(H)$ satisfies the large deviation principle with rate function $\rho \mapsto D^{*}(\rho\|\sigma)$, where
\begin{equation}
    \label{eq:keyl_rel_ent}
    D^{*}(\rho\|\sigma) \coloneqq \sum_{j=1}^{d} [x_j \ln x_j -(x_j-x_{j+1}) \log \mathrm{pm}_j(U^{\dagger} \sigma U)].
\end{equation}
Here, $x_1 \geq \cdots \geq x_d$ are the eigenvalues of $\rho$, with the convention $x_{d+1} = 0$, and $U$ is a unitary diagonalizing $\rho$, so that $\rho = U \mathrm{diag}(x_1, \ldots, x_d) U^{\dagger}$  \cite{franks2020minimal,botero2021large,fraser2022sufficient,fraser2023estimation}.\footnote{In \cite{fraser2022sufficient}, the quantity $D^{*}(\rho\|\sigma)$ was referred to as the \textit{Keyl-divergence} and denoted by $K(\rho\|\sigma)$.}
We use the convention $0\log 0 = 0$ in \cref{eq:keyl_rel_ent}, as in \cite{keyl2006quantum}.
Finally, note the identities
\begin{equation}
    D^{*}(\rho\|\sigma) = -\log \left(\frac{\Delta_{x}(U^{\dagger} \sigma U)}{\Delta_{x}(\mathrm{diag(x)})}\right), 
    \qquad \text{and} \qquad \log \Delta_{x}(\mathrm{diag(x)}) = - H(x).
\end{equation}
Given states $\rho$ and $\sigma$, the quantity $D^{*}(\rho\|\sigma)$ is non-negative, with equality if and only if $\rho = \sigma$  \cite{keyl2006quantum}.
Although $D^{*}(\rho\|\sigma)$ behaves similarly to the quantum relative entropy $D(\rho \| \sigma) = \Tr(\rho (\log \rho - \log \sigma))$, the two quantities are distinct. In particular, 
\begin{equation}
    D^{*}(\rho\|\sigma) \leq D(\rho \| \sigma).
\end{equation}

In \cite[Appendix J, Theorem 15]{fraser2022sufficient}, this machinery was used to prove the following result. 
\begin{lem}
    \label{lem:likelihood_ratio_keyl_upper_bound}
    Let $\rho \in \mathcal S(H)$ be a quantum state on $H = \mathbb C^{d_{H}}$.
    Then, there exists a sequence of projective measurements $M_n \succeq 0$ such that, for every state $\sigma \in \mathcal S(H)$ and every $n \in \mathbb N$,
    \begin{equation}
        \mathcal R_{\sigma/\rho}(M_n) = \frac{\Tr(M_n \sigma^{\otimes n})}{\Tr(M_n \rho^{\otimes n})} \leq C(\rho)^{-1}\exp(-(n-d_{H}^{2}) D^{*}(\rho \| \sigma)),
    \end{equation}
    where $C(\rho) \in (0,1]$ is given by 
    \begin{equation}
        C(\rho) \coloneqq x_1^{s} x_2^{s-1} \cdots x_{s-1}^{2} x_{s}.
    \end{equation}
    Here $(x_1 \geq \cdots \geq x_{d_{H}})$ are the eigenvalues of $\rho$, and $s = \mathrm{rank}(\rho)$.
\end{lem}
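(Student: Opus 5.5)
We construct $M_n$ from Keyl's estimation POVM by restricting to estimates near $\rho$, and bound the two traces in the likelihood ratio separately. Write $\rho = U_0\,\mathrm{diag}(x_1,\ldots,x_d)\,U_0^{\dagger}$. For each $n$, pick a Young diagram $\lambda \vdash_d n$ approximating $n x$, for instance by rounding $n x_j$ so that the parts stay sorted and $\lambda_j = 0$ for $j > s$. Then set
\begin{equation*}
M_n = \Phi_{\lambda}^{U_0} = I_{V_\lambda}\otimes \ketbra{\phi_\lambda^{U_0}}.
\end{equation*}
This is a projector, so $0 \preceq M_n \preceq I$. By \cref{eq:gen_pow_func_identity},
\begin{equation*}
\Tr(M_n\sigma^{\otimes n}) = \dim(V_\lambda)\,\Delta_\lambda(U_0^\dagger\sigma U_0),
\end{equation*}
and the ratio becomes
\begin{equation*}
\mathcal R_{\sigma/\rho}(M_n) = \frac{\Delta_\lambda(U_0^\dagger \sigma U_0)}{\Delta_\lambda(\mathrm{diag}(x))}.
\end{equation*}
The $\dim V_\lambda$ factors cancel, which is the main reason for choosing a single $\lambda$ and $U_0$ rather than integrating over the POVM.

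For the denominator, $\Delta_\lambda(\mathrm{diag}(x)) = \prod_j x_j^{\lambda_j}$ exactly. For the numerator, use $\Delta_\lambda = \prod_j \mathrm{pm}_j^{\lambda_j-\lambda_{j+1}}$ and compare with the exponents $n(x_j - x_{j+1})$. The rounding error in each exponent is $O(1)$ per index, which is what the $d^2$ loss and the constant $C(\rho)$ absorb. Concretely, choose $\lambda_j = \lfloor (n-d^2)x_j\rfloor + c_j$ with $0 \le c_j \le d$ chosen to keep $\lambda$ sorted. Then
\begin{equation*}
\Delta_\lambda(U_0^\dagger\sigma U_0) \le \Delta_{(n-d^2)x}(U_0^\dagger\sigma U_0)
\end{equation*}
up to the extra factors, since principal minors of a state are at most $1$, and
\begin{equation*}
\prod_j x_j^{\lambda_j} \ge \prod_{j\le s} x_j^{(n-d^2)x_j}\cdot \prod_{j\le s} x_j^{s-j+1}.
\end{equation*}
The staircase exponents $s, s-1, \ldots, 1$ in $C(\rho)$ are exactly what one gets from forcing the differences $\lambda_j - \lambda_{j+1} \ge$ those of $(n-d^2)x$ plus at most one unit at each level. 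Combining the two bounds with the identity
\begin{equation*}
D^{*}(\rho\|\sigma) = -\log\bigl(\Delta_x(U_0^\dagger\sigma U_0)/\Delta_x(\mathrm{diag}(x))\bigr)
\end{equation*}
gives
\begin{equation*}
\mathcal R_{\sigma/\rho}(M_n) \le C(\rho)^{-1}\exp\bigl(-(n-d^2)D^*(\rho\|\sigma)\bigr).
\end{equation*}
For small $n \le d^2$ the bound is trivially handled by any valid choice, since the right side is then at least $C(\rho)^{-1}$ times a large factor. Alternatively one simply lets the rounding absorb this case.

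The main obstacle is the rounding bookkeeping: choosing integer $\lambda$ so that both the numerator's differences $\lambda_j - \lambda_{j+1}$ dominate $(n-d^2)(x_j - x_{j+1})$, which is needed because $\mathrm{pm}_j \le 1$, and the denominator loss is controlled by exactly $C(\rho)$. The plan is to define $\lambda$ via its difference sequence, $\lambda_j - \lambda_{j+1} = \lceil (n-d^2)(x_j - x_{j+1})\rceil$ for $j < s$, and then to check that $\sum_j \lambda_j \le n$. The total padding is at most $s \cdot s \le d^2$, which can be deposited in row $1$ or handled by the $-d^2$ slack. A secondary point is dealing with zero eigenvalues: $\lambda_j = 0$ for $j > s$ ensures that no $\log 0$ appears, consistent with the convention $0\log 0 = 0$.
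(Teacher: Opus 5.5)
Your core estimate matches the paper's: build $\lambda$ from the rounded-up differences $\lambda_j-\lambda_{j+1}=\lceil k(x_j-x_{j+1})\rceil$ and split $\lambda=kx+\delta$. The numerator is handled by $\mathrm{pm}_j\le 1$ together with $\delta_j-\delta_{j+1}\ge 0$, and the staircase constant $C(\rho)$ comes out of $0\le\delta_j-\delta_{j+1}\le 1$ in the denominator.

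\textbf{The gap} is in turning this into a measurement on exactly $n$ copies. You set $M_n=\Phi^{U_0}_\lambda$ with $\lambda\vdash_d n$. With $k=n-d^2$, however, the construction only gives $n-d^2\le|\lambda|\le n-d^2+s(s+1)/2$, so in general $|\lambda|<n$. Your remedy of depositing the leftover $p=n-|\lambda|$ boxes in row $1$ fails. It multiplies the ratio by $\bigl(\mathrm{pm}_1(U_0^\dagger\sigma U_0)/x_1\bigr)^p$, which can approach $x_1^{-p}$. That factor is absorbed neither by $C(\rho)$ nor by the exponential.

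A concrete counterexample: take $d=3$ and $\rho=I/3$, so $C(\rho)=3^{-6}$. Take $n\ge 12$ divisible by $3$, and $\sigma=U_0\,\mathrm{diag}(y_1,y_2,y_3)\,U_0^{\dagger}$. Your construction gives $\lambda=(m,m,m)$ with $m=(n-9)/3$ and $p=9$, so after depositing $\lambda=(m+9,m,m)$. Since $\Delta_\lambda(\mathrm{diag}(y))=\prod_j y_j^{\lambda_j}$ and $D^*(\rho\|\sigma)=-\log 3-\tfrac13\log(y_1y_2y_3)$, the ratio equals $3^{3m+9}(y_1y_2y_3)^m y_1^9$. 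The claimed bound is $3^{3m+6}(y_1y_2y_3)^m$. Their quotient is $27\,y_1^9$, which exceeds $1$ for, e.g., $y=(0.9,0.05,0.05)$, for every such $n$. You also cannot tune $k$ so that $|\lambda|=n$ exactly: here $|\lambda|=3\lceil k/3\rceil$ is always a multiple of $3$.

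\textbf{The fix} is to pad with the identity rather than with boxes. The operator $M_n=\Phi^{U_0}_\lambda\otimes I^{\otimes(n-|\lambda|)}$ is still a projector on $H^{\otimes n}$. Because $\Tr\sigma=1$, we have $\Tr(M_n\sigma^{\otimes n})=\Tr(\Phi^{U_0}_\lambda\sigma^{\otimes|\lambda|})$, so the likelihood ratio is exactly the one you already bounded. With this change your direct choice $k=n-d^2$ works, since $|\lambda|\le k+s(s+1)/2\le n$. The argument then becomes essentially the paper's proof, which takes $k$ maximal subject to $|\mu|\le n$ and deduces $k\ge n-d^2$.

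For $n\le d^2$, take $M_n=I$, which gives ratio $1\le C(\rho)^{-1}$. ``Any valid choice'' is not correct: for instance, a $\lambda$ with a nonzero row beyond row $s$ makes the denominator vanish.
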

In the main text, we use \cref{lem:likelihood_ratio_fidelity}, which is a version of \cref{lem:likelihood_ratio_keyl_upper_bound} stated in terms of the fidelity $F(\rho, \sigma)$ rather than the quantity $D^{*}(\rho\|\sigma)$. To justify this replacement, we recall that $D^{*}(\rho\|\sigma)$ coincides with the $\alpha \to 1^-$ limit of the \textit{reverse sandwiched relative entropy} \cite{audenaert2013alpha}
\begin{equation}
    \dsym_{\alpha,1-\alpha}(\rho\|\sigma)
    = \frac{1}{\alpha-1} \log \Tr\left[\left( \rho^{\frac{\alpha}{2(1-\alpha)}} \sigma \rho^{\frac{\alpha}{2(1-\alpha)}} \right)^{1-\alpha} \right].
\end{equation}
Using the monotonicity of $\dsym_{\alpha,1-\alpha}(\rho\|\sigma)$ in $\alpha \in (0,1)$~\cite[Lemma 24]{rubboli2024mixed}, one obtains the lower bound
\begin{equation}
    D^{*}(\rho\|\sigma) \geq \dsym_{\frac12,\frac12}(\rho\|\sigma) = - \log F(\rho, \sigma).
\end{equation}
Equivalently,
\begin{equation}
    \exp(- D^{*}(\rho\|\sigma)) \leq F(\rho, \sigma).
\end{equation}
Thus, the likelihood-ratio bound in \cref{lem:likelihood_ratio_keyl_upper_bound} can be relaxed to the fidelity-based bound stated in \cref{lem:likelihood_ratio_fidelity}.

We now prove \cref{lem:likelihood_ratio_keyl_upper_bound}, following the proof of \cite[Theorem 15, Appendix J]{fraser2022sufficient}, with minor adjustments.
\begin{proof}[Proof of \cref{lem:likelihood_ratio_keyl_upper_bound}]
    The sequence of measurements depends on the projection operators $\Phi_{\lambda}^{U}$, introduced in \cref{eq:coherent_projectors}.
     Considering the ratio of $\Tr(\Phi_{\lambda}^{U} \sigma^{\otimes n})$ and $\Tr(\Phi_{\lambda}^{U} \rho^{\otimes n})$,
    from \cref{eq:gen_pow_func_identity} we see that the $\dim(V_{\lambda})$ terms cancel, and we obtain:
    \begin{equation}
        \frac{\Tr(\Phi_{\lambda}^{U} \sigma^{\otimes n})}{\Tr(\Phi_{\lambda}^{U} \rho^{\otimes n})}
        = \frac{\Delta_{\lambda}(U^{\dagger}\sigma U)}{\Delta_{\lambda}(\mathrm{diag}(x))}.
    \end{equation}
    Our goal is to choose a Young diagram $\lambda$, whose normalized shape approximates the eigenvalues $x$ of $\rho$.
    For this, we fix a positive integer $k$ and consider the Young diagram $\mu$ defined such that
    \begin{equation}
        \label{eq:rounding_differences_up}
        \mu_j - \mu_{j+1} \coloneq \lceil k(x_j - x_{j+1}) \rceil.
    \end{equation}
    Letting $\delta_{j} \coloneqq \mu_j - kx_j$ denote the rounding error, we see that 
    \begin{equation}
        0 \leq \delta_{j} - \delta_{j+1}\leq 1,
    \end{equation}
    and 
    \begin{equation}
        k \leq \abs{\mu} \leq k + d^{2}.
    \end{equation}
    Since $\Delta_{a+b}(X) = \Delta_{a}(X) \Delta_{b}(X)$ for any $X$ and any vectors $a,b$, for $\mu = kx + \delta$ we get the identity
    \begin{equation}
        \frac{\Delta_{\mu}(U^{\dagger}\sigma U)}{\Delta_{\mu}(\mathrm{diag}(x))} 
        =\frac{\Delta_{k x}(U^{\dagger}\sigma U)}{\Delta_{kx}(\mathrm{diag}(x))}\frac{\Delta_{\delta}(U^{\dagger}\sigma U)}{\Delta_{\delta}(\mathrm{diag}(x))}.
    \end{equation}
    The ratio we are interested in, is
    \begin{equation}
        \frac{\Delta_{k x}(U^{\dagger}\sigma U)}{\Delta_{kx}(\mathrm{diag}(x))} = \exp ( - k  D^{*}(\rho\|\sigma)).
    \end{equation}
    The remaining term, depending on $\delta$, can be bounded as follows.
    First, we have $\mathrm{pm}_j(U^{\dagger}\sigma U) \leq 1$ and $\delta_{j} - \delta_{j+1} \geq 0$, yielding
    \begin{equation}
        \Delta_{\delta}(U^{\dagger}\sigma U) = \prod_{j=1}^{d}\mathrm{pm}_j(U^{\dagger}\sigma U)^{\delta_{j} - \delta_{j+1}} \leq 1.
    \end{equation}
    Analogously, $\mathrm{pm}_j(\mathrm{diag}(x)) \leq 1$ and $\delta_{j} - \delta_{j+1} \leq 1$, with $\delta_{j} - \delta_{j+1} = 0$ whenever $x_j - x_{j+1} = 0$.
    Let $s = \mathrm{rank}(\rho)$ such that $x_{s} > 0$ is the smallest non-zero eigenvalue of $\rho$.
    Then, we conclude that the following bound holds:
    \begin{align}
        \Delta_{\delta}(\mathrm{diag}(x)) 
        = \prod_{j=1}^{d}\mathrm{pm}_j(\mathrm{diag}(x))^{\delta_{j} - \delta_{j+1}}
        \geq \prod_{\substack{j=1\\x_j > 0}}^{d}\mathrm{pm}_j(\mathrm{diag}(x))
        = C(\rho).
    \end{align}
    Altogether, we have
    \begin{equation}
        \label{eq:mu_ratio_bound}
        \frac{\Tr(\Phi_{\mu}^{U} \sigma^{\otimes \abs{\mu}})}{\Tr(\Phi_{\mu}^{U} \rho^{\otimes \abs{\mu}})}
        \leq C(\rho)^{-1} \exp ( - k  D^{*}(\rho\|\sigma)).
    \end{equation}
    Next, for any value of $n$, let $k$ be the \textit{largest} positive integer such that $\mu$, defined by \cref{eq:rounding_differences_up}, has size $\abs{\mu} \leq n$.
    This ensures that $k + d^{2} \geq n$, since otherwise $k + d^{2} < n$ and taking $k'= k+1$ yields $\mu'$ with size $\abs{\mu'} \leq k' + d^2 = k + d^2 + 1 < n + 1$, or $\abs{\mu'} \leq n$.
    Finally, define
    \begin{equation}
        M_n \coloneqq \Phi_{\mu}^{U} \otimes I^{\otimes n - \abs{\mu}}.
    \end{equation}
    Since $k \geq n - d^{2}$, from \cref{eq:mu_ratio_bound} we conclude that
    \begin{equation}
        \frac{\Tr(M_n \sigma^{\otimes n})}{\Tr(M_n \rho^{\otimes n})} \leq C(\rho)^{-1} \exp ( - (n-d^2)  D^{*}(\rho\|\sigma)).
    \end{equation}
\end{proof}

\end{appendices}

\end{document}